\documentclass[runningheads]{llncs}
\usepackage[T1]{fontenc}
\usepackage{graphicx}
\usepackage[table,xcdraw]{xcolor}
\usepackage{amsmath,mathtools}    
\usepackage{amssymb}
\usepackage{tikz}
\usepackage[ruled,linesnumbered,noend]{algorithm2e}
\usepackage{enumerate}
\usepackage{hyperref}
\usepackage{url}
\usepackage[capitalize]{cleveref}
\usepackage{siunitx}
\usepackage[]{booktabs}

\usetikzlibrary{arrows}
\usetikzlibrary{shapes}
\usetikzlibrary{automata}
\usetikzlibrary{positioning,shadows,trees}

\usepackage{multirow}
\usepackage{thm-restate}
\usepackage[colorinlistoftodos]{todonotes}
\usepackage{listings}
\usepackage{subcaption}
\usepackage{lineno}
\usepackage{paralist}
\usepackage{makecell}

\usepackage[framemethod=TikZ]{mdframed}
\newcommand{\jal}[1]{\textcolor{pink}{\ifmmode \text{[JA: #1]}\else [JA: #1] \fi}}
\newcommand{\dd}[1]{\textcolor{cyan}{\ifmmode \text{[DD: #1]}\else [DD: #1] \fi}}
\newcommand{\yc}[1]{\textcolor{teal}{\ifmmode \text{[YC: #1]}\else [YC: #1] \fi}}

\newcommand{\hide}[1]{}

\newcommand{\complexityFont}[1]{\textsc{#1}}
\newcommand{\NP}[0]{\complexityFont{NP}\xspace}
\newcommand{\PSPACE}[0]{\complexityFont{PSpace}\xspace}

\newcommand{\positive}[0]{\mathbb{N}_{>0}\xspace}

\newcommand{\qfac}{QFAC\xspace}
\newcommand{\bqfac}{BQFAC\xspace}

\newcommand{\qfa}{QFA\xspace}

\newcommand{\fixlanclass}[2]{\mathcal{L}_{#1 #2}}
\newcommand{\lanclass}[1]{\mathcal{L}_{#1}}

\newcommand{\cmp}{\mathsf{Cmp}\xspace}

\newcommand{\val}{\textit{Pr}}

\newcommand{\abs}[1]{\left|#1\right|}
\newcommand{\norm}[1]{\abs{\abs{#1}}_2}

\newcommand{\floor}[1]{\left\lfloor#1\right\rfloor}

\DeclareMathOperator*{\argmax}{argmax}
\DeclareMathOperator*{\opt}{opt}

\newcommand{\ket}[1]{|#1\rangle}

\newcommand{\bra}[1]{\langle#1|}

\renewcommand{\Bar}[1]{\,\overline{#1}\,}
\renewcommand{\Tilde}[1]{\,\widetilde{#1}\,}

\newcommand{\mybraket}[1]{\langle#1\rangle}

\newcommand{\lqfa}{\lanclass{\textit{\qfa}}}
\newcommand{\lqfac}{\lanclass{\textit{\qfac}}}
\newcommand{\lbqfac}{\lanclass{\textit{\bqfac}}}

\begin{document}
\title{The Emptiness Problem for Quantum Finite Automata with Classical States}
 
%
%
\author{Jyun-Ao Lin\inst{1}\orcidID{0000-0001-8560-2147} \and
Patrick Totzke\inst{2}\orcidID{0000-0001-5274-8190} \and
Yun Chen Tsai\inst{3}\orcidID{0009-0003-7705-9609} \and
Di-De Yen\inst{2}\orcidID{0000-0003-0045-9594}}

\authorrunning{J.A. Lin, P. Totzke, Y.C. Tsai, and D.D. Yen}

\institute{National Taipei University of Technology, Taiwan \and
School of Computer Science and Informatics, University of Liverpool, UK \and
National Institute of Informatics and SOKENDAI, Japan}

%
%
%
\maketitle              
%

\begin{abstract}
Quantum Finite Automata with Classical states (QFACs) are nondeterministic finite automata over a finite alphabet of quantum operations. We study expressiveness of this model on finite words and the corresponding emptiness problem. We show that regular languages are incomparable with those definable by Quantum Finite Automata (QFAs) and that both are strictly subsumed by QFAC-definable languages. 

We show that the emptiness problem for a QFAC can be reduced to the emptiness of the language intersection of a QFA and a finite automaton. This intersection is known to be decidable for strict thresholds but undecidable for non-strict cases. Furthermore, we consider the problem for flat QFACs, a restriction where the underlying automata contain no nested loops, and relate it to the higher-dimensional orbit problem, a long-standing open challenge in dynamical systems.

Finally, we propose a sound and semi-complete witness searching procedure to verify the non-emptiness of one-loop QFACs, which are sufficiently expressive to represent some prominent quantum algorithms, such as Grover's search and quantum random walks.

\keywords{Quantum Finite Automata \and Verification \and Emptiness}
\end{abstract}

\section{Introduction}
Formal verification of critical hardware and software is well established, both at circuit \cite{Burch:94:IEEE,Clarke:86,Drechsler:04,Wolfgang:03,Kimura:20,Seligman:23} and program level \cite{Vardi:86,Sieber:13,Fetzer:88,Srivastava:10,Vardi:21}. For quantum systems, circuit verification \cite{Tsai:21,Amy:19,Burgholzer:21,Chen:26,Chen:23,Chen:26:TACAS,Hu:25,Maidl:25} has received far more attention than program-level verification \cite{Chareton:21,Ying:23,Lewis:24:ACM,Chen:25:TACAS}, where current methods 
require user-supplied proofs or annotations.

In the automated verification of classical programs, state machines or automata are frequently used to simulate program behavior \cite{Baier:08:text}. 
The added complication in quantum programs is that they combine classical control flow with operations on quantum states. Therefore, in order to replicate the automata-based approach to verification, we need a model that captures both aspects. 
Several such automata models have been proposed for quantum computation, including quantum finite automata (QFAs) \cite{Moore:00:TCS,Blondel:05:SIAM,Bertoni:03:DLT}, quantum finite automata with control languages (QFALs) \cite{Mereghetti:06:RAIRO}, and quantum finite automata with classical states (QFACs) \cite{Qiu:15:JCSS,Zheng:12:LA}. The latter two generalize the basic QFA model which, perhaps surprisingly, does not incorporate a finite-state control and instead corresponds to a single state with one or more loops that determine operations on the quantum state. 
In this work, we focus on QFACs, in which the classical states simulate and track program control flow while quantum operations act on the quantum state throughout the computation. A QFAC acts as a language acceptor over an alphabet of quantum operations; a word is accepted if there exists a run such that the final quantum state, derived by applying the operations dictated by the word starting in a fixed initial quantum state, lies within a target space.
This allows to study decision problems such as equivalence, minimization, and emptiness to argue about the represented quantum programs. 
Among these, the emptiness problem is the most fundamental and corresponds to the problem to determine whether a specific program configuration is reachable.

\begin{example}
Consider the quantum program in \cref{Fig:c_code}, represented by the QFAC in \cref{Fig:c_code_automaton}, which preserves the program flow as well as semantics directly: The single qubit state \lstinline{z} is initialized as $\ket{0}$. In the first for-loop, at each iteration, the quantum state is rotated by a given angle $\theta$; in the second for-loop, it is rotated by an angle of $-\theta$. The probability that the subroutine returns the value $1$ depends on the input values \lstinline{x} and \lstinline{y}. Specifically, the probability that the subroutine returns $1$, and correspondingly that
the automaton $\mathcal{A}$ accepts 
the word $a^{{x}} b c^{{y}}$,
is $|\cos(({x}-{y}) \cdot \theta)|$.

\begin{figure}[t!]
    \centering
    \begin{subfigure}[b]{0.47\textwidth}
        \centering
        \lstset { 
            language=C,
            backgroundcolor=\color{black!5},
            basicstyle=\footnotesize\ttfamily,
            numbers=left,
            stepnumber=1,
            showstringspaces=false,
            tabsize=2,
            breaklines=true,
            xleftmargin=2em,
            xrightmargin=1em,
            framexleftmargin=1.5em,
            framexrightmargin=0em,
            columns=fullflexible 
        }
\begin{lstlisting}[mathescape]
void eq(int x, int y){
   qubit z=$\ket{0}$;
   for(int i=0; i< x; i++)
      z = rotate(z);
   for(int i=0; i< y; i++)
      z = reverse_rotate(z);
   if(measure(z)==$\ket{0}$)
      return 1;
   else return 0;
}
\end{lstlisting}
\caption{A C-like code subroutine.}
\label{Fig:c_code}
\end{subfigure}
\hfill 
\begin{subfigure}[b]{0.5\textwidth}
        \centering
        \begin{tikzpicture} [draw=black,
            node distance = 3cm, 
            on grid, 
            auto,
            every initial by arrow/.style = {thick}]

        \node (q0) [state, initial, initial text = {}] at (0,0) {$q_0$};
        \node [state] (q1) at (3,0) {$q_1$};

        \path [-stealth, thick]
             (q0) edge[] node[above,sloped] {$b$}   (q1)
             (q0) edge [loop above]  node {$a$}()
             (q1) edge [loop above]  node {$c$}()
             ;
        \end{tikzpicture}
        \caption{The corresponding QFAC $\mathcal{A}$, where the symbols $a$ and $c$ refer to the quantum operations \texttt{rotate} and \texttt{reverse\_rotate}, respectively, while the symbol $b$ refers to the identity operation.}
        \label{Fig:c_code_automaton}
\end{subfigure}
\caption{An example of a QFAC simulating the behavior of a quantum program.}
\vspace{-1.5em}
\end{figure}
\end{example}


Analogous to classical finite automata (FAs), QFAs are categorized as either one-way \cite{Moore:00:TCS,Brodsky:02:SIAM} or two-way \cite{Kondacs:97:FOCS}. An FA is two-way if it can move its input head both backward and forward on the input tape; otherwise, it is one-way, and each input symbol can be accessed only once. For FAs, one-way and two-way models recognize exactly the class of regular languages \cite{Hopcroft:01:text}. In contrast to their classical counterparts, one-way QFAs are strictly less expressive than two-way QFAs \cite{Kondacs:97:FOCS}.

Information about the state of a quantum system is obtained through measurement, and unlike in classical systems, measurement affects the state of the system. Based on the number of measurements performed during computation, QFAs can be classified into measure-once QFAs (MO-QFAs) \cite{Moore:00:TCS} and measure-many QFAs (MM-QFAs) \cite{Brodsky:02:SIAM,Lin:12:JCSS}. In an MO-QFA, the quantum state is measured only at the end of the computation, whereas in an MM-QFA, measurements may be performed throughout the computation. In terms of expressive power, MM-QFAs are more powerful than MO-QFAs \cite{Brodsky:02:SIAM}. These one-way/two-way and measure-once/measure-many distinctions apply to the various extensions of QFAs as well.

The semantics of QFAs and their extensions vary across the literature, including cut-point semantics \cite{Blondel:05:SIAM,Bertoni:03:DLT,Mereghetti:06:RAIRO} and bounded-error semantics \cite{Brodsky:02:SIAM,Qiu:08:ICIC,Qiu:09:CoRR,Yakaryilmaz:10,Zheng:12:LA}. Regarding expressiveness, QFACs under bounded-error semantics recognize exactly the class of regular languages \cite{Qiu:09:CoRR}. In this paper, we focus on cut-point semantics, where QFAC languages are defined based on a threshold $\lambda \in [0,1]$ and a comparison operator $\bowtie \in \{>, <, \geq, \leq\}$. For instance, if $\lambda = 1/3$ and $\bowtie$ is $\geq$, the corresponding language is the set of words with an acceptance probability of at least $1/3$. We prove that QFAs and FAs are incomparable, and both are strictly less powerful than QFACs in terms of expressiveness, referring to \cref{fig:expressiveness}.

\begin{figure}[t]
    \centering
    \scalebox{0.9}{
    \begin{tikzpicture} [set/.style = {draw,opacity = 0.4,text opacity = 1}]
 
    \node[fill=cyan,ellipse, minimum width=2.5cm, minimum height=1.5cm] at (0,0) [set] {};
    \node[fill=purple,ellipse, minimum width=2.5cm, minimum height=1.5cm] at (2,0) [set] {};
    \node[fill=gray,minimum width=6cm,minimum height=2.5cm] at (1,0.2) [set] {};
    \node at (1,1.0) {$\lqfac$};
    \node[purple] at (2.2,0) {Reg};
    \node[blue] at (-0.2,0.2) {$\lbqfac$};
    \node[blue] at (-0.2,-0.2) {$= \lqfa$}; 
    \end{tikzpicture}
    }
    \caption{Relative expressiveness of quantum automata models.}
    \label{fig:expressiveness}
    \vspace{-2em}
\end{figure}

We study the emptiness problem for QFACs, which asks whether the language recognized by a QFAC is empty for a given threshold $\lambda$ and comparison operator $\bowtie$. It is known that this problem is decidable for QFAs with a strict comparison operator $\in \{>, <\}$, but undecidable with a non-strict comparison operator $\in \{\geq, \leq\}$ \cite{Blondel:05:SIAM}. We show that the same result holds for QFACs by reducing the problem to the language intersection of a QFA and an FA, which is known to be decidable in the strict case \cite{Bertoni:13:DLT}. Additionally, we study the emptiness problem for \emph{flat} QFACs, which are QFACs without nested loops, and relate it to the \emph{higher-dimensional orbit problem}, a long-standing open problem \cite{Kannan:80:STOC}. Specifically, we provide a reduction from the emptiness problem for one-loop QFACs (which are a subclass of flat QFACs) to the higher-dimensional orbit problem. Furthermore, we establish complexity lower bounds for the emptiness problem of QFACs and flat QFACs. This is achieved by reducing the DFA universality problem (PSpace-complete) and the Hamiltonian path problem (NP-complete) to these problems, respectively. The results are summarized in 
\Cref{tb:summation}.


Finally, we develop a tool for searching for witnesses to non-emptiness for one-loop QFACs. Even though the problem is decidable in the strict case, the algorithm proposed in the proof and in~\cite{Blondel:05:SIAM,Bertoni:13:DLT} requires iteratively solving a quantifier elimination problem over the reals, which does not scale in practice. Another common line of approach~\cite{Akshay:CAV:23,Tsai:CONCUR:25} is to design suitable templates for witnesses and solve them using SMT solvers such as Z3~\cite{Z3}, which also does not scale when polynomial templates and objectives are involved due to blow-ups in the number of variables. Alternatively, we propose a witness generation procedure based on \emph{nonlinear integer programming (NIP)}. Our witness does not depend on any predefined template. We systematically generate them based on the Taylor approximation of trigonometric functions and verify them using an NIP solver Mindtpy~\cite{mindtpy}. We formally prove that our witness is not only sound but also \emph{semi-complete} for the strict operator. We demonstrate through experiments that our tool can successfully verify Grover's search for up to 13 qubits and quantum random walks for up to 4 qubits.

\begin{table}[t]
\begin{tabular}{|>{\columncolor{cyan!10}}c|c|c|c|c|}
\hline
\cellcolor{gray!15}Emptiness & \cellcolor{gray!15}general & \cellcolor{gray!15}one-loop/unary QFA & \cellcolor{gray!15}flat & \cellcolor{gray!15} flat and finite group \\ \hline
strict                                 & \PSPACE-hard                            & \NP-hard                                             & \NP-hard                         &            \NP-hard          \\ \hline
non-strict                      & undecidable                     &  \makecell{decidable while $\text{rank}(P)$\\ $= 0,1,2, n-2, n-1$, $n$}                                       & ?                            & decidable             \\ \hline
\end{tabular}
\caption{Summary of results}
\label{tb:summation}
\vspace{-2em}
\end{table}

\section{Preliminaries}

We use $\mathbb{R}$ (resp. $\mathbb{N}$, $\mathbb{C}$) to denote the set of real (resp. natural, and complex) numbers, and we use $\positive$ to denote the set of positive integers. For every $s,t \in \mathbb{R}$ with $s < t$, $[s,t]$ denotes the set $\{r \in \mathbb{R} \mid s \leq r \leq t\}$.

In the Hilbert space, a \emph{quantum operation} and a \emph{quantum state} are represented by a \emph{unitary matrix} and a \emph{unit vector}, respectively, while a \emph{measurement} is associated with the application of a \emph{projection matrix}. Specifically, let $n \in \mathbb{N}_{>0}$ and let $M$ be an $n \times n$ matrix over $\mathbb{C}$. $M$ is \emph{unitary} if $M^\dagger = M^{-1}$, where $M^\dagger$ is the conjugate transpose of $M$. If all entries of $M$ are real, then $M$ is an \emph{orthogonal} matrix. We say that $M$ is a \emph{projection matrix} if $M^2 = M$. It is known that $I - M$ is a projection matrix whenever $M$ is, where $I$ denotes the identity matrix.
In this paper, we use bra-ket notation for vectors. A vector $\ket{\phi}$ is represented by a column vector, while $\bra{\phi}$ denotes its conjugate transpose and is therefore represented by a row vector. In $\mathbb{C}^n$, the vectors $\bra{1},\dots,\bra{n}$ correspond to the standard basis row vectors $(1,0,\dots,0),\dots,(0,\dots,0,1)$, respectively.
For every unitary (resp. projection) matrix $M$ and any pair of vectors $\ket{\psi}, \ket{\phi} \in \mathbb{C}^n$, one can always construct an orthogonal (resp. projection) matrix $M'$ and vectors $\ket{\psi'}, \ket{\phi'} \in \mathbb{R}^{2n}$ such that $\mybraket{\phi \mid M \mid \psi} = \mybraket{\phi' \mid M' \mid \psi'}$. Following \cite{Blondel:05:SIAM}, we assume that all matrices and vectors are over $\mathbb{R}$ for the rest of this paper.

A vector $\ket{\psi} \in \mathcal{H}^n$ is a \emph{unit vector} if $\mybraket{\psi \mid \psi} = 1$; the norm of a vector $\ket{\psi}$ is denoted as $\norm{\ket{\psi}} = \sqrt{\mybraket{\psi \mid \psi}}$. In this paper, we restrict our attention to projection matrices onto the subspace spanned by a subset of the standard basis; that is, $M = \sum_{i \in S} \ket{i}\bra{i}$ for some $S \subseteq \{1, \dots, n\}$.

\emph{Quantum finite automata with classical states} (\qfac{s}), introduced in \cite{Qiu:15:JCSS}, are an extension of deterministic finite automata (DFAs). In \cite{Qiu:15:JCSS}, during a computation of a \qfac on an input word, a sequence of quantum operations is applied to the system's ``quantum state.'' Each quantum operation is uniquely determined by the input symbol being read and the current ``classical state'' of the underlying DFA. At the end of the computation, a quantum measurement is performed based on the last reached classical state. 
In this paper, we consider both deterministic and nondeterministic \qfac{s} (where the underlying automata are DFAs or NFAs, respectively), while the quantum operations applied to the quantum states depend solely on the input symbols, rather than the classical states of the underlying automata.
The formal definition is as follows:

\begin{definition}
Given $n \in \positive$, an \emph{$n$-dimensional quantum finite automaton with classical states} is a tuple $\mathcal{A} = (\Sigma, Q, q_0, \ket{\psi_{0}}, (U_{a})_{a\in\Sigma}, (P_q)_{q\in Q}, \Delta)$,
where:
\begin{compactitem}
    \item $\Sigma$ is a finite alphabet.
    \item $Q$ is a finite set of classical states (or simply states).
    \item $q_0 \in Q$ is the initial state.
    \item $\ket{\psi_0}$ is the initial quantum state, which is a unit vector in $\mathcal{H}^n$.
    \item For each $a \in \Sigma$, $U_a$ is an $n \times n$ unitary matrix acting on $\mathcal{H}^n$.
    \item For each $q \in Q$, $P_q$ is an $n \times n$ projection matrix acting on $\mathcal{H}^n$.
    \item $\Delta \subseteq Q \times \Sigma \times Q$ is the transition relation.
\end{compactitem}
\end{definition}
A \qfac becomes a \emph{quantum finite automaton} (\qfa)~\cite{Moore:00:TCS} when it has only one classical state, that is, $|Q|=1$. 
It is \emph{deterministic} if for all $(p,a,q), (p',a',q') \in \Delta$, we have $q'=q$ whenever $(p',a')=(p,a)$. It is \emph{codeterministic} if, for every pair of transitions $(p,a,q)$ and $(p',a',q')$, we have $p=p'$ whenever $(a,q)=(a',q')$. A \qfac is said to be a \emph{bideterministic} \qfac (\bqfac) if it is both deterministic and codeterministic. It is \emph{complete} if for every state $p \in Q$ and symbol $a \in \Sigma$, there exists a state $q \in Q$ such that $(p,a,q) \in \Delta$. 


A strongly connected directed graph $G$ is \emph{flat} if it is either a simple loop or a single vertex. For a directed graph $G$, we say that $G$ is \emph{flat} if all of its strongly connected components (SCCs) are flat. A flat directed graph $G$ is a \emph{chain} if each SCC of $G$ has at most one predecessor and successor. We say that a \qfac $\mathcal{A}$ is \emph{flat} (resp. \emph{chain}) if its underlying graph is flat (resp. \emph{chain}).

A \emph{configuration} of $\mathcal{A}$ is a pair $(q,\ket{\psi}) \in Q \times \mathcal{H}^n$. The configuration $c_0=(q_0,\ket{\psi_0})$ is called the \emph{initial configuration}.  
Given a word $w=a_1\dots a_m$, a \emph{run} $\pi$ of $\mathcal{A}$ on $w$ from $p_0$ is a sequence of states $p_0\dots p_m$ such that $(p_i,a_{i+1},p_{i+1}) \in \Delta$ for each $i=0,\dots,m-1$.
The \emph{accepting probability} of $w$ along $\pi$ from quantum state $\ket{\psi}$ is:
\begin{equation}
    \val_\mathcal{A}(w,\pi,\ket{\psi}) = \norm{P_{p_m}\cdot U_{a_m}\cdots U_{a_1}\ket{\psi}}.
\end{equation}
The accepting probability of a word $w$ is the maximum value of $\val_\mathcal{A}(w,\pi,\ket{\psi_0})$ over all runs $\pi$ from $c_0$. With a slight abuse of notation, we denote this value by $\val_\mathcal{A}(w)$. The subscript is omitted when it is clear from the context.

\begin{figure}[t]\centering
\scalebox{0.8}{
\begin{tikzpicture} [draw=black,
    node distance = 3cm, 
    on grid, 
    auto,
    every initial by arrow/.style = {thick}]

\node (q0) [state, 
    initial,  
    initial text = {}] {$q_0$};
\node (q1) [state,
    right = of q0] {$q_1$};

\path [-stealth, thick]
     (q0) edge[bend left] node {$a$}   (q1)
     (q1) edge[bend left] node {$a$}   (q0)
     (q0) edge [loop above]  node {$b$}()
     (q1) edge [loop above]  node {$b$}();
\end{tikzpicture}
}
\caption{\qfac $\mathcal{A}_{\text{odd}}$.}
\label{Fig:example_odd}
\end{figure}

\begin{example}\label{eg:A_odd}
Consider the 2-dimensional \qfac $\mathcal{A}_{\text{odd}}$ over $\{a,b\}$ with the initial quantum state $\ket{\psi_0} = (1,0)^\top$ as shown in \cref{Fig:example_odd}. Here, $q_0$ is the initial state, the symbol $a$ corresponds to the matrix that rotates vectors in the $xy$-plane by an angle of $1$ radian, and $b$ corresponds to the identity matrix. 

Since $\mathcal{A}_{\text{odd}}$ is deterministic, for every word $w$, there is a unique run $\pi$ on $w$ starting from $q_0$. Starting from the initial configuration $c_0$ along $\pi$, $\mathcal{A}_{\text{odd}}$ ends in a configuration $(q, \ket{\psi})$, where $q = q_1$ if and only if the number of occurrences of the symbol $a$ in $w$ is odd; otherwise, $q = q_0$. 
The resulting quantum state is $\ket{\psi} = (\cos(n), \sin(n))^\top$, where $n$ is the number of $a$'s in $w$.
By setting the projection matrix $P_{q_0}$ to be the null matrix and the projection matrix $P_{q_1}$ to be the matrix projecting onto the $x$-axis, we have:
$\val_{\mathcal{A}_{\text{odd}}}(w) = \cos^2(n)$ if the number of  $a$'s in $w$ is odd, and 0 otherwise.
\qed
\end{example}

Analogous to probabilistic finite automata, \qfac{s} possess both \emph{cut-point semantics} and \emph{bounded-error semantics}. In this paper, we focus on the former. For instance, given $\lambda \in [0,1]$, $L_{> \lambda}(\mathcal{A})$ is the language consisting of all words $w$ such that $\val(w) > \lambda$. The languages $L_{< \lambda}(\mathcal{A})$, $L_{\geq \lambda}(\mathcal{A})$, $L_{\leq \lambda}(\mathcal{A})$, and $L_{=\lambda}(\mathcal{A})$ are defined analogously. 
For ease of exposition, we use $\cmp$ to denote the set of comparison operators $\{>, <, \geq, \leq\}$. Let ${\bowtie} \in \cmp$. We use $\Bar{\bowtie}$ and $\Tilde{\bowtie}$ to denote the operators obtained by reversing the comparison and by reversing only the strictness of the inequality, respectively. For example, if $\bowtie$ is $\geq$, then $\Bar{\bowtie}$ is $<$ and $\Tilde{\bowtie}$ is $\leq$.
Recall that an NFA can always be turned into an equivalent one that is complete by delegating non-existing edges to a rejecting sink state. A similar technique applies to \qfac{s} but requires modifications for some combinations of $\bowtie$ and $\lambda$, as detailed in \Cref{app:complete}. Unless otherwise stated, all \qfac{s} in the remainder of this paper are assumed to be complete.


The \emph{emptiness problem} for a given $\mathcal{A}$, ${\bowtie} \in \cmp$, and $\lambda \in [0,1]$ is to determine whether $L_{\bowtie \lambda}(\mathcal{A}) = \emptyset$, while the \emph{universality problem} is to determine whether $L_{\bowtie \lambda}(\mathcal{A}) = \Sigma^*$. By definition, we have $L_{\bowtie \lambda}(\mathcal{A}) = \emptyset$ if and only if $L_{\bar{\bowtie} \lambda}(\mathcal{A}) = \Sigma^*$.

We use $\fixlanclass{\bowtie}{\lambda}$ to denote the class of languages $L_{\bowtie \lambda}(\mathcal{A})$ for all \qfac{s} $\mathcal{A}$, and let $\lanclass{\bowtie} = \bigcup_{\lambda \in [0,1]} \fixlanclass{\bowtie}{\lambda}$ and $\lqfac = \bigcup_{\bowtie \in\cmp} \lanclass{\bowtie}$. The class $\lqfac$ becomes $\lqfa$ (resp., $\lbqfac$)  when restricted to \qfa{s} (resp., \bqfac{s}). Due to the completeness assumption, there are four particular classes of language that are always trivial, namely $\fixlanclass{>}{1}=\fixlanclass{<}{0}=\{\emptyset\}$ and $\fixlanclass{\geq}{0}=\fixlanclass{\leq}{1}=\{\Sigma^{\ast}\}$. A pair $(\bowtie,\lambda)\in \cmp\times[0,1]$ is called \emph{trivial} if it corresponds to one of these four classes.



\hide{
\section{closure properties}

\begin{lemma}\label{lem:close-under-union}
$\lanclass{\textit{\qfac}}$ is closed under union.
\end{lemma}
\begin{proof}
Let $\mathcal{A}_i = (\Sigma, Q_i, q_{0i}, \psi_{0i}, (U^{(i)}_{a})_{a\in\Sigma}, (P^{(i)}_q)_{q\in Q_i}, \Delta_i)$, $i=1,2$, be two \qfac{s}. We define $\mathcal{A} =(\Sigma, Q, q_0, \psi_0, (U_a)_{a \in \Sigma}, (P_{q,p})_{(q,p) \in Q}, \Delta)$ via the standard product construction, that is, 
\begin{itemize}
    \item $Q = Q_1 \times Q_2$;
    \item $q_0 = (q_{01}, q_{02})$;
    \item $\psi_0 = \psi_{01} \otimes \psi_{02}$;
    \item $U_a = U^{(1)}_a \otimes U^{(2)}_a$ for each $a \in \Sigma$;
    \item $P_{(q,p)}= P^{(1)}_q \otimes P^{(2)}_p$ for each $(q,p) \in Q$; and $((q,p),a,(q'p')) \in \Delta$ whenever $(q,a,q') \in \Delta_1$ and $(p,a,p') \in \Delta_2$.
\end{itemize}
not possible
\end{proof}

\jal{not closed even under union}
}

\section{Expressiveness}\label{sec:expressiveness}

In this section, we study the expressive power of \qfac{s} and \qfa{s}. We show that the class of regular languages is a proper subclass of the languages recognized by \qfac{s} with both strict and non-strict thresholds, and that $\lqfa$ are incomparable with regular languages. See \Cref{fig:expressiveness}.

We begin with a basic closure property of $\lqfac$ over complementation. Let ${\cal A}$ be a \qfac. Recall that a word $w$ is in the language $L_{\bowtie\lambda}({\cal A})$ if and only if $\val_{\cal A}(w)\bowtie\lambda$, which also implies $val_{\cal A}(w)\Bar{\bowtie}\lambda$ does not hold, hence $w\not\in L_{\Bar{\bowtie}\lambda}({\cal A})$. This gives us the following lemma.

\begin{restatable}{lemma}{lemlanguagesymmetric}\label{lem:language_symmetric}
Let $\mathcal{A}$ be a \qfac over $\Sigma$. For every $\lambda\in [0,1]$ and ${\bowtie}\in\cmp$, $L_{\bowtie \lambda}(\mathcal{A}) = \Sigma^* \setminus L_{\Bar{\bowtie}\lambda}(\mathcal{A})$.
\end{restatable}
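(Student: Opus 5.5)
The plan is to prove the set equality by double inclusion, working pointwise on words. The key observation is that $\val_{\mathcal{A}}(w)$ is a single well-defined real number for every $w\in\Sigma^*$: it is the maximum over the finitely many runs of $\mathcal{A}$ on $w$ from $q_0$. This set of runs is nonempty because we assume $\mathcal{A}$ is complete. So membership of $w$ in either language is decided by comparing this one number with $\lambda$.

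First, I check that for every $\bowtie\in\cmp$ the pair $(\bowtie,\Bar{\bowtie})$ is one of $(>,\leq)$, $(<,\geq)$, $(\geq,<)$, $(\leq,>)$. Then I use trichotomy of the reals. For any $x,\lambda\in\mathbb{R}$, exactly one of $x\bowtie\lambda$ and $x\Bar{\bowtie}\lambda$ holds. For instance, $x>\lambda$ and $x\leq\lambda$ are mutually exclusive and jointly exhaustive, since exactly one of $x<\lambda$, $x=\lambda$, $x>\lambda$ is true. The other three cases follow by symmetry, because $\Bar{\cdot}$ is an involution.

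Now I instantiate $x=\val_{\mathcal{A}}(w)$. If $w\in L_{\bowtie\lambda}(\mathcal{A})$, then $\val(w)\bowtie\lambda$, so $\val(w)\Bar{\bowtie}\lambda$ fails, and hence $w\in\Sigma^*\setminus L_{\Bar{\bowtie}\lambda}(\mathcal{A})$. Conversely, if $w\notin L_{\Bar{\bowtie}\lambda}(\mathcal{A})$, then $\val(w)\Bar{\bowtie}\lambda$ fails, so by exhaustiveness $\val(w)\bowtie\lambda$ holds. This gives both inclusions.

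There is no real obstacle. The only point needing care is that $\val(w)$ is well defined and the same quantity appears in both language definitions. Completeness rules out words with no run, for which the maximum would be taken over an empty set. Even without completeness, reading an empty maximum as $0$ would preserve the argument.
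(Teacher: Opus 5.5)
Your proof is correct and takes essentially the same route as the paper: membership of $w$ in either language depends only on comparing the single number $\val_{\mathcal{A}}(w)$ with $\lambda$, and $\val_{\mathcal{A}}(w)\bowtie\lambda$ holds exactly when $\val_{\mathcal{A}}(w)\Bar{\bowtie}\lambda$ fails. You are slightly more careful than the paper, which spells out only one inclusion and does not comment on why $\val_{\mathcal{A}}(w)$ is well defined.
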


A language is regular iff it is recognized by a DFA. Suppose $\mathcal{A}$ is a DFA over $\Sigma$. We can construct a 2-dimensional \qfac $\mathcal{B}$ such that the underlying automaton of $\mathcal{B}$ is $\mathcal{A}$ and every symbol $a \in \Sigma$ corresponds to the identity matrix. Excluding trivial cases, for any ${\bowtie} \in \cmp$ and $\lambda \in [0,1]$, by appropriately choosing the initial quantum state $\ket{\psi_0}$ and the projection matrices $P_q$, we can ensure that $\val_\mathcal{B}(w) \bowtie \lambda$ iff $w$ is accepted by $\mathcal{A}$. For instance, if $\bowtie$ is $\geq$, we can set $\ket{\psi_0}=(1,0)^\top$ and let $P_q$ project onto the first component if $q$ is an accepting state in $\mathcal{A}$, and be the null matrix otherwise. Note that for cases where ${\bowtie} \in \{<, \leq\}$, the choice of the initial quantum state depends on $\lambda$. Specifically, we set $\ket{\psi_0}=(\sqrt{\lambda/2},\sqrt{(2-\lambda)/2})^\top$. Therefore, we have:

\begin{restatable}{lemma}{lemQFACsubRegular}\label{lem:QFAC_subsumbs_regular}
    For every non-trivial pair of $(\bowtie,\lambda) \in \cmp\times [0,1]$, $\fixlanclass{\bowtie}{\lambda}$ subsumes the class of regular languages.
\end{restatable}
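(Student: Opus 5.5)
Fix a non-trivial pair $(\bowtie,\lambda)$ and a regular language $L\subseteq\Sigma^*$, recognized by a complete DFA $\mathcal{A}=(\Sigma,Q,q_0,\delta,F)$. We build a 2-dimensional \qfac $\mathcal{B}$ whose underlying automaton is $\mathcal{A}$ and with $U_a=I$ for every $a\in\Sigma$. Since $\mathcal{B}$ is deterministic and all unitaries are the identity, the unique run on $w$ ends in $q=\delta(q_0,w)$, and the quantum state is still $\ket{\psi_0}$. Hence
\[
\val_\mathcal{B}(w)=\norm{P_{\delta(q_0,w)}\ket{\psi_0}}.
\]
It therefore suffices to choose $\ket{\psi_0}$ and two projections $P_{\mathrm{acc}}$ (for $q\in F$) and $P_{\mathrm{rej}}$ (for $q\notin F$), drawn from $0$, $\ket{1}\bra{1}$, $\ket{2}\bra{2}$ and $I$, such that
\[
\norm{P_{\mathrm{acc}}\ket{\psi_0}}\bowtie\lambda
\quad\text{and}\quad
\norm{P_{\mathrm{rej}}\ket{\psi_0}}\,\Bar{\bowtie}\,\lambda .
\]
Then $L_{\bowtie\lambda}(\mathcal{B})=L$. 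The value is the same whether $\val$ is read as the norm or its square, since all the values used are $0$, $1$, or a chosen intermediate number.

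We split into cases on $\bowtie$, using that non-triviality excludes $(>,1)$, $(<,0)$, $(\geq,0)$ and $(\leq,1)$.

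\textbf{Case ${\bowtie}\in\{>,\geq\}$.} Take $\ket{\psi_0}=(1,0)^\top$, $P_{\mathrm{acc}}=\ket{1}\bra{1}$ and $P_{\mathrm{rej}}=0$, so accepting words get value $1$ and rejecting words get value $0$.
\begin{itemize}
\item For $>$ we need $1>\lambda$ and $0\le\lambda$; both hold because $\lambda\neq1$.
\item For $\geq$ we need $1\ge\lambda$ and $0<\lambda$; both hold because $\lambda\neq0$.
\end{itemize}

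\textbf{Case ${\bowtie}\in\{<,\leq\}$.} Symmetrically, set $P_{\mathrm{acc}}=0$, which gives value $0$, and make $P_{\mathrm{rej}}$ give value $1$.
\begin{itemize}
\item For $<$ we need $0<\lambda$ and $1\ge\lambda$; both hold because $\lambda\neq0$.
\item For $\leq$ we need $0\le\lambda$ and $1>\lambda$; both hold because $\lambda\neq1$.
\end{itemize}
This works with the same $\ket{\psi_0}=(1,0)^\top$ and $P_{\mathrm{rej}}=\ket{1}\bra{1}$. Alternatively, one can follow the paper's remark and use $\ket{\psi_0}=(\sqrt{\lambda/2},\sqrt{(2-\lambda)/2})^\top$, projecting onto the first coordinate for accepting states. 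The accepting value is then $\lambda/2$, strictly below $\lambda$ when $\lambda>0$. The rejecting value is at least $(2-\lambda)/2$ via the second coordinate, which is at least $\lambda$ when $\lambda\le1$; equality at $\lambda=1$ is fine for $<$ but must be avoided for $\leq$, where $\lambda<1$ makes it strict. The main care point is checking these boundary inequalities, and the simpler $0/1$ construction sidesteps them.

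Finally, in each case we verify that for every $w$, $\val_\mathcal{B}(w)\bowtie\lambda$ holds if and only if $\delta(q_0,w)\in F$. This gives $L=L_{\bowtie\lambda}(\mathcal{B})\in\fixlanclass{\bowtie}{\lambda}$. Completeness of $\mathcal{B}$ is inherited from the complete DFA, so the paper's standing completeness assumption is respected.
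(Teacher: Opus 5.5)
Your proof is correct and follows the paper's construction: a 2-dimensional QFAC over the DFA with identity unitaries, with $\ket{\psi_0}$ and $P_q$ chosen case by case. For ${<},{\leq}$ your swap (null projection on accepting states, $\ket{1}\bra{1}$ on rejecting ones, $\ket{\psi_0}=(1,0)^\top$) removes the paper's $\lambda$-dependent initial state, and because only the values $0$ and $1$ occur it does not matter whether $\val$ is read as a norm or a squared norm.

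Your side remark on the paper's variant is wrong, though your argument does not rely on it. The bound $(2-\lambda)/2\ge\lambda$ holds only for $\lambda\le 2/3$; at $\lambda=1$ the left side is $1/2$, so there is no ``equality''. That variant therefore needs the identity projection on rejecting states.
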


It is a natural question to ask whether the inclusion in \Cref{lem:QFAC_subsumbs_regular} is strict and whether this strictness depends on the choice of $\bowtie$ and $\lambda$. This question was partially answered in~\cite{Brodsky:02:SIAM} by showing that there exists a 2-dimensional \qfa $\mathcal{A}$ over $\Sigma=\{a,b\}$ such that $L_{>0}({\cal A})$ is exactly the non-regular language
$L_{\neq} = \{ w \in \Sigma^* : |w|_a \neq |w|_b \}$. Specifically, consider the rotation matrix $M_a$ corresponding to the symbol $a$ in \Cref{eg:A_odd}, and let $M_b$ be its inverse $M_a^{-1}$. Since $\cos(1)$ and $\sin(1)$ are transcendental numbers \cite{Baker:75:text}, it follows that $M_{\sigma_1} M_{\sigma_2} \cdots M_{\sigma_n} = I$ (where $w = \sigma_1 \cdots \sigma_n \in \Sigma^*$) if and only if the number of $a$'s is equal to the number of $b$'s in $w$. By setting the initial quantum state to $\ket{\psi_0}=(1,0)^\top$ and letting $P$ be the projection onto the second component, we obtain $\val(w) > 0$ if and only if $w \in L_{\neq}$. Recall that a \qfa has only one classical state and is deterministic. Similar arguments can be applied to construct automata for other values of $\lambda$ and also the case of $<$, that accept non-regular languages. For the non-strict case, we can use~\cref{lem:language_symmetric} to show that they accept $\Sigma^{\ast}\setminus L_{\neq}$, which is also non-regular. Hence, we have:

\begin{restatable}{lemma}{lemQFAnonRegular}\label{lem:QFA_nonRegular}
    For every non-trivial pair of $(\bowtie, \lambda) \in \cmp \times [0,1]$, there exists a \qfa $\mathcal{A}$ such that $L_{\bowtie \lambda}(\mathcal{A})$ is not regular.
\end{restatable}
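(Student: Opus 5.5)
The plan is to build, for each non-trivial pair $(\bowtie,\lambda)$, a \qfa whose acceptance probability depends only on $d(w)=|w|_a-|w|_b$. It should equal one fixed value when $d(w)=0$ and a different value for every $d(w)\neq 0$. The cut $\bowtie\lambda$ then separates $L_{=}=\{w : |w|_a=|w|_b\}$ from $L_{\neq}$. Both languages are non-regular, by the standard pumping argument on $a^kb^k$, and regular languages are closed under complement. So it suffices to get, for every non-trivial pair, a language equal to either $L_{=}$ or $L_{\neq}$.

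The construction uses the rotation $M_a$ by one radian and $M_b=M_a^{-1}$, as in \Cref{eg:A_odd} and the discussion above. After reading $w$ the quantum state is $(\cos d,\sin d)^\top$ rotated from the initial state. Since $1/\pi$ is irrational, $\sin d=0$ iff $d=0$. We cannot only project onto the second component, because $\sin^2 d$ takes values arbitrarily close to $0$, and a general $\lambda$ would cut through them. Instead we embed into three dimensions to obtain a flexible gap. Let the initial state be $\ket{\psi_0}=(\alpha,\beta,0)^\top$ with $\alpha^2+\beta^2=1$. Let the symbols rotate only the plane spanned by the first and third coordinates by $\pm1$ radian, and fix the second coordinate. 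Let $P$ project onto the second and third coordinates. Then
\[
\val(w)=\beta^2+\alpha^2\sin^2 d(w).
\]
This equals $\beta^2$ exactly when $d(w)=0$, and is strictly larger than $\beta^2$ otherwise; its supremum $\beta^2+\alpha^2=1$ is approached but, by the transcendence of $\cos 1$, never attained.

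The case analysis is as follows.
\begin{itemize}
\item For $(>,\lambda)$ with $\lambda<1$, set $\beta^2=\lambda$. Then $L_{>\lambda}=L_{\neq}$.
\item For $(\leq,\lambda)$ with $\lambda<1$, \Cref{lem:language_symmetric} turns the previous case into $L_{=}$.
\item For $(\geq,\lambda)$ with $\lambda>0$, it suffices to produce $L_{<\lambda}$ non-regular, again by \Cref{lem:language_symmetric}.
\item For $(<,\lambda)$ with $\lambda>0$, swap roles by projecting instead onto the first coordinate, giving $\val(w)=\alpha^2\cos^2 d(w)$. This equals $\alpha^2$ iff $d=0$ and is strictly smaller otherwise. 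Setting $\alpha^2=\lambda$ gives $L_{<\lambda}=L_{\neq}$.
\end{itemize}
The non-trivial pairs are exactly those covered: $\lambda<1$ for $>$ and $\leq$, and $\lambda>0$ for $<$ and $\geq$. The excluded pairs are precisely the trivial ones.

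The main obstacle is the exact-value claim, that $\sin d\neq 0$ and $\cos^2 d\neq 1$ for every integer $d\neq0$. This follows from $\pi$ being irrational, since $\sin d=0$ would force $d\in\pi\mathbb{Z}$. The rest of the argument, including the non-regularity of $L_{=}$ via pumping, is routine. One should also note that real orthogonal matrices are allowed by the paper's conventions, so the three-dimensional rotations are legitimate \qfa operations.
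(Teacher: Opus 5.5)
Your proposal is correct and is essentially the paper's proof. The paper also reduces $\geq,\leq$ to $>,<$ by \Cref{lem:language_symmetric}. It handles $(>,\lambda)$ with the same three-dimensional padding of the rotation automaton: initial state $(\sqrt{1-\lambda}\,\ket{\psi_0},\sqrt{\lambda})$ and projection $P\oplus 1$, giving $\val(w)=\lambda+(1-\lambda)\sin^2 d(w)$. It obtains the $<$ cases by replacing the projection with $I-P$, which is exactly your projection onto the first coordinate; your appeal to the irrationality of $\pi$ is a slightly more elementary substitute for the paper's citation of the transcendence of $\cos 1$ and $\sin 1$.
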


From \Cref{lem:QFAC_subsumbs_regular} and \Cref{lem:QFA_nonRegular}, we immediately obtain:

\begin{theorem}\label{thm:qfacs_subsumes_regular}
The class of regular languages is a proper subclass of $\fixlanclass{\bowtie}{\lambda}$ for every non-trivial pair $(\bowtie, \lambda) \in \cmp \times [0,1]$.
\end{theorem}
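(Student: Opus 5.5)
The theorem follows by combining \Cref{lem:QFAC_subsumbs_regular} with \Cref{lem:QFA_nonRegular}; the only work is to check that the witnesses provided by the latter are legitimate members of $\fixlanclass{\bowtie}{\lambda}$ for the \emph{same} fixed pair $(\bowtie,\lambda)$. Fix a non-trivial pair $(\bowtie,\lambda)\in\cmp\times[0,1]$. By \Cref{lem:QFAC_subsumbs_regular}, every regular language lies in $\fixlanclass{\bowtie}{\lambda}$, which gives the inclusion. For properness, \Cref{lem:QFA_nonRegular} yields a \qfa $\mathcal{A}$ with $L_{\bowtie\lambda}(\mathcal{A})$ non-regular. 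A \qfa is by definition a \qfac with $|Q|=1$, and it is deterministic. It is also complete, since the single state carries a self-loop for every symbol, so no completion step from \Cref{app:complete} can alter its language. Hence $L_{\bowtie\lambda}(\mathcal{A})\in\fixlanclass{\bowtie}{\lambda}$, and this language is not regular.

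The main point needing care is that \Cref{lem:QFA_nonRegular} really covers every non-trivial pair, and not just $(>,0)$ as in the $L_{\neq}$ example. I would handle this by a short case split.

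\begin{itemize}
\item \textbf{Strict operators.} For $\bowtie\in\{>,<\}$ with $\lambda$ non-trivial, keep the rotation-by-$1$-radian construction on $\{a,b\}$ with $M_b=M_a^{-1}$. The final state is $(\cos k,\sin k)^\top$ with $k=|w|_a-|w|_b$, so $\val(w)$ depends only on $k$. Choose the initial vector or the projection so that $\val(w)\bowtie\lambda$ holds for exactly one of the two cases $k=0$ and $k\neq 0$; the transcendence of $\cos 1$ keeps the values for $k\ne 0$ distinct from the value at $k=0$. For example, for $>\lambda$ with $\lambda<1$, rotate the initial vector so that the measured probability is $1$ exactly at $k=0$. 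The resulting language is then $L_{\neq}$ or its complement, and both are non-regular by a pumping argument. Where a construction does not exactly separate $k=0$, one can instead intersect with $a^*b^*$ and use a Myhill--Nerode argument on $a^ib^j$.
\item \textbf{Non-strict operators.} These reduce to the strict cases via \Cref{lem:language_symmetric}, because $\Bar{\bowtie}$ is strict and regular languages are closed under complement. Non-triviality of $(\bowtie,\lambda)$ corresponds exactly to non-triviality of $(\Bar{\bowtie},\lambda)$, so the reduction stays within the covered cases.
\end{itemize}

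Taking the lemma as given, as the paper permits, the argument is two lines: inclusion from \Cref{lem:QFAC_subsumbs_regular}, and a non-regular member from \Cref{lem:QFA_nonRegular}.
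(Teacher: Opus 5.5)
Your top-level argument is exactly the paper's proof, and it is correct: inclusion from \Cref{lem:QFAC_subsumbs_regular}, a non-regular witness from \Cref{lem:QFA_nonRegular}, and the observation that a QFA is a complete one-state QFAC, so its language lies in $\fixlanclass{\bowtie}{\lambda}$. Your reduction of the non-strict operators to the strict ones via \Cref{lem:language_symmetric} is also right, including the check that $(\bowtie,\lambda)$ and $(\Bar{\bowtie},\lambda)$ are trivial simultaneously.

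The supplementary case split for the strict operators does not work as stated. In the 2-dimensional rotation automaton with a rank-one projection, $\val(w)=\cos^2(k+\alpha)$ for some fixed offset $\alpha$. Because $1$ is not a rational multiple of $\pi$, the values at $k\neq 0$ are dense in $[0,1]$. Transcendence only makes them \emph{different} from the value at $k=0$; it does not \emph{separate} them from it by a threshold.

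Your example therefore fails. With $\val(w)=\cos^2 k$ and $\lambda\in(0,1)$, infinitely many $k\neq 0$ satisfy $\cos^2 k>\lambda$ and infinitely many do not. So $L_{>\lambda}$ is neither $L_{\neq}$ nor its complement. For $\lambda=0$ you get $\Sigma^*$, which is regular. In fact, no choice of initial vector and projection in this 2-dimensional automaton isolates $k=0$ when $\lambda\in(0,1)$.

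The paper avoids this by padding with a third coordinate. Taking $\ket{\psi_0'}=(\sqrt{1-\lambda}\,\ket{\psi_0},\sqrt{\lambda})^\top$, $U'_a=U_a\oplus 1$ and $P'=P\oplus 1$ gives $\val_{\mathcal{B}}=(1-\lambda)\val_{\mathcal{A}}+\lambda$. Hence $L_{>\lambda}(\mathcal{B})=L_{>0}(\mathcal{A})=L_{\neq}$ exactly. The $<$ case then follows from \Cref{lem:dqfa_sym} by replacing $P$ with $I-P$.

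Your Myhill--Nerode fallback can be made to work for $\lambda\in(0,1)$, but only with an extra ingredient you do not supply. You must show that $K=\{k\in\mathbb{Z}:\cos^2 k>\lambda\}$ is not ultimately periodic, which again follows from the density of $\{np \bmod \pi\}$ for every integer $p\neq 0$. None of this affects the theorem itself, since you ultimately take \Cref{lem:QFA_nonRegular} as given.
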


To show that $\lanclass{\textit{\qfa}}$ is incomparable with the class of regular languages, we must also show that regular languages are not subsumed by $\lanclass{\textit{\qfa}}$. In particular, we provide a concrete instance of a regular language that is not recognizable by any \qfa. We begin by recalling the pumping lemma for \qfa{s} proven in~\cite{Moore:00:TCS}.

\begin{lemma}[\cite{Moore:00:TCS}]\label{lem:qfa_pumping}
    Let $\mathcal{A}$ be a \qfa over $\Sigma$. Then for every word $w\in\Sigma^*$ and every $\varepsilon > 0$, there exists $k\in \mathbb{N}_{>0}$ such that for all $u,v\in \Sigma^*$,
    $|\val_{\mathcal{A}}(u(w)^kv)-\val_{\mathcal{A}}(uv)| \leq \varepsilon$.
\end{lemma}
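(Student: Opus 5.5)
We reduce to the case $u = v = \varepsilon$ (the empty word) by using that a \qfa has a single classical state and is deterministic. Concretely, write $U_w = U_{a_m}\cdots U_{a_1}$ for $w = a_1\cdots a_m$ and let $P$ be the unique projection. Then
\[
\val_{\mathcal{A}}(u w^k v) = \norm{P U_v U_w^k U_u \ket{\psi_0}}.
\]
It therefore suffices to find, for every $\delta>0$, an integer $k \ge 1$ with $\|U_w^k - I\| \le \delta$ in operator norm. Given such a $k$, the reverse triangle inequality gives
\[
\bigl|\,\norm{P U_v U_w^k U_u\ket{\psi_0}} - \norm{P U_v U_u \ket{\psi_0}}\,\bigr| \le \|P\|\,\|U_v\|\,\|U_w^k - I\|\,\|U_u\ket{\psi_0}\| \le \delta,
\]
because $P$ is a projection (norm at most $1$), $U_v$ is unitary, and $U_u\ket{\psi_0}$ is a unit vector. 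This bound is uniform in $u$ and $v$, which is exactly what the statement needs: $k$ depends only on $w$ and $\varepsilon$.

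If $\val$ is read as the squared norm, the difference of squares of two numbers in $[0,1]$ is at most twice their difference, so choosing $\delta = \varepsilon/2$ handles either convention.

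The main step is the near-recurrence of unitary powers. I would prove it by compactness. The unitary (orthogonal) group is compact, so the sequence $U_w, U_w^2, \dots$ has a Cauchy subsequence. Hence there are $i < j$ with $\|U_w^j - U_w^i\| \le \delta$. Multiplying on the left by $U_w^{-i}$, which is unitary and so preserves the operator norm, gives $\|U_w^{j-i} - I\| \le \delta$. Take $k = j - i \ge 1$.

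A more explicit alternative: diagonalize $U_w$ with eigenvalues $e^{i\theta_1},\dots,e^{i\theta_n}$ and apply Dirichlet's simultaneous approximation theorem to get $k$ with every $k\theta_j$ close to $0$ modulo $2\pi$. The pigeonhole/compactness argument is the cleanest and is the one I would present. I do not expect any real obstacle; the only care needed is to ensure the estimate does not depend on $u$ and $v$, which the operator-norm bound provides.
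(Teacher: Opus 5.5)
Your proof is correct and follows essentially the route the paper relies on. The paper only cites this lemma from Moore and Crutchfield, but its appendix proof of the QFAC pumping lemma (\Cref{lem:pumping2}) uses the same argument: a compactness/Cauchy-subsequence step yields a power of the loop unitary close to the identity (there up to a shift), the change in acceptance probability is bounded via operator norms, and a factor $2$ absorbs the squared-norm convention. Your direct bound on $\|U_w^k - I\|$ fits the $u w^k v$ versus $uv$ comparison here exactly and is uniform in $u,v$, as the statement requires.
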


Let $L_{\neg(bb)}$ be the language over $\Sigma=\{a,b\}$ consisting of all words $w$ that do not contain two consecutive $b$'s. It is easy to see that this language is regular. Using \Cref{lem:qfa_pumping}, it can be shown that $L_{\neg(bb)}$ is not in $\lanclass{\textit{\qfa}}$ (see \cite{Moore:00:TCS} or \Cref{app:expressiveness} for details). Thus, we have:

\begin{restatable}{lemma}{lemNonQFArecognizable}\label{lem:non-QFA_recognizable}
    $L_{\neg (bb)}$ is not in $\lqfa$.
\end{restatable}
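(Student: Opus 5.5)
We argue by contradiction, using the pumping lemma for \qfa{s} (\Cref{lem:qfa_pumping}). Suppose $L_{\neg(bb)} = L_{\bowtie\lambda}(\mathcal{A})$ for some \qfa $\mathcal{A}$ with ${\bowtie}\in\cmp$ and $\lambda\in[0,1]$. By \Cref{lem:language_symmetric}, the complement $\Sigma^*\setminus L_{\neg(bb)}$ is then $L_{\Bar{\bowtie}\lambda}(\mathcal{A})$. So it suffices to handle the case where $\bowtie$ is an upper-type comparison and show that neither the language nor its complement can be cut out by a threshold. The idea is to find two families of words that pumping forces to have arbitrarily close acceptance probabilities, even though one family lies in $L_{\neg(bb)}$ and the other does not. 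Then we squeeze them against $\lambda$.

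The key steps are as follows. First, apply \Cref{lem:qfa_pumping} with $w=b$. For every $\varepsilon>0$ there is a $k\ge 1$ with $|\val(u b^k v)-\val(uv)|\le\varepsilon$ for all $u,v$. Because we may replace $k$ by a multiple, we take $k\ge 2$: the same proof (or iterating, via unitarity) gives arbitrarily large such $k$. Taking $u=v=\epsilon$, the empty word lies in $L_{\neg(bb)}$ while $b^k$ does not, and $|\val(b^k)-\val(\epsilon)|\le\varepsilon$. This alone is not enough, because the threshold may sit exactly between the two values.

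Second, we remove this boundary issue. The pumping lemma comes from the fact that the closure of $\{U_b^k\}$ is a compact group, so there are $k$ with $U_b^k$ arbitrarily close to $I$. It follows that $\val(b^k)$ takes values arbitrarily close to $\val(\epsilon)$, and similarly $\val(ab^{k}a)$ approaches $\val(aa)$. To force a contradiction we need a single value $p$ that is approached both by words in $L_{\neg(bb)}$ and by words outside it, and is itself attained by words on both sides.

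This is done by using $b$-powers that are either $1$ or at least $2$. Choose $k_j\to\infty$ with $U_b^{k_j}\to I$. Then $U_b^{k_j+1}\to U_b$. So $\val(b^{k_j+1})\to\val(b)$, where $b\in L_{\neg(bb)}$ and $b^{k_j+1}\notin L_{\neg(bb)}$. We also have $\val(b^{k_j})\to\val(\epsilon)$, with $\epsilon\in L_{\neg(bb)}$ and $b^{k_j}\notin L_{\neg(bb)}$ once $k_j\ge 2$.

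If $\bowtie$ is $>$ or $<$, membership of $b$ gives $\val(b)\bowtie\lambda$ strictly. Hence $\val(b^{k_j+1})\bowtie\lambda$ for large $j$, which wrongly puts $b^{k_j+1}$ in the language.

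If $\bowtie$ is non-strict, apply the same argument to the complement, which is strict by \Cref{lem:language_symmetric}. Since $bb\notin L_{\neg(bb)}$, we get $\val(bb)\,\Bar{\bowtie}\,\lambda$ strictly. Now pick $k_j$ with $U_b^{k_j}\to I$ and $k_j\ge 1$. Then $\val(b^{k_j+2})\to\val(bb)$, which is of no use since $b^{k_j+2}$ is also outside the language. Instead use a word in $L_{\neg(bb)}$ that converges to a word outside it: $(ba)^{m}b$ versus $bb$? That does not work directly. Instead, take $U_a$ and $U_b$ together. The closure of the semigroup generated by $U_aU_b$ is a group, so there are powers $(ab)^{m_j}\to I$. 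Then the words $b(ab)^{m_j}b$ lie in $L_{\neg(bb)}$, and $\val(b(ab)^{m_j}b)\to\val(bb)$, where $bb\notin L_{\neg(bb)}$. Strictness of $\val(bb)\,\Bar{\bowtie}\,\lambda$ then forces $b(ab)^{m_j}b$ out of the language for large $j$, which is a contradiction.

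The main obstacle is this case analysis on strict versus non-strict $\bowtie$. One has to pick the pumped word ($b$ or $ab$) so that the limit word, which satisfies a strict inequality, lies on the side opposite to the approximating words. I would rely on the recurrence fact (powers of a unitary return arbitrarily close to $I$) underlying \Cref{lem:qfa_pumping}, applied to both $U_b$ and $U_aU_b$.
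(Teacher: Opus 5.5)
The gap is in the non-strict case: the claim that $b(ab)^{m_j}b\in L_{\neg(bb)}$ is false. For every $m_j\ge 1$ the word is $b\,ab\cdots ab\,b$, which ends in $\ldots ab\,b$ and so contains $bb$. It therefore lies in the complement, just like its limit word $bb$. The convergence $\val(b(ab)^{m_j}b)\to\val(bb)$ then moves a word outside $L_{\neg(bb)}$ towards another word outside it, and no contradiction follows. Since ${\geq}$ and ${\leq}$ are exactly the cases this step has to cover, the proof is incomplete as written.

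Your reduction to strict comparisons via \Cref{lem:language_symmetric} is fine, and so is your strict case. For ${\bowtie}\in\{>,<\}$, membership of $b$ gives $\val(b)\bowtie\lambda$ strictly. You also have $\val(b^{k_j+1})\to\val(b)$ with $b^{k_j+1}\notin L_{\neg(bb)}$, and that is the contradiction; it parallels the paper's comparison of $a(bb)^k a$ with $aa$. Your first step ($\epsilon$ versus $b^k$) would already have sufficed there, because $\val(\epsilon)\bowtie\lambda$ is strict. The worry about the threshold sitting between the two values only matters for non-strict operators.

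The missing idea is how to approximate $bb$ from inside $L_{\neg(bb)}$. You need a factor $x$ that begins and ends with $a$ and whose unitary returns arbitrarily close to $I$. Inserting $x$ between the two $b$'s then breaks the $bb$ without changing the limit. The simplest choice is $x=a^k$. Apply \Cref{lem:qfa_pumping} with $u=v=b$ and $w=a$, which is exactly the paper's choice for the complement. Then $ba^kb\in L_{\neg(bb)}$ for every $k\ge 1$, and $\val(ba^kb)$ lies within any prescribed $\varepsilon$ of $\val(bb)$. Since $\val(bb)\,\Bar{\bowtie}\,\lambda$ holds strictly, a small enough $\varepsilon$ forces $ba^kb$ into the complement, which is a contradiction. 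The factor $x=(aba)^m$ would also work. With this replacement your argument goes through and essentially coincides with the paper's proof.
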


As a result, we obtain the following theorem:

\begin{theorem}\label{thm:qfas_incomparable_with_regular}
    $\lanclass{\textit{\qfa}}$ and the class of regular languages are incomparable.
\end{theorem}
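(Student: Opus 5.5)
Incomparability requires two facts: some language in $\lqfa$ is not regular, and some regular language is not in $\lqfa$. Both are already available. For the first, I invoke \Cref{lem:QFA_nonRegular}. It gives, for any non-trivial pair $(\bowtie,\lambda)$, a \qfa $\mathcal{A}$ with $L_{\bowtie\lambda}(\mathcal{A})$ non-regular. A concrete choice is $(>,0)$ with the two-dimensional rotation automaton recognising $L_{\neq}$. Since $L_{\bowtie\lambda}(\mathcal{A})\in\lqfa$ by definition, we get $\lqfa\not\subseteq\mathrm{Reg}$.

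For the second, I take $L_{\neg(bb)}$. It is regular, as witnessed by a three-state DFA: one state for ``last letter was not $b$'', one for ``last letter was $b$'', and a rejecting sink entered on a second consecutive $b$. By \Cref{lem:non-QFA_recognizable}, $L_{\neg(bb)}\notin\lqfa$, so $\mathrm{Reg}\not\subseteq\lqfa$. Together the two non-inclusions give incomparability.

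The proof is therefore a short combination of the two lemmas. The only substantive content lies in \Cref{lem:non-QFA_recognizable}, and I would prove it with \Cref{lem:qfa_pumping}. Suppose $L_{\neg(bb)}=L_{\bowtie\lambda}(\mathcal{A})$ for some \qfa $\mathcal{A}$ and some $(\bowtie,\lambda)$; by \Cref{lem:language_symmetric} I may assume $\bowtie\in\{>,\geq\}$. The idea is to find words near the threshold on both sides and then pump to move across it.

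The main obstacle is controlling the gap to $\lambda$, and here I would argue by cases. First suppose $\val(u)$ is bounded away from $\lambda$ on the accepted side for some accepted word $u$ ending in $b$. Pump $w=b$ with $\varepsilon$ smaller than that margin. Then $ub^k$, which contains $bb$, has probability within $\varepsilon$ of $\val(u)$ and is wrongly accepted. Otherwise, accepted words ending in $b$ approach $\lambda$ from the accepted side. Then I pump the word $ab$ in the other direction. Pumping $ab$ preserves membership, so rejected words such as $bb\,(ab)^k$ must stay rejected. I expect to play these two families against each other, using the uniformity in $u,v$ given by \Cref{lem:qfa_pumping}, to derive a contradiction.
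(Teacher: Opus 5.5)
Your argument for the theorem itself is correct and is exactly the paper's: combine \Cref{lem:QFA_nonRegular} (e.g.\ the rotation QFA with $L_{>0}(\mathcal{A})=L_{\neq}$) with \Cref{lem:non-QFA_recognizable} for the regular language $L_{\neg(bb)}$. The only problem is the sketch you add for \Cref{lem:non-QFA_recognizable}, which is not a proof as it stands.

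There are two issues in that sketch.

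\textbf{The reduction.} \Cref{lem:language_symmetric} does not let you assume $\bowtie\in\{>,\geq\}$ while keeping the language $L_{\neg(bb)}$; it replaces the language by its complement $\Sigma^*\setminus L_{\neg(bb)}$. To keep the language fixed you would instead replace $P$ by $I-P$. This turns $\val$ into $1-\val$, and turns $<,\leq$ into $>,\geq$ with threshold $1-\lambda$.

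\textbf{The $\geq$ case.} This is the more serious issue: your ``otherwise'' branch is only an expectation. Pumping $ab$ into rejected words such as $bb(ab)^k$ keeps them rejected, which is perfectly consistent with $L_{\neg(bb)}$. So no contradiction arises there.

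The missing idea is to work on the side of the threshold where the inequality is strict. There a single word already has a positive margin, so one application of \Cref{lem:qfa_pumping} suffices. The paper applies \Cref{lem:language_symmetric} in the opposite direction: it reduces to strict operators, and therefore treats both $L_{\neg(bb)}$ and its complement. For $L_{\neg(bb)}$ it takes $u=v=a$ and $w=bb$, giving $a(bb)^ka$. For the complement it takes $u=v=b$ and $w=a$.

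In your $\geq$ case this amounts to the following. The word $bb$ is rejected, so $\val(bb)<\lambda$ strictly. \Cref{lem:qfa_pumping} with $\varepsilon=(\lambda-\val(bb))/2$ gives $k\geq 1$ with $\val(ba^kb)<\lambda$. So $ba^kb$ is rejected although it contains no $bb$, and no case analysis is needed.
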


Recall the pumping lemma for NFAs: let $\mathcal{A}$ be an NFA. There exists some number $n\in\mathbb{N}$ such that for every word $w \in L(\mathcal{A})$ with $|w| \geq n$, $w$ can be partitioned into three substrings $w_{1},w_{2},w_{3}$ such that $w = w_1 w_2 w_3$, $|w_2| > 0$, $|w_1 w_2| \leq n$, and $w_1(w_2)^i w_3 \in L(\mathcal{A})$ for all $i \in \mathbb{N}$. 
In the pumping lemma for \qfa{s}, the entire word $w$ is pumped, whereas for NFAs, the word is partitioned into $w_1, w_2, w_3$ and only the factor $w_2$ is pumped. This difference stems from the fact that \qfa{s} possess only a single classical state; thus, any run on a word can always be pumped. In contrast, a run in an NFA can be pumped only when it contains a cycle. By combining the pumping lemma for \qfa{s} (\Cref{lem:qfa_pumping}) with the one for NFAs, we can derive a pumping lemma for \qfac{s}.

\begin{restatable}{lemma}{lemPumpingQFAC}\label{lem:pumping2}
    Let $\mathcal{A}$ be a \qfac over $\Sigma$, there exists $n \in \mathbb{N}$ with $n>1$ such that for every $(\bowtie,\lambda)\in\cmp\times [0,1]$ and every word $w\in L_{\bowtie\lambda}({\cal A})$ with $\abs{w}\geq n$, there exists a partition $w_{1}w_{2}w_{3}$ of $w$ such that 
    \begin{compactitem}
        \item $\abs{w_{2}}>0$, $\abs{w_{1}w_{2}}< n$,
        \item $\forall\varepsilon>0$, $\exists k\in\mathbb{N}_{>0},\,\abs{\val_{\cal A}(w)-\val_{\cal A}(w_{1}w_{2}^{k}w_{3})}\leq\varepsilon$.
    \end{compactitem}
\end{restatable}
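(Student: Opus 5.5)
Take $n = |Q|+1$, where $Q$ is the state set of $\mathcal{A}$; since $|Q|\geq 1$ we have $n>1$. Fix $(\bowtie,\lambda)$ and a word $w\in L_{\bowtie\lambda}(\mathcal{A})$ with $\abs{w}\geq n$. Note that $w$ being in the language plays no role beyond fixing $w$: the second bullet only concerns $\val_{\mathcal A}$ itself, and it will hold for every long word.

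\textbf{Choosing the decomposition.} The acceptance probability is a maximum over runs, and there are finitely many runs on $w$, so we can fix a run $\pi=p_0p_1\cdots p_m$ (with $p_0=q_0$) attaining $\val_{\mathcal A}(w)=\val_{\mathcal A}(w,\pi,\ket{\psi_0})$. Its first $n=|Q|+1$ states $p_0,\dots,p_{|Q|}$ must repeat one by pigeonhole, say $p_i=p_j$ with $i<j\leq |Q|$. Set $w_1=a_1\cdots a_i$, $w_2=a_{i+1}\cdots a_j$ and $w_3$ the rest. Then $\abs{w_2}>0$ and $\abs{w_1w_2}=j\leq|Q|<n$, which is the first bullet. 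For every $k$, the run $\pi$ can be pumped along the cycle $p_i\cdots p_j$, giving a run $\pi_k$ on $w_1w_2^kw_3$ that ends in the same final state $p_m$.

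\textbf{Lower bound on the pumped word.} Write $U_u$ for the product of matrices along a word $u$. By \Cref{lem:qfa_pumping}, or rather its matrix-level core, the unitary $U_{w_2}$ has powers $U_{w_2}^k$ arbitrarily close to $I$ in operator norm. This holds because the unitary group is compact: the sequence $(U_{w_2}^k)$ has a convergent subsequence, so some differences $U_{w_2}^{k_2-k_1}$ tend to $I$. Pick $k\geq1$ with $\|U_{w_2}^k - I\|\leq\varepsilon$. Since all the other matrices are unitary or projections, both of norm at most $1$,
\[
\abs{\val(w_1w_2^kw_3,\pi_k,\ket{\psi_0})-\val(w,\pi,\ket{\psi_0})}\leq \varepsilon .
\]
If $\val$ is the squared norm, as in the examples, the bound becomes $2\varepsilon$; rescale $\varepsilon$ accordingly. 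It follows that $\val(w_1w_2^kw_3)\geq \val(w)-\varepsilon$.

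\textbf{Upper bound on the pumped word (main obstacle).} The max over runs on $w_1w_2^kw_3$ ranges over many runs, and most of them do not come from pumping $\pi$. The fix is to choose the same $k$ so that $U_{w_2}^{k-1}$ is also close to $I$, and then reuse the approximation argument in reverse. Take $k=1+k'$, where $k'$ is a multiple of the return time with $\|U_{w_2}^{k'}-I\|\leq\varepsilon$. Then the total product along any run $\rho$ on $w_1w_2^kw_3$ is within $\varepsilon$ of $U_{w}$. Since $\rho$ ends in some state $q$, we get $\val(w_1w_2^kw_3,\rho)\leq \norm{P_qU_w\ket{\psi_0}}+\varepsilon$.

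This alone is not enough: the state $q$ need not be reachable on $w$ itself, so $\norm{P_qU_w\ket{\psi_0}}$ need not be bounded by $\val(w)$. I expect this to be the crux. To handle it, I would strengthen the choice of $k$ combinatorially. The set of states reachable from $q_0$ on $w_1w_2^{t}$ is eventually periodic in $t$, so choosing $k\equiv 1$ modulo that period and large enough makes the reachable sets after $w_1w_2^k$ and after $w_1w_2$ coincide. Reading $w_3$ then yields exactly the same set of end states as for $w$.

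Finally, intersect the two requirements on $k$: the arithmetic progression coming from the periodicity of the reachable sets, and the set of $k'$ with $U_{w_2}^{k'}\approx I$. These can be made compatible by applying the compactness argument to $U_{w_2}^{P}$, where $P$ is the period. Once both hold, every run on the pumped word ends in a state that some run on $w$ also ends in. Combining this with the $\varepsilon$-closeness of the products gives $\abs{\val(w)-\val(w_1w_2^kw_3)}\leq\varepsilon$.
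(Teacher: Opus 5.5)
Your compactness estimate is fine, and you correctly identify the crux: every run on $w_1w_2^kw_3$ must end in a state that some run on $w$ also ends in. The combinatorial fix you propose for this, however, is false. Write $R_t$ for the set of states reachable on $w_1w_2^t$. The sequence $(R_t)_t$ is eventually periodic, but a $k\geq 2$ with $R_k=R_1$ exists only if $R_1$ already lies on the cycle, i.e.\ only if the pre-period is at most $1$. Taking $k\equiv 1$ modulo the period and $k$ large puts you on the cycle, not back at $R_1$, and nothing in your choice of cut rules out a longer pre-period.

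Here is a concrete failure. Let $q_0\xrightarrow{b}s$ and $q_0\xrightarrow{b}p$, $s\xrightarrow{a}s$, $p\xrightarrow{a}q$, $q\xrightarrow{a}r$, $r\xrightarrow{a}r$, $s\xrightarrow{c}s$, $q\xrightarrow{c}t$, $r\xrightarrow{c}u$, with $d$-self-loops on $s,t,u$. Take all unitaries equal to $I$, $P_s=P_t=0$, $P_u=I$, and complete the automaton with a sink of projection $0$. For $w=bacd^N$ with $N$ large, $\val(w)=0$, so every run is optimal, including $q_0sss\cdots$. Its first repetition $p_1=p_2=s$ gives your cut $w_1=b$, $w_2=a$, $w_3=cd^N$. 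Yet $R_1=\{s,q\}$ while $R_t=\{s,r\}$ for all $t\geq 2$, so $\val(w_1w_2^kw_3)=1$ for every $k\geq 2$. The root cause is that a repetition $p_i=p_j$ in one run only lets you pump that run, which gives your lower bound. The upper bound needs a repetition in the run of the subset construction.

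That is exactly why the paper works with the subset automaton and takes $n=2^{\abs{Q}}$ rather than $\abs{Q}+1$. It applies pigeonhole to the run $S_0S_1\cdots S_{\abs{w}}$ of the determinized automaton (all $S_i$ are non-empty by completeness), obtains $S_i=S_j$ with $i<j<2^{\abs{Q}}$, and cuts there. Then every $w_1w_2^kw_3$ reaches the same final subset $S_{\abs{w}}$ as $w$. Hence $\val(w_1w_2^kw_3)=\max_{q\in S_{\abs{w}}}\norm{P_qU_{w_1w_2^kw_3}\ket{\psi_0}}$, and a single estimate from $\|U_{w_2}^k-U_{w_2}\|\leq\varepsilon/2$ gives both directions at once; here $k=k_2-k_1+1$ comes from compactness. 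No optimal run is needed. With $n=\abs{Q}+1$ you cannot expect such a repetition, since the first $\abs{Q}+1$ subsets can be pairwise distinct.

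Two smaller points. First, in your lower-bound paragraph, $\|U_{w_2}^k-I\|\leq\varepsilon$ makes the pumped product close to $U_{w_1w_3}$, not to $U_w$. What you need, and later use, is $\|U_{w_2}^{k-1}-I\|\leq\varepsilon$; asking for both at once would force $U_{w_2}\approx I$. Second, as literally stated, the second bullet is already satisfied by $k=1$. The content that actually matters, and that the $L_{=}$ argument after the lemma relies on, is some $k\geq 2$, which is what both you and the paper are aiming for.
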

Using~\Cref{lem:pumping2}, we can show that $L_{=} = \Sigma^{\ast}\setminus L_{\neq} = \{w\in\Sigma^{\ast} : \abs{w}_{a} = \abs{w}_{b}\}$ is not recognizable by a QFAC with strict $\bowtie$ for any $\lambda$. To see this, let ${\bowtie}\in\{>,<\}$ and assume towards a contradiction that there exists a \qfac ${\cal A}$ such that $L_{\bowtie\lambda}({\cal A}) = L_{=}$ for some $\lambda\in [0,1]$. Let $n$ be the constant given by~\Cref{lem:pumping2}, and let $w = a^{n+1}b^{n+1}$. Clearly $w\in L_{=}$ and thus $w\in L_{\bowtie\lambda}({\cal A})$. By~\Cref{lem:pumping2}, there exists a partition $w = w_{1}w_{2}w_{3}$ such that for all $\varepsilon>0$, there exists some $k>0$ such that $\abs{\val_{\cal A}(w)-\val_{\cal A}(w_{1}w_{2}^{k}w_{3})}\leq\varepsilon$. In particular, for $\varepsilon = \frac{1}{2}\abs{\lambda - \val_{\cal A}(w)}$ it holds that $\val_{\cal A}(w_{1}w_{2}^{k}w_{3})\bowtie\lambda$ and thus $w_{1}w_{2}^{k}w_{3}\in L_{\bowtie\lambda}({\cal A})$. However, since $\abs{w_{1}w_{2}}< n$, $w_{2}$ must contain only the letter $a$. As $w_{2}$ cannot be empty, $w_{1}w_{2}^{k}w_{3}\notin L_{=}$,contradicting our initial assumption. By~\Cref{lem:language_symmetric}, there is no \qfac that recognizes the language $L_{\neq}$ with ${\bowtie}\in\{\geq,\leq\}$ for any $\lambda$ either. This gives the following theorem.

\begin{theorem}
    $\lanclass{\bowtie}$ and $\lanclass{\bowtie'}$ are incomparable for ${\bowtie} \in \{>,<\}$ and ${\bowtie'}\in\{\geq,\leq\}$.
\end{theorem}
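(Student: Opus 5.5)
The plan is to exhibit two separating languages, $L_{\neq}$ and $L_{=}$. For each pair ${\bowtie}\in\{>,<\}$, ${\bowtie'}\in\{\geq,\leq\}$, we need a language in $\lanclass{\bowtie}\setminus\lanclass{\bowtie'}$ and one in $\lanclass{\bowtie'}\setminus\lanclass{\bowtie}$. We will take $L_{\neq}$ for the first and $L_{=}$ for the second.

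\textbf{Membership.} We reuse the 2-dimensional \qfa from the discussion before \Cref{lem:QFA_nonRegular}: $M_a$ is the rotation by $1$ radian, $M_b=M_a^{-1}$, and $\ket{\psi_0}=(1,0)^\top$. After reading $w$ the state is $(\cos k,\sin k)^\top$ with $k=|w|_a-|w|_b$. Since $\pi$ is irrational, $\sin k=0$ iff $k=0$, and also $\cos^2 k=1$ iff $k=0$.
\begin{itemize}
\item Projecting onto the second component gives a \qfa $\mathcal{A}$ with $\val_{\mathcal{A}}(w)=\sin^2 k$. Hence $L_{>0}(\mathcal{A})=L_{\neq}$ and $L_{\leq 0}(\mathcal{A})=L_{=}$.
\item Projecting onto the first component gives a \qfa $\mathcal{A}'$ with $\val_{\mathcal{A}'}(w)=\cos^2 k$. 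Hence $L_{<1}(\mathcal{A}')=L_{\neq}$ and $L_{\geq 1}(\mathcal{A}')=L_{=}$.
\end{itemize}
So $L_{\neq}\in\lanclass{>}\cap\lanclass{<}$ and $L_{=}\in\lanclass{\geq}\cap\lanclass{\leq}$. Alternatively, the second item in each bullet follows from the first by \Cref{lem:language_symmetric}.

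\textbf{Non-membership.} The argument preceding the theorem, based on \Cref{lem:pumping2}, already shows $L_{=}\notin\lanclass{>}\cup\lanclass{<}$. The key step there is that for strict $\bowtie$ we have $\val_{\mathcal{A}}(w)\neq\lambda$, so $\varepsilon=\frac12|\lambda-\val_{\mathcal{A}}(w)|>0$ and membership is preserved under the approximate pumping. For $L_{\neq}$, suppose $L_{\neq}=L_{\bowtie'\lambda}(\mathcal{B})$ with ${\bowtie'}\in\{\geq,\leq\}$. By \Cref{lem:language_symmetric}, $L_{=}=\Sigma^*\setminus L_{\neq}=L_{\Bar{\bowtie'}\lambda}(\mathcal{B})$, and $\Bar{\bowtie'}\in\{<,>\}$ is strict, contradicting the previous sentence. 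Therefore $L_{\neq}\notin\lanclass{\geq}\cup\lanclass{\leq}$.

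Combining both parts, $L_{\neq}\in\lanclass{\bowtie}\setminus\lanclass{\bowtie'}$ and $L_{=}\in\lanclass{\bowtie'}\setminus\lanclass{\bowtie}$, which is the claimed incomparability. The only delicate point is the pumping step. There one must check that $w_2$ consists only of $a$'s, which holds because $|w_1w_2|<n<n+1$. One must also check that the chosen $k>0$ gives $k|w_2|\neq|w_2|$ when needed. If $k=1$ the pumped word equals $w$, so we instead argue with $k$ large, or, more simply, pick $\varepsilon$ smaller and note that \Cref{lem:pumping2} can be applied to yield some $k\geq 2$; this is the step I would verify most carefully.
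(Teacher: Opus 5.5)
Your proposal follows the paper's route. The rotation QFA puts $L_{\neq}$ in the strict classes and $L_{=}$ in the non-strict ones. \Cref{lem:pumping2} applied to $a^{n+1}b^{n+1}$ excludes $L_{=}$ from $\lanclass{>}\cup\lanclass{<}$. \Cref{lem:language_symmetric} then turns this into $L_{\neq}\notin\lanclass{\geq}\cup\lanclass{\leq}$. Using the irrationality of $\pi$ instead of the transcendence of $\cos 1,\sin 1$ is a harmless simplification, and you spell out the membership half more explicitly than the paper does.

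The step you flagged is a real issue, but your ``more simply'' fix does not resolve it. As literally stated, \Cref{lem:pumping2} is satisfied by $k=1$ for every $\varepsilon$, so choosing a smaller $\varepsilon$ cannot force $k\geq 2$ from the statement alone. The same objection applies to ``argue with $k$ large''. The paper's main-text argument has the same unstated dependence.

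What closes the step is the lemma's proof. It picks $k_1<k_2$ with $\|U_{w_2}^{k_2}-U_{w_2}^{k_1}\|\leq\varepsilon/2$ and sets $k=k_2-k_1+1\geq 2$. So invoke the lemma in this strengthened form, with $k\geq 2$, or re-derive it by the same compactness argument. Then $w_1w_2^kw_3=a^{n+1+(k-1)|w_2|}b^{n+1}\notin L_{=}$, and the rest of your argument goes through unchanged.
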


By \Cref{thm:qfacs_subsumes_regular} and \Cref{thm:qfas_incomparable_with_regular}, it follows that $\lanclass{\textit{\qfa}}$ is strictly subsumed by $\lanclass{\textit{\qfac}}$. The following result suggests that if we restrict the underlying automata of \qfac{s} to be bideterministic, then $\lanclass{\textit{\qfac}}$ collapses to $\lanclass{\textit{\qfa}}$.

\begin{restatable}{lemma}{lemBideterministicQFACtoQFA}\label{lem:bideterministicQFAC_to_QFA}
    Let $\mathcal{A}$ be a \qfac whose underlying automaton is bideterministic. Then, a \qfa $\mathcal{B}$ satisfying $L_{\bowtie \lambda}(\mathcal{B}) = L_{\bowtie \lambda}(\mathcal{A})$ for every $(\bowtie, \lambda) \in \cmp \times [0,1]$ can be effectively obtained.
\end{restatable}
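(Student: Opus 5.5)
The idea is to simulate the finite control inside the quantum state by a tensor product. First observe what bideterminism gives us. Since $\mathcal{A}$ is complete (by our standing assumption) and deterministic, each letter $a\in\Sigma$ induces a total function $\delta_a:Q\to Q$. Codeterminism makes $\delta_a$ injective, and since $Q$ is finite, $\delta_a$ is a bijection, i.e.\ a permutation of $Q$. Let $\Pi_a$ be the $|Q|\times|Q|$ permutation matrix with $\Pi_a\ket{q}=\ket{\delta_a(q)}$. It is real and orthogonal, hence unitary in the sense used in this paper. Moreover, every word $w$ has exactly one run from $q_0$, ending in $\delta_w(q_0)$, where $\delta_w$ is the composition of the $\delta_a$ along $w$.

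Define the \qfa $\mathcal{B}$ of dimension $n\cdot|Q|$ on $\mathcal{H}^n\otimes\mathbb{R}^{Q}$ as follows:
\begin{compactitem}
\item initial state $\ket{\psi_0}\otimes\ket{q_0}$, which is again a unit vector;
\item $U'_a = U_a\otimes \Pi_a$ for each $a\in\Sigma$, which is orthogonal as a tensor product of orthogonal matrices;
\item a single projection $P' = \sum_{q\in Q} P_q\otimes\ket{q}\bra{q}$.
\end{compactitem}
We need $P'$ to be a projection. The summands are projections with mutually orthogonal ranges, so $P'^2=P'$. Since each $P_q$ projects onto a span of standard basis vectors, so does $P'$ with respect to the product standard basis, which matches the restriction on projections adopted in the preliminaries.

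The key computation is an easy induction on $|w|$:
\[
U'_{a_m}\cdots U'_{a_1}\bigl(\ket{\psi_0}\otimes\ket{q_0}\bigr) = \bigl(U_{a_m}\cdots U_{a_1}\ket{\psi_0}\bigr)\otimes\ket{\delta_w(q_0)}.
\]
Applying $P'$ kills every summand except $q=\delta_w(q_0)$. Since $\norm{\ket{x}\otimes\ket{q}}=\norm{\ket{x}}$, we get $\val_{\mathcal{B}}(w)=\norm{P_{\delta_w(q_0)}U_w\ket{\psi_0}}$. This equals $\val_{\mathcal{A}}(w,\pi,\ket{\psi_0})$ for the unique run $\pi$, and hence equals $\val_{\mathcal{A}}(w)$, because the maximum is taken over a single run. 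Equal acceptance probabilities on every word give $L_{\bowtie\lambda}(\mathcal{B})=L_{\bowtie\lambda}(\mathcal{A})$ for all $(\bowtie,\lambda)\in\cmp\times[0,1]$ simultaneously. The construction is clearly effective.

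The only delicate point I expect is the interplay with completeness. If the bideterministic automaton were only partial, the $\delta_a$ are partial injections, and the completion procedure of \Cref{app:complete} (adding a sink) may destroy codeterminism. To handle this, I would instead extend each partial injection to a permutation of $Q\uplus Q'$, where $Q'$ is a disjoint copy of $Q$ acting as ``dead'' states, and assign to the states in $Q'$ whatever projection the completion convention prescribes for rejecting sinks (the null matrix in the basic case). One then checks that a word without a run in $\mathcal{A}$ is steered into $Q'$ and stays there, since the permutations map $Q'$-states reached this way only within the dead part along such words. Getting this bookkeeping exactly right for the special $(\bowtie,\lambda)$ combinations of the appendix is the step I would verify most carefully. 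Under the paper's standing completeness assumption, however, the main argument above already suffices.
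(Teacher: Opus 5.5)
Your argument is correct and complete under the paper's standing completeness assumption, but it uses a different and cleaner construction than the paper's. The paper places the classical and quantum parts side by side in an $(m+2n)$-dimensional space:
$m$ coordinates carry the permutation matrices, $n$ carry $U_w\ket{\psi_0}$, and $n$ extra zero coordinates act as a parking area. Controlled swaps move into this area the components lying outside the range of the current $P_q$, and undo this before each $U_a$, so that a single fixed projection onto coordinates $m+1,\dots,m+n$ can be used.

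You instead take the tensor product $\mathcal{H}^n\otimes\mathbb{R}^{Q}$ with $U_a\otimes\Pi_a$ and the block-diagonal projection $\sum_{q}P_q\otimes\ket{q}\bra{q}$. This already encodes the state-dependent measurement as one legitimate projection, needs no swap gadgetry or normalisation of the shapes of the $P_q$, and reduces correctness to a one-line induction.

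Your route is also the one that can actually be carried out. In a direct sum, a unitary that acts as $\Pi_a$ on the first $m$ coordinates leaves that block, and hence its orthogonal complement, invariant. It is therefore of the form $\Pi_a\oplus B_a$, and its action on the quantum block cannot depend on the classical state. Any operation controlled by the classical state needs a tensor-product register, which is exactly what you build. The price is dimension $n\cdot|Q|$ rather than the $m+2n$ stated in the paper.

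You should drop your final paragraph, because the extension to partial bideterministic automata cannot work. Each extended letter is a permutation of $Q\uplus Q'$. Some states of $Q$ must be sent into $Q'$, so counting forces some states of $Q'$ to be sent back into $Q$; the dead part can never be absorbing.

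In fact the lemma is false without completeness. Take $\Sigma=\{a\}$, $Q=\{p,r\}$, a single transition $(p,a,r)$, $n=1$, $U_a=1$, $P_p=0$ and $P_r=1$. With the sink convention of \Cref{app:complete}, $L_{>1/2}=\{a\}$. No QFA $\mathcal{B}$ has $L_{>1/2}(\mathcal{B})=\{a\}$, because \Cref{lem:qfa_pumping} gives $k\geq 1$ with $\val_{\mathcal{B}}(a^{k+1})$ arbitrarily close to $\val_{\mathcal{B}}(a)>1/2$. So completeness, i.e.\ each letter acting as a permutation, is a genuine hypothesis, and both your main argument and the paper's rely on it.
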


The intuition behind the proof of \Cref{lem:bideterministicQFAC_to_QFA} is as follows. When the underlying automaton $\mathcal{A}_{\text{under}}$ of $\mathcal{A}$ is bideterministic, for each symbol $a \in \Sigma$, the transitions of $\mathcal{A}_{\text{under}}$ form a permutation on its states. Since permutations are reversible, they can be represented as quantum operations; thus, the computation of $\mathcal{A}_{\text{under}}$ can be simulated by a \qfa. The detailed proof is provided in \Cref{app:expressiveness}.

\section{Decidability and Complexity of Emptiness}\label{sec:decidability_complexity}
In this section, we study the decidability and complexity of the emptiness problem for \qfac{s}. The section is organized as follows. We first show in~\Cref{sec:deci_qfac} that the emptiness problem for \qfac{s} is as hard as the same problem for \qfa{s} in general, which is shown to be undecidable for the non-strict operator and decidable for the strict operator in~\cite{Blondel:05:SIAM}. Then in~\Cref{sec:deci_flat}, we investigate a few decidable fragments of the undecidable emptiness problem, i.e., the non-strict operator case, by imposing the \emph{flatness} assumption on the underlying structure of the \qfac. Finally, in~\Cref{sec:complexity}, we establish complexity lower bounds for \qfac{s} and flat \qfac{s}.

\subsection{Decidability for \qfac{s}}\label{sec:deci_qfac}
We start with showing the decidability result of the emptiness problem for \qfac{s}. The undecidable result with the non-strict operators is trivial since \qfa{s} are a special case of \qfac{s}. For the decidable result, we show that we can reduce the emptiness problem of \qfac{s} to the language intersection problem between a \qfa and a finite automata, which is a special case of the result in~\cite{Bertoni:13:DLT}.

Let $\mathcal{A} = (\Sigma, Q, q_0, \ket{\psi_{0}}, (U_{a})_{a\in\Sigma}, (P_q)_{q\in Q}, \Delta)$ be a \qfac. Recall that a word $w\in\Sigma^{\ast}$ is $(\bowtie,\lambda)$-accepted by ${\cal A}$ if $\max_{\pi\in Path_{\cal A}(w)}\val_{\cal A}(w,\pi,\ket{\psi_{0}})\bowtie\lambda$. When $\bowtie$ is $>$, this is equivalent to the satisfiability of $\exists \pi\in Path_{\cal A}(w), \val_{\cal A}(w,\pi,\ket{\psi_{0}})>\lambda$. When $\bowtie$ is $<$, it becomes $\forall \pi\in Path_{\cal A}(w), \val_{\cal A}(w,\pi,\ket{\psi_{0}})<\lambda$. The asymmetry in the quantifier suggests that we shall handle them separately. We first show the decidability for the case of $>$, and then show how to extend the approach to the case of $<$.

In the following, we denote by $L_{>\lambda}({\cal A})[q]$ the set of all words $w\in\Sigma^{\ast}$ such that $w$ induces a run from $q_{0}$ to $q$ and satisfies $\norm{P_{q}U_{w}\ket{\psi_{0}}}>\lambda$. By the definition of the language $L_{>\lambda}({\cal A})$, any accepting word $w$ must belong to some $L_{>\lambda}({\cal A})[q]$ for some $q\in Q$, and vice versa. Therefore we have the following decomposition of languages:
\begin{equation}\label{eq:decomposition_gt}
    L_{>\lambda}({\cal A}) = \bigcup_{q\in Q} L_{>\lambda}({\cal A})[q].
\end{equation}
Since the set $Q$ is finite, it suffices for us to show that the emptiness of $L_{>\lambda}({\cal A})[q]$ is decidable for any $q\in Q$.

The emptiness problem for $L_{>\lambda}({\cal A})[q]$ can be reduced to the intersection between a \qfa and an NFA.
Specifically, let ${\cal A}_{q}$ be the \qfa obtained by removing all classical components of ${\cal A}$ and fixing the projection operator to $P_{q}$. Furthermore, let ${\cal B}_{q}$ be the underlying NFA of ${\cal A}$ with $q$ as the unique accepting state. From these constructions, it follows that $L_{>\lambda}({\cal A})[q] = L_{>\lambda}({\cal A}_{q})\cap L({\cal B}_{q})$, yielding the following lemma:

\begin{restatable}{lemma}{lemNFAQFAintersect}\label{lem:NFA_QFA_Intersection}
    Let $\mathcal{A} = (\Sigma, Q, q_0, \ket{\psi_{0}}, (U_{a})_{a\in\Sigma}, (P_q)_{q\in Q}, \Delta)$ be a \qfac. For any state $q\in Q$, there exists a \qfa ${\cal A}_{q}$ and a NFA ${\cal B}_{q}$ defined over $\Sigma$ such that for any $\lambda\in [0,1]$, $L_{>\lambda}({\cal A})[q] = L_{>\lambda}({\cal A}_{q})\cap L({\cal B}_{q}).$
\end{restatable}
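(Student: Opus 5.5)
The plan is to construct ${\cal A}_{q}$ and ${\cal B}_{q}$ explicitly, as in the paragraph preceding the statement, and then check the language equality by a double inclusion that unfolds the definitions.

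For the construction, let ${\cal A}_{q} = (\Sigma, \{s\}, s, \ket{\psi_{0}}, (U_{a})_{a\in\Sigma}, P_{s}, \{(s,a,s) : a\in\Sigma\})$, where $P_{s} := P_{q}$. This is a \qfa with a single complete deterministic state, so every word $w = a_{1}\cdots a_{m}$ has exactly one run $s^{m+1}$. Hence $\val_{{\cal A}_{q}}(w) = \norm{P_{q}U_{a_{m}}\cdots U_{a_{1}}\ket{\psi_{0}}}$, and the maximum over runs is trivial. Let ${\cal B}_{q} = (Q,\Sigma,\Delta,q_{0},\{q\})$ be the underlying NFA of ${\cal A}$ with $q$ as its unique accepting state. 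Then $w\in L({\cal B}_{q})$ if and only if there is a run $\pi = p_{0}\cdots p_{m}$ of ${\cal A}$ on $w$ with $p_{0}=q_{0}$ and $p_{m}=q$.

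For the equality, fix any $\lambda\in[0,1]$ and use the definition of $L_{>\lambda}({\cal A})[q]$: $w$ belongs to it iff $w$ induces a run from $q_{0}$ to $q$ and $\norm{P_{q}U_{w}\ket{\psi_{0}}}>\lambda$. The first condition is exactly $w\in L({\cal B}_{q})$. The second is exactly $\val_{{\cal A}_{q}}(w)>\lambda$, because the quantum evolution $U_{w}=U_{a_{m}}\cdots U_{a_{1}}$ depends only on the input symbols and not on the classical states; this is the key feature of the paper's definition of \qfac. Both inclusions therefore follow immediately. Neither construction depends on $\lambda$, so a single pair works for all thresholds.

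There is no real obstacle here. The one point to state carefully is that the acceptance probability along a run ending in $q$ is independent of which run is taken. This lets the quantity be decoupled into a pure \qfa value $\val_{{\cal A}_{q}}$ together with a regular reachability condition, with no loss from the maximum over runs in the definition of $\val_{\cal A}$.
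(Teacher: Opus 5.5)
Your proposal is correct and is essentially the paper's proof: the same ${\cal A}_q$ (drop the classical part of ${\cal A}$ and fix the projection to $P_q$) and the same ${\cal B}_q$ (the underlying NFA with $q$ as its unique accepting state), followed by a double inclusion that just unfolds definitions. You also note that the construction does not depend on $\lambda$ and that $U_w$ depends only on the input symbols; these are exactly the facts that make the argument go through.
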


By the result in~\cite{Bertoni:13:DLT}, the emptiness problem of the intersection of a \qfa with strict cutpoint semantics and a linear context-free language is decidable. A context-free grammar is \emph{linear} if every production rule has at most one nonterminal on the right-hand side \cite{Hopcroft:01:text}. Since regular languages are linear context-free~\cite{Chomsky:58:IC}, combining the above lemma with the result in~\cite{Bertoni:13:DLT} yields decidability of the emptiness of $L_{>\lambda}({\cal A})[q]$ for any $q\in Q$ and $\lambda\in [0,1]$. Hence, we have the following result.

\begin{restatable}{theorem}{thmEmptDeci}\label{thm:decidability_emptiness}
    For a \qfac $\mathcal{A}$, the emptiness of $L_{>\lambda}({\cal A})$ is decidable for $\lambda \in [0,1]$.
\end{restatable}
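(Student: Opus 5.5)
The plan is to combine the decomposition in \Cref{eq:decomposition_gt} with \Cref{lem:NFA_QFA_Intersection} and the decidability result of~\cite{Bertoni:13:DLT}. Since $L_{>\lambda}({\cal A}) = \bigcup_{q\in Q} L_{>\lambda}({\cal A})[q]$ and $Q$ is finite, $L_{>\lambda}({\cal A})$ is empty if and only if $L_{>\lambda}({\cal A})[q]=\emptyset$ for every $q\in Q$. The decision procedure therefore iterates over the finitely many states $q$ and answers ``empty'' exactly when every individual test does.

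For a fixed $q$, I will build the two automata of \Cref{lem:NFA_QFA_Intersection} explicitly; both constructions are effective. The \qfa ${\cal A}_q$ has alphabet $\Sigma$, initial state $\ket{\psi_0}$, the same unitaries $(U_a)_{a\in\Sigma}$, and the single projection $P_q$. The NFA ${\cal B}_q$ is $(Q,\Sigma,\Delta,q_0,\{q\})$.

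I then check the identity $L_{>\lambda}({\cal A})[q] = L_{>\lambda}({\cal A}_q)\cap L({\cal B}_q)$. The point to note is that the quantum state reached after reading $w$ is $U_w\ket{\psi_0}$, where $U_w = U_{a_m}\cdots U_{a_1}$. This does not depend on the run, because the unitaries are attached to symbols and not to transitions. Hence $w$ lies in $L_{>\lambda}({\cal A})[q]$ exactly when two things hold: some run on $w$ ends in $q$, which means $w\in L({\cal B}_q)$, and $\norm{P_qU_w\ket{\psi_0}}>\lambda$, which means $w\in L_{>\lambda}({\cal A}_q)$.

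Next I will translate ${\cal B}_q$ into a right-linear grammar, with nonterminals $Q$, a production $p\to a\,p'$ for each $(p,a,p')\in\Delta$, and $q\to\varepsilon$. Such a grammar is in particular linear, as noted via~\cite{Chomsky:58:IC}. Then~\cite{Bertoni:13:DLT} decides whether the intersection of $L_{>\lambda}({\cal A}_q)$ with this linear context-free language is empty.

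The main obstacle is matching the hypotheses of~\cite{Bertoni:13:DLT} exactly. There are two points to check.
\begin{enumerate}
\item Their acceptance value may be the squared norm rather than the norm used here. This is handled by replacing $\lambda$ with $\lambda^2$, or the reverse, which is harmless because $\lambda\in[0,1]$ and both sides are nonnegative.
\item They need the entries of the matrices and of $\lambda$ to be effectively given, for instance algebraic, so that their quantifier-elimination procedure over the reals applies. I will state this as the standing effectiveness assumption, as in~\cite{Blondel:05:SIAM}.
\end{enumerate}
The trivial case $\lambda=1$ needs no separate treatment, since then the language is empty anyway.
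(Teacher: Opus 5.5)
Your proposal is correct and follows essentially the same route as the paper's proof: the decomposition $L_{>\lambda}({\cal A}) = \bigcup_{q\in Q} L_{>\lambda}({\cal A})[q]$ from \cref{eq:decomposition_gt}, the reduction via \Cref{lem:NFA_QFA_Intersection} to a QFA--NFA intersection, the right-linear grammar encoding of ${\cal B}_q$, and the appeal to \cite{Bertoni:13:DLT} for each of the finitely many $q$. Your side remarks are sound conditions that the paper leaves implicit: that the quantum state does not depend on the run, the norm versus squared-norm convention, and the effectiveness of the input data.
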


Now we extend this result to the case of $<$. Note that the decomposition in~\cref{eq:decomposition_gt} no longer holds in this case so the same reduction will not work. In this semantics, we need to check the condition $\norm{P_{q}U_{w}\ket{\psi_{0}}}<\lambda$ for all $q$ reachable by $w$, which is also the subset of states $w$ would reach in the DFA obtained by the subset construction. This leads us to consider a decomposition based on subset of states instead of a single state, which we present below.

Fix any non-empty subset $S\subseteq Q$, and let ${\cal B}[S]$ be the DFA obtained via determinizing the classical part of ${\cal A}$ and fixing $S$ as the unique accepting state. Any word $w=a_{1}\dots a_{n}\in L({\cal B}[S])$ induces a unique run $S_{0}S_{1}\dots S_{n}$ on $w$ of ${\cal B}$ such that $S_{0} = \{q_{0}\}$ and $S_{n} = S$. By the property of the subset construction, any run $q_{0}q_{1}\dots q_{n}$ on $w$ of ${\cal A}$ must have $q_{i}\in S_{i}$ for all $i$. For the word $w$ to be in the language $L_{<\lambda}({\cal A})$, we require $\norm{P_{q} U_{w}\ket{\psi_{0}}} < \lambda$ for all $q\in S$. The latter part can be captured by a set of \qfa{s} ${\cal A}_{q}$ for each $q\in S$, thus suggesting the following decomposition of the language:
\begin{equation}\label{eq:decomposition_lt}
    L_{<\lambda}({\cal A}) = \bigcup_{S\subseteq Q} \left(\bigcap_{q\in S} L_{<\lambda}({\cal A}_{q})\cap L({\cal B}[S])\right).
\end{equation}

By~\cref{eq:decomposition_lt}, the problem reduces to checking the emptiness of the intersection of a DFA and finitely many \qfa{s}. In particular, these \qfa{s} share the same initial state and unitary operators. Below, we briefly outline how to extend the decidability result in~\cite{Bertoni:13:DLT} to this case. 

Let $({\cal A}_{i} = (\Sigma,\ket{\psi_{0}},(U_{a})_{a\in\Sigma},P_{i}))_{i=1,\dots, K}$ be a finite set of \qfa{s} and let ${\cal G}$ be a linear context-free grammar. Furthermore, we denote $L({\cal G})$ as the language induced by ${\cal G}$ and $\Gamma({\cal G})$ as the set of matrices $U_{w}$ induced by words $w\in L({\cal G})$. Similar to~\cite{Blondel:05:SIAM,Bertoni:13:DLT}, the emptiness problem of the language $\bigcap_{i=1}^{K}L_{<\lambda}({\cal A}_{i})\cap L({\cal G})$ is equivalent to the satisfiability of the following formula:
\begin{equation}\label{eq:emptiness_formula}
    \forall M\in \mathbb{R}^{n\times n}, M\in\Gamma({\cal G})\implies \bigvee_{i=1}^{K}\norm{P_{i}M\ket{\psi_{0}}}\geq\lambda.
\end{equation}
We note that the function $f_{i}\colon M\mapsto \norm{P_{i}M\ket{\psi_{0}}}$ is continuous and thus the set of $M$ satisfying the right-hand side of~\cref{eq:emptiness_formula} forms a closed set. Therefore, $\Gamma({\cal G})$ can be equivalently replaced with its topological closure. In~\cite{Bertoni:13:DLT},  they show that the closure $\Bar{\Gamma({\cal G})}$ is semi-algebraic and effectively constructible. The rest follows from the decidability of the first-order theory of real closed fields.

\begin{restatable}{theorem}{IntersecMultiQFA}\label{thm:intersection_multi_qfa}
    Let $L({\cal G})$ be a linear context-free language and ${\cal A}_{1},\dots,{\cal A}_{K}$ be K distinct \qfa{s} that share the same set of unitary operators and initial state. For any $\lambda\in [0,1]$, it is decidable whether $\bigcap_{i=1}^{K}L_{<\lambda}({\cal A}_{i})\cap L({\cal G})$ is empty.
\end{restatable}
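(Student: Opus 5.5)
The plan is to follow the outline given just before the statement and reduce the question to a sentence of the first-order theory of real closed fields. Write $U_w = U_{a_m}\cdots U_{a_1}$ for $w = a_1\cdots a_m$ and set $\Gamma({\cal G}) = \{U_w \mid w\in L({\cal G})\}$. By definition,
$w\in\bigcap_{i=1}^{K}L_{<\lambda}({\cal A}_i)\cap L({\cal G})$ if and only if $U_w\in\Gamma({\cal G})$ and $f_i(U_w)<\lambda$ for every $i$, where $f_i(M)=\norm{P_iM\ket{\psi_0}}$. Hence the intersection is empty exactly when $\Gamma({\cal G})\subseteq C$, where
$C=\{M \mid \bigvee_{i=1}^{K} f_i(M)\geq\lambda\}$. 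This is formula~\eqref{eq:emptiness_formula}.

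First, observe that each $f_i$ is a continuous polynomial-type function of the entries of $M$. Comparing squares avoids the square root, since $\norm{\cdot}^2$ is a polynomial. Therefore $C$ is a finite union of closed sets and is closed, and it is semialgebraic with rational or algebraic data, assuming the input matrices have algebraic entries as in \cite{Blondel:05:SIAM,Bertoni:13:DLT}. Because $C$ is closed, $\Gamma({\cal G})\subseteq C$ holds iff $\Bar{\Gamma({\cal G})}\subseteq C$. The non-strict disjunction is exactly what makes the closure harmless, and this is the reason for working with $<$ here rather than $\leq$.

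Second, invoke the result of \cite{Bertoni:13:DLT}: for a linear context-free grammar ${\cal G}$ over a finite set of unitary (orthogonal) matrices, the closure $\Bar{\Gamma({\cal G})}$ is an effectively computable semialgebraic set. The underlying fact is that the closure of a finitely generated subsemigroup of a compact group is an algebraic group. It is computed via Derksen--Jeandel--Koiran, and linear productions $X\to uYv$ give bilinear actions $M\mapsto U_uMU_v$ that are treated in the same compact setting. Crucially, this construction depends only on ${\cal G}$ and the shared matrices $(U_a)_{a\in\Sigma}$, not on the projection. So one closure serves all $K$ automata simultaneously, and this is where the hypothesis of shared unitaries and initial state is used.

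Finally, form the sentence $\forall M\,(M\in\Bar{\Gamma({\cal G})}\Rightarrow\bigvee_i \norm{P_iM\ket{\psi_0}}^2\geq\lambda^2)$ and decide it by Tarski's quantifier elimination. The main obstacle is the second step: I have to check that the semialgebraic description of $\Bar{\Gamma({\cal G})}$ from \cite{Bertoni:13:DLT} is stated, or can be restated, independently of the single projection and cutpoint used there. If their proof only yields the answer to a one-QFA query, I would reopen the construction and extract the set itself. That set is the union over nonterminals of closures of the sets generated by the linear rules, which is projection-free. The remaining steps are routine.
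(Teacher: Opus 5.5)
Your proposal is correct and follows the paper's proof essentially step for step. Emptiness is rewritten as $\Gamma(\mathcal{G})\subseteq\bigcup_{i=1}^{K} f_i^{-1}([\lambda,1])$, and the closedness of this union (where the paper's appendix slips and writes $\bigcap$) justifies passing to $\overline{\Gamma(\mathcal{G})}$; the effective semialgebraic description of that closure from Bertoni et al.\ together with Tarski's decision procedure finishes the argument, and your remark that this closure depends only on the shared unitaries and not on the projections $P_i$ is exactly the implicit reason a single closure serves all $K$ automata.
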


Finally, since $Q$ is finite, we have finitely many non-empty subsets $S\subseteq Q$. By combining~\cref{eq:decomposition_lt} and~\Cref{thm:intersection_multi_qfa}, we can decide the emptiness of $L_{<\lambda}({\cal A})$ by checking the emptiness of $\bigcap_{q\in S} L_{<\lambda}({\cal A}_{q})\cap L({\cal B}[S])$ for each non-empty subset $S\subseteq Q$. We conclude this section with the following theorem.

\begin{restatable}{theorem}{thmEmptDeciFull}\label{thm:decidability_emptiness_full}
    For a \qfac $\mathcal{A}$ and a non-trivial pair of $(\bowtie,\lambda)\in\cmp\times [0,1]$, the emptiness of $L_{\bowtie\lambda}({\cal A})$ is decidable if ${\bowtie} \in \{>, <\}$ and undecidable if ${\bowtie} \in \{\geq, \leq\}$.
\end{restatable}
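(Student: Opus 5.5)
The statement has a decidable half and an undecidable half, and I will handle them separately. Throughout I assume $\mathcal{A}$ is complete, as the paper does, and that $(\bowtie,\lambda)$ is non-trivial.

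\emph{The case ${\bowtie}=>$.} This is exactly \Cref{thm:decidability_emptiness}. For completeness, the argument runs as follows. Use the decomposition \eqref{eq:decomposition_gt}, so that emptiness of $L_{>\lambda}(\mathcal{A})$ holds iff $L_{>\lambda}(\mathcal{A})[q]$ is empty for every $q\in Q$. By \Cref{lem:NFA_QFA_Intersection}, each $L_{>\lambda}(\mathcal{A})[q]$ equals $L_{>\lambda}(\mathcal{A}_q)\cap L(\mathcal{B}_q)$. A regular language is linear context-free, so the result of \cite{Bertoni:13:DLT} decides the emptiness of each of these finitely many intersections.

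\emph{The case ${\bowtie}=<$.} I first check that the decomposition \eqref{eq:decomposition_lt} is correct.
\begin{itemize}
\item Take any $w$, and let $S$ be the subset-construction state reached on $w$. Completeness makes $S$ nonempty.
\item The set of runs on $w$ ends exactly in the states of $S$. Hence $\val(w)<\lambda$ iff $\norm{P_qU_w\ket{\psi_0}}<\lambda$ for all $q\in S$.
\item So $w$ lies in the $S$-summand of \eqref{eq:decomposition_lt} iff $w\in L_{<\lambda}(\mathcal{A})$.
\end{itemize}
Each summand involves QFAs $\mathcal{A}_q$ that share $\ket{\psi_0}$ and the unitaries $(U_a)_a$, intersected with the regular (hence linear context-free) language $L(\mathcal{B}[S])$. \Cref{thm:intersection_multi_qfa} therefore decides each of the finitely many summands.

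The main obstacle is the justification of \Cref{thm:intersection_multi_qfa}. It needs two facts:
\begin{itemize}
\item The condition $\bigvee_i \norm{P_iM\ket{\psi_0}}\ge\lambda$ defines a closed semialgebraic set, being a finite union of closed sets. So $\Gamma(\mathcal{G})$ may be replaced by its closure.
\item That closure is effectively semialgebraic \cite{Bertoni:13:DLT}.
\end{itemize}
Formula \eqref{eq:emptiness_formula} then becomes a sentence of real closed fields, and Tarski's procedure decides it. A real-algebraic $\lambda$ is assumed, as in the cited works.

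\emph{Non-strict operators.} A QFA is a QFAC with $|Q|=1$, so undecidability transfers directly from \cite{Blondel:05:SIAM}. That result gives undecidability of emptiness of $L_{\ge\lambda}$ for QFAs for a suitable non-trivial $\lambda$. For $\le$, I use \Cref{lem:language_symmetric}: $L_{\le\lambda}=\emptyset$ iff $L_{>\lambda}=\Sigma^*$. Blondel et al. also cover the $\le$ case directly; otherwise one can replace $P$ by $I-P$, which gives $\norm{(I-P)U_w\psi_0}=1-\norm{PU_w\psi_0}$ and turns $\le\lambda$ into $\ge 1-\lambda$.
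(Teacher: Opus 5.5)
Your proposal is correct and follows the paper's proof essentially step for step. You handle the $>$ case via \cref{eq:decomposition_gt} and \Cref{lem:NFA_QFA_Intersection}, the $<$ case via the subset decomposition \cref{eq:decomposition_lt} (with completeness guaranteeing $S\neq\emptyset$) together with \Cref{thm:intersection_multi_qfa}, and the non-strict case by inheritance from QFAs \cite{Blondel:05:SIAM}, exactly as the paper does. Two small points go beyond the paper. First, you treat $\leq$ explicitly via $P\mapsto I-P$ (essentially \Cref{lem:dqfa_sym}), which the paper leaves implicit. Second, it is this step that actually transfers undecidability; the detour through \Cref{lem:language_symmetric} only converts the problem to universality of the strict language, which the cited result does not cover.
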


\subsection{Decidability for Flat \qfac{s}}\label{sec:deci_flat}

By \Cref{thm:decidability_emptiness_full}, the emptiness problem for \qfac{s} under non-strict comparison operators is undecidable. This raises the question of which structural restrictions might render the problem decidable for non-strict cases. In this subsection, we investigate the emptiness problem for flat \qfac{s} (those subject to a specific structural constraint) and relate the emptiness problem for ``one-loop'' \qfac{s}, a subclass of flat \qfac{s}, to the \emph{higher-dimensional orbit problem} \cite{Kannan:80:STOC}.

By the standard decomposition of automata, we have:

\begin{restatable}{lemma}{lemFlattoChain}\label{lem:flat2chain}
    Let $\mathcal{A}$ be a flat \qfac. We can effectively decompose it into a set of chain \qfac{s} $\mathcal{A}_1, \dots, \mathcal{A}_m$ such that $L_{\bowtie \lambda}(\mathcal{A}) = \bigcup_{i=1}^{m}L_{\bowtie \lambda}(\mathcal{A}_i)$.
\end{restatable}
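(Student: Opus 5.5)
The plan is to work directly with the run semantics. Every run of $\mathcal{A}$ passes through a sequence of SCCs, and since $\mathcal{A}$ is flat, each SCC is a single vertex or a simple loop. Consider the condensation DAG of the underlying graph of $\mathcal{A}$. I take the chains $\mathcal{A}_1,\dots,\mathcal{A}_m$ to be the sub-automata induced by the maximal paths $C_1 C_2 \cdots C_k$ in this DAG that start at the SCC of $q_0$. Concretely, $\mathcal{A}_i$ keeps the states of $C_1,\dots,C_k$, all transitions inside these SCCs, and the transitions from $C_j$ to $C_{j+1}$. It inherits the alphabet, $\ket{\psi_0}$, $(U_a)_{a\in\Sigma}$ and the projections $P_q$ of the states it keeps. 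The DAG is finite, so there are finitely many such paths and the construction is effective. Each $\mathcal{A}_i$ is flat, and every SCC in it has at most one predecessor and one successor, so $\mathcal{A}_i$ is a chain.

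The key combinatorial step is the following claim. For every word $w$, the set of runs of $\mathcal{A}$ on $w$ from $q_0$ is exactly the union over $i$ of the runs of $\mathcal{A}_i$ on $w$. One inclusion is trivial. For the other, a run visits SCCs in DAG order, so its SCC sequence is a path starting at $C(q_0)$, which extends to some maximal path. The quantum part of $\val_{\mathcal{A}}(w,\pi,\ket{\psi_0})$ depends only on $w$ and the last state of $\pi$, and that state carries the same projection in $\mathcal{A}$ and in $\mathcal{A}_i$. Hence
\[
\val_{\mathcal{A}}(w)=\max_{i} \val_{\mathcal{A}_i}(w),
\]
where the maximum ranges over the $i$ for which $w$ has a run in $\mathcal{A}_i$. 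For $\bowtie\in\{>,\geq\}$ this immediately gives $L_{\bowtie\lambda}(\mathcal{A})=\bigcup_i L_{\bowtie\lambda}(\mathcal{A}_i)$.

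The main obstacle is the pair $\bowtie\in\{<,\leq\}$, together with the standing completeness assumption. A max is below $\lambda$ only if every component is, so a plain union overapproximates $L_{<\lambda}(\mathcal{A})$: a word could have a low-probability run in $\mathcal{A}_1$ and a high-probability run in $\mathcal{A}_2$. Also, the chains are generally not complete. My first attempt would be to rule this out by making runs chain-exclusive. Since the words read along distinct maximal DAG paths may overlap, I would instead refine the decomposition so that each chain is responsible for a disjoint set of words. For this I would determinize first and use the subset-state idea of \cref{eq:decomposition_lt}. If determinization destroys flatness, I would fall back on restricting the lemma's use to emptiness. In that setting, $L_{<\lambda}(\mathcal{A})=\emptyset$ is equivalent to universality of $L_{\geq\lambda}(\mathcal{A})$ by \Cref{lem:language_symmetric}, and the union identity above applies. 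I would also check the completion construction of \Cref{app:complete} on each chain: the sink must reject under the given $(\bowtie,\lambda)$, so that missing transitions neither add nor remove words.
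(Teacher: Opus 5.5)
Your chains are exactly the ones the paper's proof produces by repeatedly deleting one of two sibling SCCs. Your argument via $\val_{\mathcal{A}}(w)=\max_i\val_{\mathcal{A}_i}(w)$ is a correct and more explicit proof for $\bowtie\in\{>,\geq\}$. Your version keeps only the edges between consecutive SCCs of a maximal path, so it also separates a shortcut edge $C_1\to C_3$ from $C_1\to C_2\to C_3$; the paper's sibling-deletion step does not split that case. For these operators, leave the chains incomplete and give a word with no run in $\mathcal{A}_i$ probability $0$. Completing them with the sink of \Cref{app:complete} would in general give the sink several predecessor SCCs, and the $\mathcal{A}_i$ would stop being chains.

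The gap is the case $\bowtie\in\{<,\leq\}$, which you leave open behind tentative repairs. For this decomposition the identity is false. The paper's proof asserts it ``for every $\bowtie\in\cmp$'' and is wrong at exactly this step. Here is a counterexample in any dimension and with any initial unit vector:
\begin{itemize}
\item $\Sigma=\{a\}$, states $q_0,q_1,q_2$;
\item $\Delta=\{(q_0,a,q_1),(q_0,a,q_2),(q_1,a,q_1),(q_2,a,q_2)\}$;
\item $U_a=I$, $P_{q_0}=P_{q_1}=O$ and $P_{q_2}=I$.
\end{itemize}
This QFAC is complete and flat. It has $\val_{\mathcal{A}}(\varepsilon)=0$ and $\val_{\mathcal{A}}(a^k)=1$ for $k\geq 1$, so $L_{<1/2}(\mathcal{A})=L_{\leq 1/2}(\mathcal{A})=\{\varepsilon\}$. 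The chain on $\{q_0,q_1\}$ is already complete and gives every word probability $0$, so its $<1/2$ and $\leq 1/2$ languages are both $a^*$. The union is therefore $a^*\neq\{\varepsilon\}$, and completion plays no role.

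Your repairs cannot close this.
\begin{itemize}
\item After determinizing, the condition at a subset state $S$ is $\max_{q\in S}\norm{P_qU_w\ket{\psi_0}}<\lambda$. This is in general not given by a single projection. Splitting $S$ into copies $(S,q)$ as in \Cref{app:complete} restores a QFAC, but it reintroduces exactly the branching that the chains separate.
\item Rewriting emptiness of $L_{<\lambda}(\mathcal{A})$ as universality of $\bigcup_i L_{\geq\lambda}(\mathcal{A}_i)$ is correct, but it is not the lemma's identity.
\end{itemize}
Pushed one step further, the rewriting shows what does hold. With runless words at probability $0$, your max formula gives $L_{\bowtie\lambda}(\mathcal{A})=\bigcap_i L_{\bowtie\lambda}(\mathcal{A}_i)$ for $\bowtie\in\{<,\leq\}$: an intersection, not a union. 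So state and prove the lemma for $\{>,\geq\}$, and record the intersection form together with this counterexample for $\{<,\leq\}$, rather than trying to rescue the union with these chains.
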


Based on \cref{lem:flat2chain}, we assume that all flat \qfac{s} are chains when considering the emptiness problems in the remainder of this paper.

Let $\mathcal{A}$ be a chain \qfac and $\lambda \in [0,1]$. Consider the non-strict comparison operator $\geq$; then $L_{\geq \lambda}(\mathcal{A}) \neq \emptyset$ if and only if $L_{> \lambda}(\mathcal{A}) \neq \emptyset$ or $L_{= \lambda}(\mathcal{A}) \neq \emptyset$. By \cref{thm:decidability_emptiness_full}, determining whether $L_{> \lambda}(\mathcal{A})$ is empty is decidable. The analysis for the comparison operator $\leq$ is analogous. Accordingly, for ${\bowtie} \in \{\geq, \leq\}$, the decidability of the emptiness problem for $L_{\bowtie \lambda}(\mathcal{A})$ reduces to the decidability of $L_{= \lambda}(\mathcal{A}) \neq \emptyset$.

\begin{figure}[t]\centering
\scalebox{0.8}{
\begin{tikzpicture} [draw=black,
    node distance = 2cm, 
    on grid, 
    auto]

\node (q0) [state, 
    initial,  
    initial text = {}] {$q_0$};
\node (q1) [state,
    right = of q0] {};
\node (q2) [state,
    right = of q1] {};
\node (q3) [state] at (7,0) {};
\node (q4) [state,
    right = of q3] {};
    
\path [->, dashed]
     (q0) edge[] node {$u_0$}   (q1)
     (q1) edge [loop above]  node {$v_1$}()
     (q1) edge[] node {$u_1$}   (q2)
     (q2) edge [loop above]  node {$v_2$}()
     (q2) edge[] node {}   (q3)
     (q3) edge[] node {$u_\ell$}   (q4)
     (q3) edge [loop above]  node {$v_\ell$}()
     ;
\end{tikzpicture}
}
\caption{Chain \qfac $\mathcal{A}$ with $\ell$ loops.}
\label{Fig:chain_qfac}
\vspace{-1.5em}
\end{figure}

Suppose $\mathcal{A}$ has $\ell$ loops (see \cref{Fig:chain_qfac}). Using the same technique as in the proof of \Cref{lem:flat2chain}, we can assume without loss of generality that $q_\ell$ is the only state of $\mathcal{A}$ for which the projection matrix $P_{q_\ell}$ is non-null; furthermore, the entering and leaving states of every loop are identical. It follows that there exists $w \in \Sigma^*$ such that $\val_\mathcal{A}(w) = \lambda$ if and only if there exists an accepting run $\pi$ from $q_0$ to $q_\ell$ such that $\val_\mathcal{A}(w, \pi, \ket{\psi_0}) = \lambda$. Accordingly, $w \in L_{= \lambda}(\mathcal{A})$ if and only if there exist indices $i_1, \dots, i_\ell \in \mathbb{N}$ and words $u_0, \dots, u_\ell$ and $v_1, \dots, v_\ell$ such that $w = u_0(v_1)^{i_1}u_1 \dots (v_\ell)^{i_\ell}u_\ell$, where each $u_{i-1}$ corresponds to a simple path $\pi_i$ from $q_{i-1}$ to $q_i$, and each $v_j$ corresponds to a simple loop $\rho_j$ from $q_j$. In other words, such a word $w$ exists if and only if the following formula is satisfiable:
\begin{equation}\label{eq:flat_ell_loops}
    \exists i_1,\dots,i_\ell\in\mathbb{N}.\norm{P_{q_\ell} U_{u_\ell} (U_{v_\ell})^{i_\ell} \cdots U_{u_1} (U_{v_1})^{i_1} U_{u_0} \ket{\psi_0}} = \lambda.
\end{equation}

Given a unitary matrix $M$, if $M$ has \emph{finite order} (i.e., $M^i = I$ for some $i \in \mathbb{N}_{>0}$), then $S = \{M^i \mid i \in \mathbb{N}\}$ is a finite set. This set $S$ can be effectively derived when the entries of $M$ are algebraic \cite{Beals:99:JCSS}. Accordingly, if each $U_{v_j}$ has finite order, then the satisfiability of \cref{eq:flat_ell_loops} is decidable, and we have the following:

\begin{lemma}
    Let $\mathcal{A}$ be a chain \qfac and $\lambda \in [0,1]$. If the unitary matrix corresponding to each simple loop of $\mathcal{A}$ has finite order, then it is decidable whether $L_{= \lambda}(\mathcal{A}) = \emptyset$.
\end{lemma}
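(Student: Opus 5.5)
The plan is to reduce the emptiness of $L_{=\lambda}(\mathcal{A})$ to a finite check, exploiting finite order to collapse the exponents in \cref{eq:flat_ell_loops}. By the preceding discussion (which uses \Cref{lem:flat2chain} and the normalisation of chains), we may assume $\mathcal{A}$ is a chain with $\ell$ loops, that $P_{q_\ell}$ is the only non-null projection, and that every loop is entered and left at the same state. Then $L_{=\lambda}(\mathcal{A})\neq\emptyset$ if and only if \cref{eq:flat_ell_loops} holds for some choice of simple connecting paths $u_0,\dots,u_\ell$, simple loops $v_1,\dots,v_\ell$, and exponents $i_1,\dots,i_\ell\in\mathbb{N}$. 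Since $\mathcal{A}$ is a chain, each loop is a single simple cycle and the connecting paths are simple. Hence there are only finitely many choices for the words $u_j$ and $v_j$, and these can be enumerated. The remaining quantification is over the exponents.

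For each loop word $v_j$, the hypothesis gives some $d_j\in\mathbb{N}_{>0}$ with $(U_{v_j})^{d_j}=I$. Following \cite{Beals:99:JCSS}, when the entries are algebraic (the standing effective-input assumption), we can compute such a $d_j$. One way is to compute the eigenvalues, which must all be roots of unity, and take the lcm of their orders. Another is to enumerate $k=1,2,\dots$ and test $(U_{v_j})^k=I$ exactly in algebraic arithmetic; this terminates because finite order is guaranteed. In either case $(U_{v_j})^{i_j}=(U_{v_j})^{i_j\bmod d_j}$. Therefore \cref{eq:flat_ell_loops} is satisfiable if and only if it holds for some tuple with $0\le i_j<d_j$ for every $j$.

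This leaves finitely many candidate tuples $(i_1,\dots,i_\ell)$. For each one, the vector $P_{q_\ell}U_{u_\ell}(U_{v_\ell})^{i_\ell}\cdots U_{u_0}\ket{\psi_0}$ has algebraic entries. Its squared norm is therefore an explicitly computable algebraic number. Equality with $\lambda$, compared as $\lambda^2$ if $\lambda$ is algebraic, can be decided exactly, for instance in the first-order theory of real closed fields or by algebraic-number arithmetic. Deciding all finitely many instances decides whether $L_{=\lambda}(\mathcal{A})=\emptyset$. Combined with \Cref{thm:decidability_emptiness_full} for the strict part, this also settles $\geq$ and $\leq$.

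The main obstacle is the effectiveness of computing the period $d_j$ and of the exact equality test. Both depend on the entries of the matrices, the initial state and $\lambda$ being algebraic, so I would state that assumption explicitly. With it, the enumeration-based period computation suffices, and everything else is a bounded finite search.
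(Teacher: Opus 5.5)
Your proposal is correct and follows the paper's own argument. You reduce to \cref{eq:flat_ell_loops}, use finite order to replace each $(U_{v_j})^{i_j}$ by one of finitely many powers (computable for algebraic entries, which is what the paper's appeal to \cite{Beals:99:JCSS} supplies), and decide the finitely many resulting equalities exactly, spelling out the period computation and exact test that the paper leaves implicit.

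One caveat applies to the paper as much as to you. Both of you take on trust the normalisation to a single non-null projection $P_{q_\ell}$, but for $=\lambda$ under max-over-runs semantics this can fail when a word reaches several states, since every reached state must then give value at most $\lambda$. The finite-order idea still covers the general case: the pairs $(S,U_w)$, with $S$ the set of states reached by $w$, then range over a finite set computable by saturation.
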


Next, we relate the emptiness problem for ``one-loop'' \qfac{s} to the \emph{higher-dimensional orbit problem}: given a matrix $M$, a vector $\ket{\psi}$, and a subspace $W$, decide whether there exists $i \in \mathbb{N}$ such that $M^i \ket{\psi} \in W$. This problem remains open when $M$, $\ket{\psi}$, and $W$ are defined over rational numbers and the dimension of $W$ is greater than 4 \cite{Chonev:13:STOC}.

Consider \cref{eq:flat_ell_loops} and assume that $\mathcal{A}$ has only one loop in an $n$-dimensional vector space. Since $U\ket{\psi}$ remains a unit vector for any unitary matrix $U$ and unit vector $\ket{\psi}$, deciding whether \cref{eq:flat_ell_loops} is satisfiable becomes a matter of determining the satisfiability of the following formula for given unitary matrices $U$ and $V$, a projection matrix $P$, a unit vector $\ket{\psi}$, and $\lambda \in [0,1]$:
\begin{equation}\label{eq:one_loop}
    \exists i \in \mathbb{N} : \norm{P U V^i \ket{\psi}} = \lambda.
\end{equation}

Recall the following basic properties from linear algebra: (L1) for a vector space $\mathcal{S}$ and a projection matrix $P$, the range $\{P\ket{\phi} \mid \ket{\phi} \in \mathcal{S}\}$ forms a subspace; (L2) a unitary matrix maps a subspace to another subspace of the same dimension. Combining (L1) and (L2), we observe that the operator $PU$ in \cref{eq:one_loop} effectively restricts the measurement to a subspace $W$. Suppose that $\lambda=1$. Since $V^i\ket{\psi}$ is always a unit vector, \cref{eq:one_loop} is satisfiable if and only if there exists $i \in \mathbb{N}$ such that $V^i \ket{\psi} \in W$, which is an instance of the higher-dimensional orbit problem. 
In general, when $\lambda=1$, the emptiness problem for \qfac{s} in the non-strict case is undecidable, even for \qfa{s} \cite{Blondel:05:SIAM}. It remains unclear whether this problem is decidable for flat \qfac{s} or specifically for one-loop \qfac{s}. However, based on the above analysis, we have:

\begin{lemma}
    If the emptiness problem for one-loop \qfac{s} in the non-strict case is undecidable when $\lambda=1$, then the higher-dimensional orbit problem is also undecidable.
\end{lemma}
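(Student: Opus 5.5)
We prove the contrapositive: assuming the higher-dimensional orbit problem is decidable, we obtain an algorithm for the emptiness problem of one-loop \qfac{s} with a non-strict operator and $\lambda=1$. Since $\val(w)\le 1$ always, $L_{\geq 1}(\mathcal{A})=L_{=1}(\mathcal{A})$. Also $L_{\leq 1}(\mathcal{A})=\Sigma^*$, so $(\leq,1)$ is a trivial pair. Hence the only case to handle is ${\bowtie}={\geq}$, that is, deciding whether $L_{=1}(\mathcal{A})\neq\emptyset$.

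First, apply \cref{lem:flat2chain} to obtain finitely many chain \qfac{s}, each with at most one loop. Following the discussion before \cref{eq:flat_ell_loops}, assume each has a single accepting state $q_\ell$ with non-null projection. Enumerate the finitely many combinations of a simple prefix path $u_0$, a simple loop $v$, and a simple suffix path $u_1$. Components with no loop contribute only finitely many words, and for each we evaluate $\norm{PU_w\ket{\psi_0}}=1$ directly. This is exact arithmetic, assuming algebraic entries as in the rest of the paper. Every other word has the form $u_0 v^i u_1$, so membership reduces to \cref{eq:one_loop} with $U=U_{u_1}$, $V=U_v$, and $\ket{\psi}=U_{u_0}\ket{\psi_0}$.

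The key step is reducing \cref{eq:one_loop} to an orbit instance. Set $W=U^{-1}\,\mathrm{range}(P)=U^\dagger\,\mathrm{range}(P)$, which is a subspace by (L1) and (L2). For a unit vector $\ket{\phi}$ we have $\norm{PU\ket{\phi}}=1$ iff $U\ket{\phi}\in\mathrm{range}(P)$, because $\norm{PU\ket{\phi}}^2+\norm{(I-P)U\ket{\phi}}^2=1$ and $P$ is an orthogonal projection. So \cref{eq:one_loop} with $\lambda=1$ holds iff $V^i\ket{\psi}\in W$ for some $i$, which is exactly an instance $(V,\ket{\psi},W)$ of the orbit problem. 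A basis of $W$ is computable as $U^\dagger\ket{j}$ for $j\in S$, where $P=\sum_{j\in S}\ket{j}\bra{j}$.

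Finally, $L_{\geq 1}(\mathcal{A})$ is nonempty iff some enumerated instance is a yes-instance. Combining finitely many orbit-oracle calls therefore decides emptiness, which contradicts the assumed undecidability.

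The main obstacle is the mismatch in number domains. The open orbit problem is stated over rationals, while our matrices are real, possibly algebraic after the complex-to-real encoding. I would phrase the lemma for the orbit problem over the same entry domain as the \qfac{s}; for algebraic entries this is the standard setting of \cite{Kannan:80:STOC}. The reduction is then purely structural and introduces no new numbers beyond products of the given matrices.
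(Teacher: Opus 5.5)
Your proposal is correct and is essentially the paper's argument: the paper also observes that, since $V^i\ket{\psi}$ is a unit vector, $\norm{PUV^i\ket{\psi}}=1$ holds iff $V^i\ket{\psi}$ lies in the subspace $W$ determined by $PU$, so one-loop emptiness at $\lambda=1$ is an instance of the higher-dimensional orbit problem. What you add is explicit detail the paper leaves implicit: the contrapositive, the triviality of $(\leq,1)$, the prefix/loop/suffix enumeration, and $W=U^\dagger\,\mathrm{range}(P)$ via the Pythagorean identity for $P$ and $I-P$. One minor correction to your side remark: Kannan and Lipton pose the orbit problem over rationals, not algebraic numbers, so the entry-domain mismatch you flag is a looseness in the paper's own statement rather than something \cite{Kannan:80:STOC} resolves.
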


On the other hand, when $0 < \lambda < 1$, the set $W$ of vectors $\ket{\psi}$ satisfying \cref{eq:one_loop} does not form a subspace. Instead, when the rank of $P$ is $1$, $W$ is the union of two affine spaces. The reason is as follows:

By the previously mentioned properties (L1) and (L2), the rank of the operator $PU$ in \cref{eq:one_loop} is equal to the rank of $P$. For ease of exposition, assume $PU = P$ and $P = \ket{1}\bra{1}$. For every vector $\ket{\psi}$ in an $n$-dimensional space, we can write $\ket{\psi} = a_1\ket{1} + \dots + a_n \ket{n}$ for some $a_1, \dots, a_n \in \mathbb{R}$. Consequently, $\ket{\psi}$ satisfies $\norm{P\ket{\psi}} = \lambda$ if and only if $a_1 = \lambda$ or $a_1 = -\lambda$, which defines two affine spaces. The ``affine'' orbit problem can be reduced to the original higher-dimensional orbit problem by increasing the dimension of the space by 1 \cite{Chonev:16:JACM}. In \cite{Chonev:13:STOC}, it is shown that when the dimension of the target subspace is 1, 2, or 3, the higher-dimensional orbit problem is decidable. Specifically, the complexity is polynomial when the dimension is 1, and $\complexityFont{NP}^{\complexityFont{RP}}$ when the dimension is 2 or 3, independent of the dimension of the ambient vector space.

Consider \cref{eq:one_loop}. When the rank of $P$ is 0, the formula is clearly never satisfiable for $\lambda > 0$. If the rank of $P$ is 1, the rank of $I-P$ is $n-1$, and vice versa. Furthermore, \cref{eq:one_loop} is satisfiable if and only if the formula:
\[\exists i \in \mathbb{N}. \norm{(I-P)U V^i \ket{\psi}} = 1 - \lambda\] 
is satisfiable. Together with \cref{lem:language_symmetric}, we obtain the following result:

\begin{lemma}\label{lem:one-loop}
Let $n \in \mathbb{N}_{>0}$. With respect to \cref{eq:one_loop}, the emptiness of one-loop \qfac{s} in $\mathcal{H}^n$ is decidable when the rank of $P$ is 0, 1, 2, $n-2$, $n-1$, or $n$.
\end{lemma}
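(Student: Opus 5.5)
The plan is a case split on $r=\mathrm{rank}(P)$, after first reducing everything to the equality predicate in \cref{eq:one_loop}. For ${\bowtie}\in\{>,<\}$ emptiness is already decidable by \cref{thm:decidability_emptiness_full}. For ${\bowtie}\in\{\geq,\leq\}$, the discussion after \cref{lem:flat2chain} shows that non-emptiness of $L_{\bowtie\lambda}$ is the disjunction of non-emptiness of the strict language (decidable) and satisfiability of $\exists i.\ \norm{PUV^i\ket{\psi}}=\lambda$. So it suffices to decide the equality predicate. The two trivial ranks are handled directly.
\begin{itemize}
\item If $r=0$, then $\norm{PUV^i\ket{\psi}}=0$ for every $i$, so the predicate holds iff $\lambda=0$.
\item If $r=n$, then $P=I$, and since $UV^i\ket{\psi}$ is a unit vector the predicate holds iff $\lambda=1$.
\end{itemize}

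For the remaining ranks I would reduce $r\in\{n-1,n-2\}$ to $r\in\{1,2\}$ by complementation. Since $P$ and $I-P$ are orthogonal projections summing to $I$ and $UV^i\ket{\psi}$ is a unit vector, we have $\norm{PUV^i\ket{\psi}}^2+\norm{(I-P)UV^i\ket{\psi}}^2=1$. Hence $\norm{PUV^i\ket{\psi}}=\lambda$ iff $\norm{(I-P)UV^i\ket{\psi}}=\sqrt{1-\lambda^2}$. The paper writes the threshold as $1-\lambda$; I would use whichever matches the norm convention, and only an effectively computable new threshold matters. Moreover $\mathrm{rank}(I-P)=n-r$. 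Together with \cref{lem:language_symmetric}, which lets us switch between a language and its complement, the cases $r=n-1,n-2$ reduce to $r=1,2$ with a modified threshold.

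For $r\in\{1,2\}$ I would reduce to the (affine) higher-dimensional orbit problem.
\begin{enumerate}
\item Write $PU=\sum_{j\le r}\ket{e_j}\bra{u_j}$, where $\bra{u_j}=\bra{e_j}U$ are orthonormal rows. The quantity then depends only on the $r$ coordinates $x_j(i)=\mybraket{u_j\mid V^i\mid\psi}$, that is, on the linear image of the orbit in an $r$-dimensional space.
\item For $r=1$, the condition is $x_1(i)\in\{\lambda,-\lambda\}$, so the target is a union of two affine sets. I would test each sign separately.
\item Homogenize by passing to $V\oplus 1$ acting on $\ket{\psi}\oplus 1$, following \cite{Chonev:16:JACM}. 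This turns each affine target into a linear subspace.
\item Invoke the decidability result of \cite{Chonev:13:STOC} for targets of dimension at most $3$.
\item For $r=2$, first pass to squares, $x_1(i)^2+x_2(i)^2=\lambda^2$. I would linearize this using $V\otimes V$ acting on $\ket{\psi}\otimes\ket{\psi}$, so that the condition becomes a single linear equation in the lifted orbit, and then treat it as in the $r=1$ case.
\end{enumerate}

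The main obstacle is matching the target to the hypothesis of \cite{Chonev:13:STOC}. Read naively, $\{x:\mybraket{u_1\mid x}=\pm\lambda\}$ is an affine hyperplane, which has high dimension; orbit problems with such targets are Skolem-like, not low-dimensional. My first attempt would be to use the transposed or dual formulation, in which the orbit problem is stated with the roles of the orbit and the $r$-dimensional measurement range exchanged. In that formulation the relevant target has dimension $r$ (plus one after homogenization), which stays within $\{1,2,3\}$ for $r\le 2$. If that fails, the fallback is to exploit that $V$ is orthogonal, so all eigenvalues lie on the unit circle. The sequence $x_1(i)$ is then an almost-periodic linear recurrence, and I would analyse its zeros directly, handling the finite-order part of $V$ as in the previous lemma.
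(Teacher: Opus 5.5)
\textbf{There is a genuine gap in your proposal.} Your handling of the strict case, the reduction to the equality predicate, the trivial ranks $0$ and $n$, and the complementation $r\leftrightarrow n-r$ via $I-P$ are fine and coincide with the paper. The gap is the one you yourself call the main obstacle, and neither of your remedies closes it.

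For $r=1$ the condition $\mybraket{u_1\mid V^i\mid\psi}=\pm\lambda$ describes a target of \emph{codimension} one. \cite{Chonev:13:STOC} needs a target of small \emph{dimension}, and no duality exchanges the two. Indeed, $V^i\ket{\psi}\in W$ means that the $\dim W^{\perp}$ sequences $\mybraket{y\mid V^i\mid\psi}$, $y\in W^{\perp}$, vanish simultaneously. A low-dimensional target therefore imposes many simultaneous constraints, whereas a rank-$r$ measurement imposes only $r$ of them.

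Your specific fixes fail for this reason:
\begin{itemize}
\item Transposing merely rewrites the condition as $\mybraket{\psi\mid (V^{\top})^i\mid u_1}=\pm\lambda$, which is still one scalar equation.
\item Your $V\otimes V$ lift for $r=2$ likewise yields a single linear equation, i.e.\ a hyperplane in dimension $n^2+1$ after homogenization.
\end{itemize}

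What remains is a Skolem instance. The question is whether a simple linear recurrence of order up to $n+1$ ever takes a given value. All its characteristic roots lie on the unit circle, but in general they are not roots of unity (e.g.\ the eigenvalues of the rotation with $\cos\theta=3/5$). Your fallback (``analyse its zeros directly'', treating the finite-order part as before) only disposes of the root-of-unity part. No general decision procedure is known for such instances. Even $\lambda=0$, $r=1$ is already of this form.

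You will not find the missing idea in the paper either: its argument is exactly the reading you rejected.
\begin{itemize}
\item It observes that for rank one the set $\{a_1=\pm\lambda\}$ is a union of two affine sets.
\item It homogenizes following \cite{Chonev:16:JACM} and applies the dimension-$\le 3$ result, implicitly equating the dimension of the target with $\mathrm{rank}(P)$.
\item Rank $2$ gets no separate treatment, and ranks $n-1,n-2$ are reduced to $1,2$ by the same complementation you use.
\end{itemize}
That identification is valid only at the endpoints. For $\lambda=1$ the target is the subspace $U^{-1}\mathrm{range}(P)$ of dimension $r$, and for $\lambda=0$ it is $U^{-1}\ker P$ of dimension $n-r$. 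Hence the cited results give decidability only for $\lambda=1$ with $r\le 3$, for $\lambda=0$ with $r\ge n-3$, and for $r\in\{0,n\}$.

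For general $n$, the following cases remain unproved by both your proposal and the paper's argument:
\begin{itemize}
\item $0<\lambda<1$ with $r\in\{1,2,n-2,n-1\}$;
\item $\lambda=0$ with $r\in\{1,2\}$;
\item $\lambda=1$ with $r\in\{n-2,n-1\}$.
\end{itemize}
All of these lead to targets of small codimension, i.e.\ Skolem-type problems.
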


Recall that a \qfa is a \qfac with one state. Therefore, a unary \qfa (where $\Sigma$ is a singleton) is a flat \qfac. By \Cref{lem:one-loop}, we obtain the following:

\begin{corollary}
    Given $n \in \positive$, the non-strict emptiness of a unary \qfa in $\mathcal{H}^n$ is decidable when the rank of the projection matrix is 0, 1, 2, $n-2$, $n-1$, or $n$.
\end{corollary}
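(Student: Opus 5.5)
The plan is to cast a unary \qfa as a one-loop \qfac and then invoke \Cref{lem:one-loop}. Let $\mathcal{B}=(\{a\},\ket{\psi_0},U_a,P)$ be a unary \qfa in $\mathcal{H}^n$. Its only words are $a^i$ with $i\in\mathbb{N}$, and $\val_{\mathcal{B}}(a^i)=\norm{P U_a^i\ket{\psi_0}}$. The underlying graph is a single state carrying a self-loop labelled $a$. That graph is one simple loop, so $\mathcal{B}$ is a chain \qfac with exactly one loop. Choosing $U=I$, $V=U_a$ and $\ket{\psi}=\ket{\psi_0}$, the question whether $L_{=\lambda}(\mathcal{B})\neq\emptyset$ is exactly the satisfiability of \cref{eq:one_loop}. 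The rank of $P$ here is the rank of the \qfa's projection matrix.

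Next I handle the operators. For ${\bowtie}\in\{\geq,\leq\}$, as noted before \cref{eq:flat_ell_loops}, $L_{\geq\lambda}(\mathcal{B})\neq\emptyset$ holds iff $L_{>\lambda}(\mathcal{B})\neq\emptyset$ or $L_{=\lambda}(\mathcal{B})\neq\emptyset$. The strict part is decidable by \Cref{thm:decidability_emptiness_full}. The $\leq$ case is symmetric, using $<$ and \Cref{lem:language_symmetric}. So everything reduces to deciding \cref{eq:one_loop} when the rank of $P$ lies in $\{0,1,2,n-2,n-1,n\}$, and this is what \Cref{lem:one-loop} provides.

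The only point needing care is that \Cref{lem:one-loop} is stated for one-loop \qfac{s}, which are generally pictured with a prefix path $u_0$ before the loop. A unary \qfa has no such prefix, so I have to check that the degenerate case with an empty prefix and a loop on the initial state is covered. In that case $U_{u_0}=I$, and I would take $u_1$ empty as well, so $U=I$ in \cref{eq:one_loop}.

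The completeness convention also needs a check. A unary \qfa is already complete, because its single state has an outgoing $a$-transition, so no sink-state modification from \Cref{app:complete} is needed. I therefore expect no real obstacle: the corollary is a direct specialization of \Cref{lem:one-loop}.
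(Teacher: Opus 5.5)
Your proposal is correct and follows the paper's argument: the paper also observes that a unary \qfa is a one-state flat (one-loop) \qfac and applies \Cref{lem:one-loop}, which amounts to your instantiation $U=I$, $V=U_a$, $\ket{\psi}=\ket{\psi_0}$ in \cref{eq:one_loop}. Your extra checks (empty prefix, completeness, and splitting the non-strict case into the strict case plus $L_{=\lambda}$) only spell out steps the paper leaves implicit.
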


\subsection{Complexity Lower Bounds}\label{sec:complexity}

In the rest of this section, we provide complexity lower bounds for the emptiness and universality problems for \qfac{s}.

The intersection non-emptiness problem for DFAs is \PSPACE-complete \cite{Kozen:77:SFCS}. Since the complement of a DFA can be obtained by swapping its accepting and rejecting states, the union universality problem for DFAs is \PSPACE-complete as well. Specifically, given DFAs $\mathcal{A}_1, \dots, \mathcal{A}_k$ over $\Sigma$, deciding whether $\bigcup_{i=1}^k L(\mathcal{A}_i) = \Sigma^*$ is \PSPACE-complete. This union is accepted by an NFA constructed by combining the individual DFAs with a newly introduced initial state. For every $(\bowtie, \lambda) \in \cmp \times [0, 1]$, we can construct a \qfac $\mathcal{A}$ that simulates this NFA such that $L_{\bowtie \lambda}(\mathcal{A}) = \bigcup_{i=1}^k L(\mathcal{A}_i)$. Accordingly, the universality problem for \qfac{s} is \PSPACE-hard even when the dimension is 2. The detailed proof is provided in \Cref{app:complexity}.
By \Cref{lem:language_symmetric}, it follows that the emptiness problem for \qfac{s} satisfies the same complexity lower bound.

\begin{theorem}
    The emptiness for \qfac{s} is \PSPACE-hard, even when the dimension is 2. 
\end{theorem}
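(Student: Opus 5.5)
We reduce from union universality for DFAs, which is \PSPACE-complete by Kozen's result. Given DFAs $\mathcal{A}_1,\dots,\mathcal{A}_k$ over $\Sigma$, we first build the NFA $\mathcal{N}$ for $\bigcup_i L(\mathcal{A}_i)$. It has a fresh initial state $s$ and, for each $a\in\Sigma$ and each $i$, a transition $(s,a,\delta_i(q_0^i,a))$. Acceptance of $\varepsilon$ is handled by declaring $s$ accepting iff some $q_0^i$ is accepting. This construction is polynomial. We then turn $\mathcal{N}$ into a $2$-dimensional \qfac $\mathcal{B}$ with the same underlying automaton, every $U_a$ equal to the identity, and initial vector $\ket{\psi_0}$ and projections $P_q$ chosen as in the argument preceding \Cref{lem:QFAC_subsumbs_regular}.

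The point to verify is that this simulation remains correct for a \emph{nondeterministic} automaton, where $\val_\mathcal{B}(w)$ is a maximum over runs. For $\bowtie\in\{\geq,>\}$, take $\ket{\psi_0}=(1,0)^\top$. Let $P_q$ project onto the first coordinate when $q$ is accepting and be null otherwise. Then every run ends with value $1$ or $0$, and the maximum is $1$ iff some run accepts, i.e.\ iff $w\in L(\mathcal{N})$. Hence $L_{\bowtie\lambda}(\mathcal{B})=L(\mathcal{N})$ for all non-trivial $(\bowtie,\lambda)$: for $\geq$ we have $\lambda>0$, and for $>$ we have $\lambda<1$.

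For $\bowtie\in\{\leq,<\}$ the maximum works against us, so we complement. Take $\ket{\psi_0}=(\sqrt{\lambda/2},\sqrt{(2-\lambda)/2})^\top$, and let $P_q$ be the projection onto the first coordinate (value $\lambda/2$) when $q$ is accepting and the identity (value $1$) otherwise. The maximum over runs is then $\leq\lambda/2$ iff all runs end accepting. This again fails to express the union directly. The cleaner route is to avoid the $\leq,<$ cases altogether via \Cref{lem:language_symmetric}: we have $L_{\bowtie\lambda}(\mathcal{B})=\emptyset$ iff $L_{\Bar{\bowtie}\lambda}(\mathcal{B})=\Sigma^*$.

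So it suffices to show universality \PSPACE-hard for $\Bar{\bowtie}$. For emptiness with $\bowtie\in\{<,\leq\}$, the operator $\Bar{\bowtie}$ lies in $\{\geq,>\}$, and the construction above gives $\bigcup_i L(\mathcal{A}_i)=\Sigma^*$ iff $L_{\Bar{\bowtie}\lambda}(\mathcal{B})=\Sigma^*$ iff $L_{\bowtie\lambda}(\mathcal{B})=\emptyset$.

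For emptiness with $\bowtie\in\{>,\geq\}$, we instead reduce from DFA intersection non-emptiness directly. This is the main obstacle: the max-semantics must now encode an intersection. The idea is to complement each DFA, so that $\bigcap_i L(\mathcal{A}_i)=\emptyset$ iff $\bigcup_i L(\overline{\mathcal{A}_i})=\Sigma^*$. We then build the union NFA of the complemented DFAs with the $(1,0)^\top$ construction above, and finally apply \Cref{lem:language_symmetric} once more to trade universality of $L_{\bowtie'\lambda}$ with $\bowtie'\in\{\geq,>\}$ for emptiness of the complementary operator. Together the two cases cover all four operators. All these reductions are polynomial and use dimension $2$ only. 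Trivial pairs are excluded, which guarantees $\lambda\in(0,1]$ or $[0,1)$ as needed for the $0/1$ values to straddle $\lambda$.
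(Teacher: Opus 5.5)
Your argument for $\bowtie\in\{<,\leq\}$ is correct and is essentially the paper's proof. You simulate the union NFA by a 2-dimensional QFAC with identity unitaries, note that $L_{\Bar{\bowtie}\lambda}(\mathcal{B})$ is exactly the union, and pass from universality to emptiness with \Cref{lem:language_symmetric}. The paper does this for $(<,1/2)$ with $\ket{\psi_0}=(\sqrt{2/3},\sqrt{1/3})^\top$. Your union NFA, with transitions $(s,a,\delta_i(q_0^i,a))$ and $s$ accepting iff some $q_0^i$ is, is more careful than the appendix construction. There, the transitions $(q^{[0]}_0,a,q^{[i]}_0)$ consume a letter before entering $\mathcal{A}_i$, and since $P_{q^{[0]}_0}$ is null, $\varepsilon$ is rejected. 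This half alone proves the theorem as stated.

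The second case, for $\bowtie\in\{>,\geq\}$, fails, and you should drop the claim that all four operators are covered. After complementing the DFAs you get $\bigcap_i L(\mathcal{A}_i)=\emptyset$ iff $L_{\bowtie'\lambda}(\mathcal{B}')=\Sigma^*$ with $\bowtie'\in\{\geq,>\}$. \Cref{lem:language_symmetric} then converts this into emptiness of $L_{\Bar{\bowtie'}\lambda}(\mathcal{B}')$ with $\Bar{\bowtie'}\in\{<,\leq\}$, i.e.\ the same operators as in your first case. The detour through intersection non-emptiness is therefore circular and says nothing about $>$ or $\geq$.

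No variant of this construction can repair it. When every $U_a=I$, the value of a run depends only on its last state. So $L_{>\lambda}(\mathcal{B})\neq\emptyset$ (resp.\ with $\geq$) iff some reachable state $q$ has $\norm{P_q\ket{\psi_0}}>\lambda$ (resp.\ $\geq\lambda$), which is plain graph reachability and decidable in polynomial time. Identity-unitary simulations can thus only give \textsc{PSpace}-hardness for $<$ and $\leq$, where the universal quantification over runs does the work; this is the subset-construction blow-up discussed in \Cref{sec:deci_qfac}. Hardness for $>$ or $\geq$ would need genuinely non-trivial unitaries, and the paper's proof does not establish it either.
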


Next, we derive the \NP complexity lower bound for the emptiness problem of flat \qfac{s}. Given a directed graph $G$, a path in $G$ is a \emph{Hamiltonian path} if it visits every node in $G$ exactly once. It is known that the \emph{Hamiltonian path problem} is \NP-complete. Suppose $G$ has $n$ nodes $\{v_1, \dots, v_n\}$. Let $S = \{\theta_1, \dots, \theta_n\}$ be a set of $n$ distinct values in $\mathbb{R}$ such that the sum of $n$ values from $S$ equals $\frac{\pi}{2}$ if and only if each value in $S$ appears exactly once. Thus, if the underlying structure of a \qfac $\mathcal{A}$ is identical to $G$, and entering node $v_i$ rotates the quantum state by an angle $\theta_i$ in the $xy$-plane, then $G$ has a Hamiltonian path if and only if there exists a run of length $n$ starting from state $(1,0)^\top$ and ending in $(0,1)^\top$ (i.e., achieving a total rotation of exactly $\frac{\pi}{2}$). 
In general, such an $\mathcal{A}$ is not necessarily flat. However, this can be addressed by constructing $n$ layers $\mathcal{A}_1, \dots, \mathcal{A}_n$, intuitively, each being a copy of $G$, and enforcing that transitions from $\mathcal{A}_i$ only lead to $\mathcal{A}_{i+1}$ for each $i < n$. This layered structure ensures flatness. Accordingly, we can derive the \NP-hardness of the emptiness problem for flat \qfac{s} on $\mathcal{H}^2$. The detailed proof is provided in \Cref{app:complexity}.

\begin{restatable}{theorem}{thmEmpNp}\label{thm:emp_NP}
    The emptiness problem for flat \qfac{s} is \NP-hard.
\end{restatable}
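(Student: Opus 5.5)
We reduce from the Hamiltonian path problem, following the sketch given before the statement. Let $G=(V,E)$ be a directed graph with $V=\{v_1,\dots,v_n\}$. We first pick angles $\theta_1,\dots,\theta_n$ so that a multiset of exactly $n$ of them sums to $\frac{\pi}{2}$ if and only if each $\theta_i$ is used exactly once. A concrete choice with polynomial bit size is $\theta_i=\frac{\pi}{2}\cdot\frac{(n+1)^{i}}{N}$ with $N=\sum_{j=1}^n (n+1)^j$. Writing $c_i\ge 0$ for the multiplicity of $\theta_i$, with $\sum_i c_i=n$, we have $\sum_i c_i\theta_i=\frac{\pi}{2}$ iff $\sum_i c_i (n+1)^i=N$. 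Since each $c_i\le n<n+1$, both sides are base-$(n+1)$ expansions, so uniqueness of the expansion forces $c_i=1$ for all $i$. In the construction we only ever compose rotations by $n$ such angles, so the total angle lies in $[0,n\cdot\frac{\pi}{2}]$. We must also exclude sums congruent to $\frac{\pi}{2}$ modulo $2\pi$, and sums equal to $\frac{3\pi}{2}$ (because $\norm{\cdot}$ only sees $|\sin|$). To handle both, we scale every angle by $\frac{1}{n}$ and demand total $\frac{\pi}{2n}$; all totals then lie in $[0,\frac{\pi}{2}]$, where $|\sin|$ is injective. The target becomes $\norm{P R_\alpha (1,0)^\top}=\sin(\frac{\pi}{2n})$, with $P$ the projection onto the second coordinate. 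To stay within the algebraic or rational setting one may use rotation matrices with algebraic entries. We do not dwell on this, since the threshold $\lambda=\sin^2$ or $\sin$ of the target angle is itself algebraic.

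Next we build the flat \qfac $\mathcal{A}$ in $\mathcal{H}^2$ with $\ket{\psi_0}=(1,0)^\top$. Its states are $q_0$ together with layered copies $(v,k)$ for $v\in V$ and $k\in\{1,\dots,n\}$. We use alphabet $\Sigma=\{a_1,\dots,a_n\}$, with $U_{a_i}$ the rotation by the scaled $\theta_i$. The transitions are $(q_0,a_i,(v_i,1))$ for all $i$, and $((v_j,k),a_i,(v_i,k+1))$ whenever $(v_j,v_i)\in E$ and $k<n$. This graph is acyclic, so every SCC is a single vertex and $\mathcal{A}$ is flat. We set $P_{(v,n)}$ to be the projection onto the second coordinate, and take the null matrix at all other states. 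With the non-strict or equality-style test $\geq\lambda$ and $\lambda=\sin(\frac{\pi}{2n})$, the argument is as follows. Any accepting run has length exactly $n$ and visits a walk $v_{i_1}\dots v_{i_n}$ of $G$, and the quantum state ends at $R_{\sum_k\theta_{i_k}}(1,0)^\top$. Since the total angle lies in $[0,\frac{\pi}{2}]$ and the second coordinate is monotone there, the acceptance probability is at least $\lambda$ iff the angle sum is at least $\frac{\pi}{2n}$. This is not yet equality, so one more step is needed.

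The main obstacle is that the sketch matches ``equal to $\frac{\pi}{2}$'', whereas the emptiness problem uses inequalities, and the \NP-hardness should hold for the comparison operators in question. For the non-strict case we enforce equality with a two-sided test. Since the acceptance probability is a maximum over runs, we cannot intersect two inequalities directly. Instead we use the standard trick of making the target a single point: the final state must equal $(0,1)^\top$ rotated suitably, so we project onto the orthogonal direction and require the norm to be $\le 0$, that is, $L_{\leq 0}$ with a complement argument via \Cref{lem:language_symmetric}. Concretely, we append a fixed final rotation by $-\frac{\pi}{2n}$ and project onto the second coordinate. A word is then rejected with value $0$ exactly when the angle sum equals $\frac{\pi}{2n}$, which by the choice of angles means the walk is a Hamiltonian path. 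This shows non-emptiness of $L_{\leq 0}$, and by symmetry of $\leq 0$ versus $>0$, \NP-hardness of emptiness or universality. For strict operators we instead exploit the discreteness of the finitely many length-$n$ sums. Any non-Hamiltonian multiset sum differs from $N$ by at least $1$ in the integer encoding, so it deviates from the target angle by at least $\delta=\frac{\pi}{2nN}$. We then choose a rational-ish threshold strictly between $\cos^2(\delta)$ and $1$ for the projection onto the target direction. This gives $L_{>\lambda}\neq\emptyset$ iff $G$ has a Hamiltonian path, and the remaining operators follow from \Cref{lem:language_symmetric}. Finally we check that all of these constructions are polynomial in $|G|$.
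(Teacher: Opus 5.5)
\paragraph{Verdict.} Your reduction for ${>}$ is correct and is the paper's argument: Hamiltonian path, one rotation letter per vertex, $n$ layered copies of $G$ making the automaton acyclic and hence flat, and a threshold just below the value of the exact target. It therefore establishes the statement in the form the paper actually proves. Your base-$(n+1)$ weights and explicit gap $\delta=\frac{\pi}{2nN}$ are in fact more careful than the paper. The paper's example angles $\theta_i\propto i+1$ form an arithmetic progression, so $\theta_0+\theta_3=\theta_1+\theta_2$ and a non-Hamiltonian walk can reach the Hamiltonian total.

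\paragraph{Where it fails: the other operators.} In your $L_{\leq 0}$ construction every non-final state keeps the null projection. So the empty word has acceptance probability $0$, as does every word whose run stops before the last layer or enters the completion sink. All of these lie in $L_{\leq 0}(\mathcal{A})$ whatever $G$ is, so the claimed equivalence with Hamiltonicity is false. Moreover, \Cref{lem:language_symmetric} only trades emptiness of $L_{\bowtie\lambda}$ for universality of $L_{\bar{\bowtie}\lambda}$. It does not carry emptiness-hardness from $>$ to $<$ or $\leq$.

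Both points are repaired by a fact you did not use: your automaton is deterministic (reading $a_i$ always leads to $(v_i,k+1)$). Each word therefore has a single run, and the maximum over runs plays no role. Since no acceptance value lies in $(\cos^2\delta,1)$, your $\lambda$ already gives $\val(w)\geq\lambda\iff\val(w)>\lambda$. This settles $\geq$ with no equality test at all. For $<$ and $\leq$, replace every $P_q$, the sink's included, by $I-P_q$ as in \Cref{lem:dqfa_sym}. This yields $\mathcal{B}$ with $L_{>\lambda}(\mathcal{A})=L_{<1-\lambda}(\mathcal{B})$ and $L_{\geq\lambda}(\mathcal{A})=L_{\leq 1-\lambda}(\mathcal{B})$. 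Equivalently, your $L_{\leq 0}$ idea works once the non-final states carry the identity projection.

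\paragraph{A smaller point: polynomial size.} The rational coefficients of $\pi$ in your angles have polynomial bit size. However, the entries of $U_{a_i}$ are $\cos$ and $\sin$ of $\pi(n+1)^i/(2nN)$, whose reduced denominators are exponential in $n$. These are algebraic numbers of exponential degree, so the automaton cannot be written down exactly in polynomial time.

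Your gap makes this harmless for the threshold versions. Replace each rotation, including the appended correction by $-\frac{\pi}{2n}$, with the rational rotation $\frac{1}{1+t^2}\begin{pmatrix}1-t^2 & -2t\\ 2t & 1-t^2\end{pmatrix}$. Here $t$ is a rational of $O(n\log n)$ bits such that $2\arctan t$ is within $\delta/(10(n+1))$ of the intended angle. Hamiltonian words then end within $\delta/10$ of the target direction, and all others at least $9\delta/10$ away. A rational threshold of polynomial bit size therefore still separates them, for all four operators. The exact-equality route through $L_{\leq 0}$ does not survive this approximation, which is one more reason to drop it.
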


Note that the \qfac constructed in the above theorem is one-loop and its set of matrices generate a finite group. Therefore, the complexity in \cref{thm:emp_NP} extends to one-loop \qfac and flat and finite-group \qfac. 

For any one-loop \qfac ${\cal A} = (\Sigma,Q,q_{0},\ket{\psi_{0}},(U_{a}),(P_{q}))$, we can construct $\abs{Q}$ many unary \qfa $({\cal A}_{q})_{q\in Q}$ such that $L({\cal A}) = \bigcup_{q\in Q}L({\cal A}_{q})$. For each $q\in Q$, ${\cal A}_{q}$ captures the word that ends at state $q$. If $q$ is before or on the looping state then the construction of ${\cal A}_{q}$ is trivial. For state $q$ that is after the looping state, the initial quantum state is fixed to be a unique state just before entering the loop in ${\cal A}$ and unitary matrix is chosen to be a looping unitary matrix of ${\cal A}$. 
For a projection $P$, let $w$ be the word such that ${\cal A}$ loops on the looping state while reading $w$ and then moves to $q$, and define $P' = U_{w}^{\dagger}P U_{w}$. Since $U_w^\dagger$ is unitary, it preserves the norm, and hence $\norm{P U_{w} \ket{\psi}} =\norm{U^{\dagger}_{w}PU_{w}\ket{\psi}}$. Therefore, the accepting probability is preserved for any word that reaches $q\in Q$. This yields the following proposition.

\begin{proposition}\label{prop:oneLoop_unary}
    For any one-loop \qfac ${\cal A}=(\Sigma,Q,q_{0},\ket{\psi_{0}},(U_{a}),(P_{q}))$, it can be decomposed into $\abs{Q}$ many unary \qfa ${\cal A}_{q}$ such that $L({\cal A}) = L({\cal A}_{q})$.
\end{proposition}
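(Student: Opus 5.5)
The plan is to assume the one-loop \qfac is already a chain $q_0\to\dots\to q_s\to\dots$ with a single simple loop at the looping state $q_s$, reading a word $v$ (as in \cref{lem:flat2chain}). Non-loop transitions can be taken at most once, so every run is determined by the index of its final state and, if that state lies at or after $q_s$, by the number $i\in\mathbb{N}$ of loop iterations. I read the statement as saying $L_{\bowtie\lambda}({\cal A})=\bigcup_{q\in Q}$ of the accepting sets contributed by the ${\cal A}_q$. Each ${\cal A}_q$ is understood as a unary \qfa over a fresh letter, paired with the obvious encoding of the words of ${\cal A}$ ending in $q$. Emptiness is then preserved: $L_{\bowtie\lambda}({\cal A})=\emptyset$ iff every ${\cal A}_q$ has empty language. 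For strict $>$ (and the non-strict operators handled via the $=\lambda$ reduction above) emptiness of the maximum over runs is an existential over runs, so the union decomposition suffices. For $<$ one would instead use the subset decomposition of \cref{eq:decomposition_lt}, which for a chain just intersects finitely many conditions on the same $i$.

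The ${\cal A}_q$ are built by cases. If $q$ precedes $q_s$, only one word $u$ reaches $q$. I take ${\cal A}_q$ with initial state $U_u\ket{\psi_0}$, identity unitary, projection $P_q$, and restrict attention to the empty word (or, equivalently, let its language be either $\emptyset$ or everything). If $q=q_s$, I take initial state $\ket{\psi'}=U_{u_0}\ket{\psi_0}$, unitary $U_v$ and projection $P_{q_s}$, so the word $a^i$ corresponds to $u_0v^i$. If $q$ follows $q_s$ via the exit path $u'$, I keep $\ket{\psi'}$ and $U_v$ and set $P'=U_{u'}^\dagger P_q U_{u'}$. The identity $\norm{P_qU_{u'}U_v^i\ket{\psi'}}=\norm{U_{u'}^\dagger P_qU_{u'}U_v^i\ket{\psi'}}$ holds because $U_{u'}^\dagger$ is unitary and so norm-preserving. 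One point needs checking: $P'$ is a projection (idempotent and symmetric), but it is not necessarily diagonal in the standard basis, whereas the paper earlier restricts projections to coordinate subspaces. I would fix this by conjugating the whole automaton, initial state and $U_v$ included, by an orthogonal change of basis $O$ that diagonalises $P'$. Doing so preserves all probabilities.

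The final step verifies that $\val_{\cal A}(u_0v^iu')$ corresponds exactly to $\val_{{\cal A}_q}(a^i)$ for every $i$, and that the maximum over runs in ${\cal A}$ becomes the maximum over $q$. The main obstacle is making the phrase ``decomposed such that $L({\cal A})=L({\cal A}_q)$'' precise. Because the languages live over different alphabets, I would phrase the claim via the morphism $a^i\mapsto u_0v^iu'$ and state it as equivalence of emptiness rather than literal equality. I expect this reformulation to be the delicate part, especially handling the universal quantifier for $<$.
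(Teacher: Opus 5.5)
This is essentially the paper's construction: each ${\cal A}_q$ starts in $U_{u_0}\ket{\psi_0}$, the quantum state just before entering the loop, iterates the loop unitary $U_v$, and measures with $U_{u'}^\dagger P_q U_{u'}$, justified because $U_{u'}^\dagger$ preserves norms; your repair for non-coordinate projections is also sound, and the change of basis can simply be $U_{u'}$ itself, i.e.\ use the unitary $U_{u'}U_vU_{u'}^\dagger$, initial state $U_{u'}U_{u_0}\ket{\psi_0}$ and projection $P_q$. Your worry about the universal quantifier is justified and goes beyond the paper, whose union decomposition only works for the existential operators $>$ and $\geq$: for $<$, and also for $\leq$ (which is universal over runs and is not rescued by the $=\lambda$ reduction), each ${\cal A}_q$ sees only the runs ending in $q$, so $\bigcup_q L({\cal A}_q)$ can be nonempty while $L({\cal A})$ is empty; your intersection-based alternative therefore proves a different statement, and it would also have to re-index the loop counts, since one word can traverse the loop different numbers of times on runs ending in different states (e.g.\ when $u'=v^k$).
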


As an immediate consequence, we have the following corollary.

\begin{corollary}
    The emptiness problem for unary \qfa{s} is \NP-hard.
\end{corollary}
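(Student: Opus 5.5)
The corollary follows from \Cref{thm:emp_NP} together with \Cref{prop:oneLoop_unary}. The reduction is a polynomial-time many-one reduction from the emptiness problem for one-loop \qfac{s} to the emptiness problem for unary \qfa{s}. By the remark after \Cref{thm:emp_NP}, one-loop \qfac{s} are already \NP-hard, so composing the reductions from the Hamiltonian path problem yields \NP-hardness for unary \qfa{s}.

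\textbf{Step 1.} Fix a one-loop instance ${\cal A}$ with threshold $(\bowtie,\lambda)$. Because there is a single loop, every accepting word has the form $u\,v^{i}\,u'$, where $u$ and $u'$ range over finitely many simple-path words.

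\textbf{Step 2.} Apply \Cref{prop:oneLoop_unary} to obtain the unary \qfa{s} ${\cal A}_q$ for $q \in Q$. For a state $q$ on or before the loop, there are only finitely many words, each of whose acceptance probability we can compute and test directly. For $q$ after the loop, we use the following data:
\begin{compactitem}
\item initial state $U_u\ket{\psi_0}$;
\item unary unitary $U_v$;
\item projection $U_{u'}^\dagger P_q U_{u'}$.
\end{compactitem}
With these, acceptance of $a^i$ by ${\cal A}_q$ coincides with acceptance of $u v^i u'$ by ${\cal A}$.

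\textbf{Step 3.} Conclude that $L_{\bowtie\lambda}({\cal A}) \neq \emptyset$ if and only if some ${\cal A}_q$ is non-empty.

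Emptiness of a union is a conjunction of emptiness checks, so this is a truth-table reduction rather than a single instance. To obtain a many-one reduction, I would use the Hamiltonian-path instance from \Cref{thm:emp_NP} directly. There, the NP-hard instance has its only non-null projection at a single final state. After the normalization used for chains, only one ${\cal A}_q$ matters, so one output unary \qfa suffices.

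\textbf{Main obstacle: polynomial size.} The layered construction has $n$ layers of $G$, so a priori the number of paths $u$ into the loop can be exponential. If the loop contributes essentially nothing, I would instead pick the unary letter to encode the whole path. The more robust alternative is to check that, in the construction of \Cref{thm:emp_NP}, the loop can sit at the final sink, with $u$ ranging over all length-$n$ paths. That still leaves an exponential family, so I would argue hardness with the unary instance taken to be a fixed small instance per path. If that fails, I would fall back on showing that the reduction of \Cref{prop:oneLoop_unary} is applied to a constructed one-loop \qfac whose pre-loop part is a single path.

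Concretely, I would try a direct reduction from a known NP-hard problem such as subset sum. Encode the problem via $U_v$ as a block-diagonal rotation whose angle on block $k$ is $2\pi/p_k$ for distinct primes $p_k$. Then $U_v^i$ ranges over a CRT-indexed finite group, and choose the projection and threshold so that some $i$ hits the target exactly when the instance is positive. This is the step I expect to need the most care, in particular making the unary instance polynomial in size.
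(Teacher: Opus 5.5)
Your Steps 1--3 are essentially the paper's own derivation. The paper obtains the corollary as an ``immediate consequence'' of \Cref{thm:emp_NP} (whose instance it calls one-loop) together with \Cref{prop:oneLoop_unary}. As you noticed yourself, this is not a polynomial-time reduction, and nothing in your proposal repairs it. The paper's one-line derivation passes over this point, so following it does not resolve it either.

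The genuine gap is the following. In the construction of \Cref{thm:emp_NP}, transitions only go from layer $j$ to layer $j+1$. So, apart from the completion sink (whose projection is null), there is no loop at all. The words reaching $q^{[m]}_m$ are the exponentially many length-$m$ path words. The decomposition of \Cref{prop:oneLoop_unary} fixes a single entry word and a single exit word, so it needs a separate unary automaton (or a separate direct check) for each such word.

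This breaks several of your steps:
\begin{itemize}
\item Your claim that ``only one $\mathcal{A}_q$ matters, so one output unary QFA suffices'' does not hold. The chain normalisation you invoke comes from \Cref{lem:flat2chain}, which itself splits the layered automaton into one chain per path.
\item The fallback of one small instance per path still gives exponentially many instances.
\item A one-loop QFAC whose pre- and post-loop parts are single paths is already a unary QFA once $U_u$ is absorbed into the initial vector and $U_{u'}$ into the projection. That fallback is therefore circular.
\item Your final CRT idea points the right way, but it is only announced. Choosing $P$, $\ket{\psi}$ and $\lambda$ so that $\|PU^i\ket{\psi}\|^2$ encodes the instance is exactly the missing content.
\item With a single $2\times 2$ rotation by $2\pi/p_k$ per prime, block $k$ contributes to any linear functional a sinusoid sampled at $p_k$ points. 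Such a sinusoid takes at least three distinct values once $p_k\ge 5$, so an exact-hit condition no longer encodes one binary choice per block.
\end{itemize}

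One way to close the gap is to replace rotations by cyclic shifts and bypass \Cref{prop:oneLoop_unary} entirely.
\begin{itemize}
\item Let $p_1,\dots,p_n$ be the first $n$ primes. Take blocks $\mathbb{R}^{p_k}$ with bases $e^{(k)}_0,\dots,e^{(k)}_{p_k-1}$, plus one extra coordinate $e_\ast$.
\item Let $U = 1\oplus\bigoplus_k C_{p_k}$, where $C_p e_r = e_{(r+1)\bmod p}$, and let $\ket{\psi} = (n+1)^{-1/2}(e_\ast+\sum_k e^{(k)}_0)$.
\item Then $U^i\ket{\psi}$ is the one-hot encoding of the tuple $(i\bmod p_k)_k$, and by the CRT every residue tuple occurs.
\item For a subset-sum instance $(a_1,\dots,a_n;T)$, put $w = -Te_\ast+\sum_k a_k e^{(k)}_1$ and $P = ww^\top/\|w\|^2$.
\item Then $\|PU^i\ket{\psi}\|^2 = (\sum_k a_k x_k - T)^2/((n+1)\|w\|^2)$, where $x_k=1$ iff $i\equiv 1 \pmod{p_k}$. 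Every $x\in\{0,1\}^n$ is realised.
\end{itemize}
Hence $L_{\le 0}\neq\emptyset$ iff the instance is solvable. Equivalently, $L_{<\lambda}\neq\emptyset$ for $\lambda = 1/(2(n+1)\|w\|^2)$ iff the instance is solvable, since otherwise all values are at least $1/((n+1)\|w\|^2)$. Replacing $P$ by $I-P$ and $\lambda$ by $1-\lambda$ covers $\ge$ and $>$.

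The dimension is $1+\sum_k p_k = O(n^2\log n)$, and $U$ and $P$ have rational entries. This gives a polynomial many-one reduction producing a single unary QFA. If a standard-basis projection is required, conjugate by an orthogonal map sending $w/\|w\|$ to a basis vector.
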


The results of this section are summarized in \Cref{tb:summation}, excluding the trivial $(\bowtie,\lambda)$.

\section{An NIP-Based Algorithm for One-Loop \qfac}\label{sec:algo}
In this section, we present an approach for checking language emptiness of one-loop \qfac based on \emph{nonlinear integer programming (NIP)}. We present a systematic way of constructing potential witnesses based on the Taylor approximation of the cosine function, and we show that the witness is always sound and \emph{uniform semi-complete} for strict inequalities. The search is based on solving a series of NIP problems with incremental degree of approximation. We implemented a bounded searching algorithm and tested it over various examples, the results show that the algorithm scales to non-trivial instances.

The section is organized as follows: in~\Cref{subsec:nip_encoding}, we present a reduction from the non-emptiness problem of a one-loop \qfac into an NIP problem instance which involves a linear combination of cosine functions. Then in~\Cref{subsec:relaxation}, we propose a polynomial relaxation by exploiting properties of the cosine function. We show that the relaxation provides a sound witness for the original problem, and the witness is uniform semi-complete with strict inequalities. Finally in~\Cref{subsec:implementation}, we present our implementation details and some experimental results.

\subsection{NIP Encoding for Emptiness Problem of One-Loop \qfac}\label{subsec:nip_encoding}
We start with the reduction from the emptiness problem of a one-loop \qfac into an NIP problem instance. By~\cref{prop:oneLoop_unary}, it suffices for us to consider unary \qfa, and it will simplify the presentation as well. Let ${\cal A} = (U,P,\ket{\psi_{0}})$ be a unary \qfa, the alphabet and states are both singleton and thus omitted. The emptiness problem for ${\cal A}$ is equivalent to the satisfiability of the following formula:
\begin{equation}\label{eqt:matrix_form}
    \exists t\in\mathbb{N}, \norm{P U^{t}\ket{\psi_{0}}}\bowtie \lambda.
\end{equation}

By the Eigen-decomposition of $U$ and some simple algebraic manipulation, the term $\norm{P U^{t}\ket{\psi_{0}}}$ can be equivalently expressed using only real numbers and trigonometry functions, as stated in the following lemma.
\begin{restatable}{lemma}{CosLinearForm}\label{prop:cos_linear_form}
Let $P$ be a projection matrix, $U$ be a unitary matrix and $\psi$ be an initial state vector. Then for any $t\in\mathbb{N}$, 
    \begin{equation}\label{eq:cos_linear_form}
        \norm{P U^{t}\ket{\psi}} = C + \sum_{1\leq i < j\leq n}a_{ij}\cos((\theta_{j}-\theta_{i})t+\phi_{ij}),
    \end{equation}
    where $C, (\theta_{i})_{i}, (\phi_{ij})_{i,j}\in\mathbb{R}$ and $a_{ij}\in\mathbb{R}_{>0}$ are constants determined by $P, U, \psi$ and can be effectively computed in polynomial time in the size of $U$.
\end{restatable}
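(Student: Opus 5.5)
We will prove the lemma by diagonalising $U$ and expanding the squared norm. Here $\norm{\cdot}$ is read as the squared norm, so that it is a probability; this is consistent with the $\cos^2(n)$ computed in \Cref{eg:A_odd}.

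\textbf{Step 1: diagonalise $U$.} Since $U$ is unitary (possibly viewed over $\mathbb{C}$ before the real embedding), it is normal. Hence $U = \sum_{k=1}^{n} e^{\mathrm{i}\theta_k}\ket{e_k}\bra{e_k}$ for an orthonormal eigenbasis $\ket{e_k}$ of $\mathbb{C}^n$ and real angles $\theta_k$. Write $\ket{\psi}=\sum_k c_k\ket{e_k}$ with $c_k=\mybraket{e_k\mid\psi}$. Then $U^t\ket{\psi}=\sum_k c_k e^{\mathrm{i}\theta_k t}\ket{e_k}$.

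\textbf{Step 2: expand the squared norm.} Using $P^\dagger P=P$ we get
\[
\mybraket{\psi \mid (U^t)^\dagger P U^t\mid \psi}=\sum_{i,j}\overline{c_i}c_j\, e^{\mathrm{i}(\theta_j-\theta_i)t}\,\mybraket{e_i\mid P\mid e_j}.
\]
Set $z_{ij}=\overline{c_i}c_j\mybraket{e_i\mid P\mid e_j}$. Since $P$ is Hermitian, $z_{ji}=\overline{z_{ij}}$.

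\textbf{Step 3: collect terms.}
\begin{itemize}
\item The diagonal terms give the constant $C=\sum_k |c_k|^2\mybraket{e_k\mid P\mid e_k}\in\mathbb{R}$.
\item Each off-diagonal pair $i<j$ contributes $z_{ij}e^{\mathrm{i}\alpha t}+\overline{z_{ij}e^{\mathrm{i}\alpha t}}=2|z_{ij}|\cos(\alpha t+\arg z_{ij})$, where $\alpha=\theta_j-\theta_i$.
\end{itemize}
So we take $a_{ij}=2|z_{ij}|$ and $\phi_{ij}=\arg z_{ij}$. Pairs with $z_{ij}=0$ contribute nothing and are simply dropped from the sum. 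If one insists on indexing all pairs, a zero term can be absorbed as $a_{ij}\cos(\cdot)$ with any positive $a_{ij}$ paired with a cancelling term. The cleaner choice is to say the sum ranges over pairs with $z_{ij}\neq 0$, which is how we would phrase it to meet the requirement $a_{ij}>0$.

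\textbf{Step 4: effectivity and complexity.} We expect this to be the main obstacle, since "effectively computed in polynomial time" must be made precise for irrational data. The eigenvalues of $U$ are algebraic when $U$ has algebraic entries. One can compute the characteristic polynomial and isolate its roots in time polynomial in the size of $U$, for example via Schur or Jordan-form algorithms over algebraic numbers. A unitary matrix is diagonalisable, so the eigenvectors are obtained by solving linear systems over the number field generated by the eigenvalues. An orthonormal basis inside each eigenspace comes from Gram–Schmidt. The quantities $c_k$, $z_{ij}$, $|z_{ij}|$ and $\arg z_{ij}$ then follow by arithmetic, with the $\theta_k$ and $\phi_{ij}$ represented implicitly through $e^{\mathrm{i}\theta_k}$ and $z_{ij}/|z_{ij}|$ as algebraic numbers. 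We would state the complexity claim with respect to this symbolic algebraic representation, counting arithmetic operations polynomially in $n$.

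Repeated eigenvalues cause no problem. Such eigenvalues make $\theta_j-\theta_i=0$, and those cross terms are constants that we fold into $C$ (again keeping only genuinely oscillating terms), or we leave them as cosines with zero frequency.

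Finally, if one wants to stay entirely over the reals as the paper assumes, the same argument applies to the real orthogonal $U$ via its complexification. The resulting expression is real by the conjugate-pairing in Step 3.
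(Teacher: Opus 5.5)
Your proposal is correct and is essentially the paper's proof: your $z_{ij}=\overline{c_i}\,c_j\langle e_i|P|e_j\rangle$ is exactly the paper's $\overline{x_i}M_{ij}x_j$ with $M=V^{\dagger}PV$ and $\ket{x}=V^{\dagger}\ket{\psi}$. You therefore get the same $C$, $a_{ij}=2|z_{ij}|$ and $\phi_{ij}=\arg z_{ij}$, and, like the paper, you are really computing the squared norm. Your extra points are refinements rather than a different route: restricting the sum to pairs with $z_{ij}\neq 0$ secures $a_{ij}>0$, whereas the paper's proof only gives $a_{ij}\geq 0$ (your alternative ``cancelling term'' device does not fit the fixed pair indexing, so the restriction is the right choice), and you make the complexity claim precise in an exact algebraic model where the paper simply cites an $O(n^3)$ eigendecomposition.
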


By substituting the expression in~\Cref{prop:cos_linear_form} into~\cref{eqt:matrix_form}, the problem reduces to an existential theory over integers with trigonometry functions. We further translate it into an NIP problem as follows:
\begin{equation}\label{eq:opt-cos}
    \opt_{t\in \mathbb{N}} \quad C + \sum_{1\leq i < j\leq n}a_{ij}\cos((\theta_{j}-\theta_{i})t+\phi_{ij})
\end{equation}
where $\opt$ is $\max$ when ${\bowtie}\in\{>,\geq\}$ and $\min$ when ${\bowtie}\in\{<,\leq\}$. It is clear that the original formula in~\cref{eqt:matrix_form} is satisfiable if and only if the optimal value of the above optimization problem satisfies the corresponding inequality.


\subsection{Sound and Semi-Complete Taylor Approximation}\label{subsec:relaxation}
In practice, solving~\cref{eq:opt-cos} could be slow since it involves a linear combination of trigonometry functions, for which most of the existing solvers have only limited support. Polynomials, on the other hand, are well-studied and optimized in many solvers. Therefore, it is natural to consider a relaxation from~\cref{eq:opt-cos} to a polynomial optimization problem. Below, we present a relaxation based on the Taylor series of the the cosine function. We start with recalling the Taylor expansion of cosine function and an inequality associated with it. 
\begin{restatable}{lemma}{LowerAndUpperBound}\label{prop:cosine}
    For any $x\in\mathbb{R}$,  $\cos x = \sum_{k=0}^{\infty}\frac{(-1)^{k}}{(2k)!}x^{2k}$ and the following holds:
    \begin{equation}\label{eq:cosine_inequality}
        \sum_{k=0}^{m-1}\bigg(\frac{x^{4k}}{(4k)!}-\frac{x^{4k+2}}{(4k+2)!}\bigg)\leq \cos x \leq \sum_{k=0}^{m-1}\bigg(\frac{x^{4k}}{(4k)!}-\frac{x^{4k+2}}{(4k+2)!}\bigg) + \frac{x^{4m}}{(4m)!}.
    \end{equation}
    We denote by $L_{m}(x)\coloneqq \sum_{k=0}^{m-1}\big(\frac{x^{4k}}{(4k)!}-\frac{x^{4k+2}}{(4k+2)!}\big)$ and $U_{m}(x)\coloneqq L_{m}(x) + \frac{x^{4m}}{(4m)!}$ as the ($m$-th order) lower bound and upper bound polynomial, respectively.
\end{restatable}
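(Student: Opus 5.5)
The Taylor identity $\cos x=\sum_{k\ge0}\frac{(-1)^k}{(2k)!}x^{2k}$ is standard, since the series converges absolutely for all real $x$, so the substance is the two-sided bound \cref{eq:cosine_inequality}. My plan is to use the Taylor theorem with Lagrange remainder, which avoids any sign analysis of the tail of the series. Grouping the series in pairs of terms, $L_m(x)$ is exactly the Taylor polynomial of $\cos$ of degree $4m-2$ about $0$, that is, $\sum_{j=0}^{2m-1}\frac{(-1)^j}{(2j)!}x^{2j}$. Similarly, $U_m(x)=L_m(x)+\frac{x^{4m}}{(4m)!}$ is the Taylor polynomial of degree $4m$.

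For the lower bound, I will write $\cos x - L_m(x)$ as the remainder after the degree-$(4m-1)$ Taylor polynomial; this is the same as $L_m$, since the odd coefficients vanish. The Lagrange form gives
\[
\cos x-L_m(x)=\frac{\cos^{(4m)}(\xi)}{(4m)!}x^{4m}=\frac{\cos\xi}{(4m)!}x^{4m}
\]
for some $\xi$ between $0$ and $x$. This has the wrong sign control, since $\cos\xi$ can be negative. So I will instead use the remainder after the degree-$(4m-2)$ polynomial, taken in the form with the $(4m-1)$-st derivative:
\[
\cos x-L_m(x)=\frac{\cos^{(4m-1)}(\xi)}{(4m-1)!}x^{4m-1}=\frac{\sin\xi}{(4m-1)!}x^{4m-1}.
\]
Because $\xi$ lies between $0$ and $x$, this still does not give a sign directly. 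The cleaner route is an integral or induction argument.

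Concretely, I will prove by induction on $j$ the chain of inequalities $T_{j}(x)\le\cos x$ or $\ge\cos x$, alternating according to the degree, where $T_j$ denotes the partial sums. It suffices to treat $x\ge0$, since both sides are even functions of $x$. I start from $\cos x\le 1$ and integrate from $0$ to $x$ to get $\sin x\le x$. Integrating again gives $1-\cos x\le x^2/2$, that is, $\cos x\ge 1-x^2/2$. Integrating again gives $\sin x\ge x-x^3/6$, and then $\cos x\le 1-x^2/2+x^4/24$.

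Each integration from $0$ to $x\ge0$ preserves the inequality, and the signs alternate with period four in the degree. This yields two families of bounds for every $m$:
\[
\cos x\ge \sum_{j=0}^{2m-1}\frac{(-1)^j x^{2j}}{(2j)!}=L_m(x)
\qquad\text{and}\qquad
\cos x\le \sum_{j=0}^{2m}\frac{(-1)^j x^{2j}}{(2j)!}=U_m(x).
\]
The formal induction step passes from the bound of degree $4m-2$ to the bound of degree $4m+2$ through four integrations.

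The only delicate point is bookkeeping. I need to check that the degree-$(4m-2)$ truncation, whose last term is negative, is a lower bound, and that adding $x^{4m}/(4m)!$ gives an upper bound. I also need to reduce to $x\ge0$ via evenness, so that integrating from $0$ to $x$ keeps the direction of each inequality. I expect no real obstacle here; the proof is routine once it is phrased as repeated integration of $\cos x\le1$.
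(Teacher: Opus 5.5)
Your final argument, which reduces to $x\ge 0$ by evenness and then integrates $\cos t\le 1$ repeatedly from $0$ to $x$ so that the bounds alternate every two integrations, is the paper's own sketch (the paper starts one step later, from $\sin x\le x$). It is correct, provided the lower bound is read for $m\ge 1$, since $L_0\equiv 0$ and $\cos x\ge 0$ is false. You can drop the Lagrange detour, although bounding $\cos\xi$ by $\pm 1$ in the remainders of orders $4m$ and $4m-2$ would in fact give $\cos x\le U_m(x)$ and $\cos x\ge L_m(x)$ directly.
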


By directly substituting each term $\cos((\theta_{j}-\theta_{i})t+\phi_{ij})$ in~\cref{eq:opt-cos} with $L_{m}((\theta_{j}-\theta_{i})t+\phi_{ij})$ or $U_{m}((\theta_{j}-\theta_{i})t+\phi_{ij})$, we obtain the following polynomial integer optimization problem:
\begin{equation}\label{eq:opt-relaxed}
    \opt_{t\in \mathbb{N}} \quad C + \sum_{1\leq i < j\leq n}a_{ij}F_{m}((\theta_{j}-\theta_{i})t+\phi_{ij})
\end{equation}
where $F_{m}$ is $L_{m}$ when ${\bowtie}\in\{>,\geq\}$ and $U_{m}$ when ${\bowtie}\in\{<,\leq\}$. By~\Cref{prop:cos_linear_form}, we have that $a_{ij}\geq 0$ for all $i,j$, and so it is not hard to see that the substitution preserves the ordering. Therefore, if the optimal value in~\cref{eq:opt-relaxed} is ${\bowtie}\lambda$, then it forms a sound witness for the original problem, that is there exists some $t\in\mathbb{N}$ such that $\norm{P U^{t}\ket{\psi}}\bowtie \lambda$. Hence, we have the following soundness result.

\begin{restatable}
{theorem}{soundness}
    Let $C, a_{ij}, \theta_{i}, \phi_{ij}$ be the constants defined in~\Cref{prop:cos_linear_form}. For any $m\in\mathbb{N}$, we have
    $$\max_{t\in\mathbb{N}}\quad C + \sum_{1\leq i < j\leq n}a_{ij}L_{m}((\theta_{j}-\theta_{i})t+\phi_{ij})\bowtie\lambda\implies\exists t\in\mathbb{N},\,\norm{P U^{t}\ket{\psi}}\bowtie \lambda,$$
    where ${\bowtie}\in\{>,\geq\}$, the same holds for ${\bowtie}\in\{<,\leq\}$ mutatis mutandis.
\end{restatable}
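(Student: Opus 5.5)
The soundness claim follows from a pointwise comparison between the polynomial surrogate and the exact expression of \Cref{prop:cos_linear_form}, followed by a transfer of the optimum to a concrete witness $t$. I treat ${\bowtie}\in\{>,\geq\}$ in detail; the case ${\bowtie}\in\{<,\leq\}$ is the mirror image, using $U_m$ and $\min$.

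\textbf{Step 1 (pointwise domination).} Fix $m\in\mathbb{N}$ and $t\in\mathbb{N}$. For each pair $i<j$, set $x_{ij}(t)=(\theta_j-\theta_i)t+\phi_{ij}$. The left inequality of \cref{eq:cosine_inequality} in \Cref{prop:cosine} gives $L_m(x_{ij}(t))\le\cos(x_{ij}(t))$. Every coefficient satisfies $a_{ij}>0$ by \Cref{prop:cos_linear_form}. Multiplying each inequality by $a_{ij}$, summing over all pairs and adding $C$, we get
\[
g_m(t)\coloneqq C+\sum_{1\le i<j\le n}a_{ij}L_m(x_{ij}(t))\;\le\; C+\sum_{1\le i<j\le n}a_{ij}\cos(x_{ij}(t))=\norm{P U^{t}\ket{\psi}} ,
\]
where the final equality is \cref{eq:cos_linear_form}. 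For $\{<,\leq\}$ the same argument uses the right inequality $\cos x\le U_m(x)$, and the inequality flips.

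\textbf{Step 2 (from the optimum to a witness).} This is the only delicate point, because the maximum of $g_m$ over $\mathbb{N}$ need not be attained. I read the hypothesis ``$\max_t g_m(t)\bowtie\lambda$'' as the supremum (or the value returned by the solver) satisfying $\bowtie\lambda$.

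For strict $>$: if $\sup_t g_m(t)>\lambda$, then by the definition of supremum there is some $t_0$ with $g_m(t_0)>\lambda$. Step 1 then gives $\norm{PU^{t_0}\ket{\psi}}\ge g_m(t_0)>\lambda$.

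For $\geq$: the hypothesis must be read as attainment, that is, some $t_0$ with $g_m(t_0)\ge\lambda$. This is what the solver returns as a witness, and it is what the notation $\max$ asserts. Under that reading the conclusion again follows from Step 1.

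I would also note that $L_m$ has even leading degree $4m-2$ with a negative leading coefficient. Hence $g_m(t)\to-\infty$ as $t\to\infty$ whenever some $\theta_j\ne\theta_i$. If all the $\theta_j-\theta_i$ vanish, $g_m$ is constant. In both cases the supremum is genuinely attained, so the $\max$ is well defined and the caveat in Step 2 disappears.

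For the minimisation case, $U_m$ has positive leading coefficient $1/(4m)!$ in degree $4m$. So $g_m(t)\to+\infty$, the minimum is attained at some $t_0$, and $\norm{PU^{t_0}\ket{\psi}}\le g_m(t_0)\mathrel{\bowtie}\lambda$.

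I expect the main obstacle to be this attainment issue together with getting the direction of the bounds right. The rest is immediate from \Cref{prop:cos_linear_form} and \Cref{prop:cosine}.
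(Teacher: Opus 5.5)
Your proof is correct and follows the paper's argument: the paper likewise takes a maximiser $t_{\ast}$ of the polynomial surrogate and uses $a_{ij}\geq 0$ together with $L_m\leq\cos$ (resp.\ $\cos\leq U_m$) to conclude $\norm{PU^{t_{\ast}}\ket{\psi}}\geq C+\sum_{i<j}a_{ij}L_m(\cdot)\bowtie\lambda$. You add an observation that the paper lacks: the leading terms of $L_m$ and $U_m$ force $g_m\to-\infty$ (resp.\ $+\infty$), so the optimum over $\mathbb{N}$ is actually attained, and this justifies the existence of $t_{\ast}$, which the paper simply assumes.
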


A useful property of Taylor expansion is that it can be made as tight as possible by increasing the order of approximation $m$. As a direct result of Lagrange's remainder theorem, the witnesses of the form in~\cref{eq:opt-relaxed} is also \emph{semi-complete} for strict inequality. That is, if for any $t\in\mathbb{N}$, $\norm{P U^{t}\ket{\psi}} \bowtie \lambda$ for some ${\bowtie}\in\{>,<\}$, then there exists some $m\in\mathbb{N}$ such that the optimal value of~\cref{eq:opt-relaxed} is $\bowtie\lambda$. We have the following completeness result.

\begin{restatable}
{theorem}{completeness}
    Let $C, a_{ij}, \theta_{i}, \phi_{ij}$ be the constants defined in~\Cref{prop:cos_linear_form}. For any $t\in\mathbb{N}$, and $\varepsilon > 0$, there exists some $m\in\mathbb{N}$ such that 
    $$\norm{P U^{t}\ket{\psi}} >\lambda + \varepsilon\implies C + \sum_{1\leq i < j\leq n}a_{ij}L_{m}((\theta_{j}-\theta_{i})t+\phi_{ij}) > \lambda.$$
    The same holds for $<$ mutatis mutandis.
\end{restatable}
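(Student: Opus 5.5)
Fix $t\in\mathbb{N}$ and $\varepsilon>0$, and suppose $\norm{P U^{t}\ket{\psi}}>\lambda+\varepsilon$. The plan is to bound the gap between the true value (given by \Cref{prop:cos_linear_form}) and the relaxed value by something that tends to $0$ as $m\to\infty$, uniformly over the finitely many terms. Write $x_{ij}\coloneqq(\theta_{j}-\theta_{i})t+\phi_{ij}$. Since $t$ is fixed, these are finitely many fixed reals, so we may set $R\coloneqq\max_{i<j}\abs{x_{ij}}$ and $A\coloneqq\sum_{i<j}a_{ij}$.

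By \Cref{eq:cos_linear_form}, $\norm{P U^{t}\ket{\psi}}=C+\sum_{i<j}a_{ij}\cos x_{ij}$. Subtracting the relaxed expression gives
\[
\norm{P U^{t}\ket{\psi}}-\Bigl(C+\sum_{i<j}a_{ij}L_{m}(x_{ij})\Bigr)=\sum_{i<j}a_{ij}\bigl(\cos x_{ij}-L_{m}(x_{ij})\bigr).
\]
By \Cref{prop:cosine}, each summand satisfies $0\le\cos x_{ij}-L_{m}(x_{ij})\le U_{m}(x_{ij})-L_{m}(x_{ij})=\frac{x_{ij}^{4m}}{(4m)!}\le\frac{R^{4m}}{(4m)!}$. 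The upper estimate could also be obtained from the Lagrange remainder of the Taylor series, but it is cleaner to reuse the two-sided bound already stated. Since $a_{ij}>0$, the whole difference is at most $A\,R^{4m}/(4m)!$.

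It remains to choose $m$. Because $R^{k}/k!\to0$ as $k\to\infty$ (for instance, as the terms of the convergent exponential series), there is $m$ with $A\,R^{4m}/(4m)!<\varepsilon$; if $A=0$, any $m$ works. For this $m$,
\[
C+\sum_{i<j}a_{ij}L_{m}(x_{ij})\ \ge\ \norm{P U^{t}\ket{\psi}}-A\frac{R^{4m}}{(4m)!}\ >\ \lambda+\varepsilon-\varepsilon=\lambda,
\]
which is the claimed inequality.

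For $<$ the argument is symmetric: replace $L_{m}$ by $U_{m}$ and use $0\le U_{m}(x)-\cos x\le x^{4m}/(4m)!$, again from \Cref{prop:cosine}. There is no real obstacle here. The only point needing care is that $m$ depends on $t$ through $R$, which is allowed because $t$ is fixed before $m$ is chosen. A uniform $m$ would fail in general, since $x_{ij}$ grows linearly in $t$. This is why the margin $\varepsilon$ together with a fixed witness $t$ is exactly what makes the statement hold.
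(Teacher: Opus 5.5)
Your proof is correct and follows essentially the same route as the paper: bound $\cos x - L_m(x)$ by $x^{4m}/(4m)!$ at the finitely many fixed arguments $x_{ij}$, then choose $m$ large enough that the weighted error drops below $\varepsilon$. The differences are cosmetic. The paper obtains the error bound from the Lagrange remainder and uses per-term thresholds $M_{ij}$ with error $\varepsilon/n^{2}$ each, whereas you read the bound off the two-sided inequality of \Cref{prop:cosine} and aggregate the terms via $R$ and $A$.
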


Note that the above semi-completeness result is not uniform. That means when the solution $t$ is large, a higher degree polynomial $F_{m}$ is required to ensure~\cref{eq:opt-relaxed} forms a witness. Now, we show that such a completeness result can be made uniform via adding auxiliary variables in the optimization problem. The key idea is to exploit the periodicity of the cosine function. Instead of~\cref{eq:opt-relaxed}, we consider the following optimization problem:
\begin{equation}\label{eq:opt-relaxed-periodic}
    \begin{aligned}
        \opt_{t\in \mathbb{N}, (K_{ij})_{i,j}\in\mathbb{Z}}&\quad C + \sum_{1\leq i < j\leq n}a_{ij}F_{m}((\theta_{j}-\theta_{i})t+\phi_{ij}+2K_{ij}\pi)\\
        \text{subject to}&\quad -\pi < (\theta_{j}-\theta_{i})t+\phi_{ij}+2K_{ij}\pi\leq \pi\quad\forall i,j.
    \end{aligned}
\end{equation}
Let $\tau_{ij}(t) = (\theta_{j}-\theta_{i})t+\phi_{ij}$.
Since the cosine function is $2\pi$-periodic, we have $L_{m}(\tau_{ij}(t)+2K_{ij}\pi)\leq\cos(\tau_{ij}(t)+2K_{ij}\pi) = \cos(\tau_{ij}(t))\leq U_{m}(\tau_{ij}(t)+2K_{ij}\pi)$ for any $K_{ij}\in\mathbb{Z}$. Therefore, we shall see that the optimal value of~\cref{eq:opt-relaxed-periodic} still forms a sound witness for the non-emptiness problem. By restricting the range of the term $\tau_{ij}(t)+2K_{ij}\pi$, the error of approximation between $F_{m}$ and $\cos$ can be bounded by a constant independent of $t$. This gives the following result.

\begin{restatable}
{theorem}{uniformcompleteness}
    Let $C, a_{ij}, \theta_{i}, \phi_{ij}$ be the constants defined in~\Cref{prop:cos_linear_form}. For any $\varepsilon >0$, there exists $m\in\mathbb{N}$, such that for any $t\in\mathbb{N}$, there exists a unique set of $(K_{ij})_{i,j}\in\mathbb{Z}$ such that for all $i<j$,$(\theta_{j}-\theta_{i})t+\phi_{ij}+2K_{ij}\pi\in (-\pi,\pi]$,
    and the following holds:
    $$\norm{P U^{t}\ket{\psi}} >\lambda + \varepsilon\implies C + \sum_{1\leq i < j\leq n}a_{ij}L_{m}((\theta_{j}-\theta_{i})t+\phi_{ij}+2K_{ij}\pi) > \lambda.$$
    The same holds for $<$ mutatis mutandis.
\end{restatable}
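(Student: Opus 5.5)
The plan is to split the statement into two independent parts, existence and uniqueness of the shifts $K_{ij}$ and a uniform error bound for the Taylor polynomials on a compact interval, and then combine them.

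\textbf{Existence and uniqueness of the $K_{ij}$.} Fix $t\in\mathbb{N}$ and a pair $i<j$, and write $\tau_{ij}(t)=(\theta_j-\theta_i)t+\phi_{ij}$ as in the text. The half-open intervals $(-\pi+2K\pi,\pi+2K\pi]$ for $K\in\mathbb{Z}$ partition $\mathbb{R}$. Hence there is exactly one integer $K$ with $\tau_{ij}(t)\in(-\pi+2K\pi,\pi+2K\pi]$, namely $K=\lceil(\tau_{ij}(t)-\pi)/(2\pi)\rceil$. Setting $K_{ij}=-K$ gives $\tau_{ij}(t)+2K_{ij}\pi\in(-\pi,\pi]$, and uniqueness follows because the intervals are disjoint. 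Doing this for every pair gives the unique family $(K_{ij})_{i,j}$. Note that this choice depends on $t$, but $m$ will not.

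\textbf{Uniform approximation.} Next I bound $\cos x - L_m(x)$ for $x\in[-\pi,\pi]$, uniformly in $x$. By \Cref{prop:cosine}, $L_m(x)\le\cos x\le L_m(x)+x^{4m}/(4m)!$, so $0\le\cos x-L_m(x)\le \pi^{4m}/(4m)!$. This quantity tends to $0$ as $m\to\infty$, since factorials dominate exponentials. Let $A=\sum_{i<j}a_{ij}$, which is a finite nonnegative constant determined by \Cref{prop:cos_linear_form}. Choose $m$ with $A\,\pi^{4m}/(4m)!<\varepsilon$; this choice depends only on $\varepsilon$ and the constants, not on $t$. If $A=0$, any $m$ works.

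\textbf{Combining the two parts.} Fix any $t$ and the $K_{ij}$ chosen above, and write $x_{ij}=\tau_{ij}(t)+2K_{ij}\pi\in[-\pi,\pi]$. By $2\pi$-periodicity, $\cos x_{ij}=\cos\tau_{ij}(t)$. Then \Cref{prop:cos_linear_form} gives
\[
\norm{PU^t\ket{\psi}}-\Big(C+\sum_{i<j}a_{ij}L_m(x_{ij})\Big)=\sum_{i<j}a_{ij}\big(\cos x_{ij}-L_m(x_{ij})\big)\le A\frac{\pi^{4m}}{(4m)!}<\varepsilon,
\]
using $a_{ij}\ge0$. So if $\norm{PU^t\ket{\psi}}>\lambda+\varepsilon$, the polynomial expression exceeds $\lambda+\varepsilon-\varepsilon=\lambda$.

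\textbf{The case of $<$.} Here I use $U_m$ and the other side of \Cref{prop:cosine}: $0\le U_m(x)-\cos x\le x^{4m}/(4m)!\le\pi^{4m}/(4m)!$. From $\norm{PU^t\ket{\psi}}<\lambda-\varepsilon$ the same argument gives $C+\sum a_{ij}U_m(x_{ij})<\lambda$.

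The only delicate point is that the error bound must be independent of $t$, and this is exactly what confining $x_{ij}$ to $(-\pi,\pi]$ ensures. The rest is routine.
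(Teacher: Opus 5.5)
Your proof is correct and is essentially the paper's argument: use $2\pi$-periodicity to shift each phase $\tau_{ij}(t)$ into $(-\pi,\pi]$, which bounds the Taylor error by $\pi^{4m}/(4m)!$ independently of $t$, and then choose $m$ so that the $a_{ij}$-weighted error sum is below $\varepsilon$. Your half-open partition argument justifies uniqueness of the $K_{ij}$ more cleanly than the paper, which appeals to the closed interval $[-\pi,\pi]$; also require $m\ge 1$, because the bound $L_m\le\cos$ that your $<$ case relies on fails for $m=0$ (where $L_0=0$).
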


\subsection{Implementation}\label{subsec:implementation}
We implemented a witness searching algorithm based on~\cref{eq:opt-relaxed-periodic} with bounded degree using Python 3.10. In particular, we rely on the MindtPy solvers~\cite{mindtpy} implemented in the Pyomo package~\cite{pyomo} for solving the NIP problem. The solver utilizes the global outer approximation (GOA) strategy in~\cite{Kesavan2004-kv} to decompose the NIP problem, and GLPK~\cite{glpk} and IPOPT~\cite{Wachter2006-bc} as the backend solvers to solve the subproblems. A pseudo-code is provided in \Cref{sec:app_experiments} and the source code is available \href{https://github.com/yctsai2727/NIP-based-Witness-Searching-for-One-loop-QFAC}{here}. All experiments were run on a high-performance cluster (AMD EPYC 7702P processor with 128 GB of RAM).

\paragraph{Model.} We tested our implementation on three models, from simple to complicated: \texttt{Simple-Rotation}, \texttt{Grover} and \texttt{Quantum-Walk}, with varying parameters. All three models contain a loop of quantum operations, with a measurement performed at the end of the loop. The model \texttt{Simple-Rotation} is a toy model that involves a rotation matrix with a fixed degree on a 2D plane. The models \texttt{Grover} and \texttt{Quantum-Walk} are more involved that capture the behavior of Grover's algorithm and quantum random walk~\cite{Dai:24:CAV}, respectively. Both \texttt{Grover} and \texttt{Quantum-Walk} have two parameters: the number of qubits $k$ and the reachable state to be tested. More details are provided in the \Cref{sec:app_experiments}.

\begin{table}[tbp]\centering
    \scalebox{.8}{
        \begin{tabular}{ccccccc}\toprule
	 Instance&\# qubits&Preprocess(s)&Solving time(s)&Total time(s)&deg. of approx.&Trials solved\\\midrule
	Simple-Rotation-1 & $1$ & $<0.1$ & $<0.1$ & $<0.1$ & $1$ & 1/1 \\
	Simple-Rotation-2 & $1$ & $<0.1$ & $<0.1$ & $<0.1$ & $1$ & 1/1 \\
	Simple-Rotation-3 & $1$ & $<0.1$ & $<0.1$ & $<0.1$ & $1$ & 1/1 \\\midrule
	Grover-2 & $2$ & $<0.1$ & $<0.1$ & $<0.1$ & $1$ & 10/10 \\
	Grover-3 & $3$ & $<0.1$ & $0.77$ & $0.77$ & $1$ & 10/10 \\
	Grover-4 & $4$ & $<0.1$ & $60.83$ & $60.83$ & $2$ & 10/10 \\
	Grover-5 & $5$ & $<0.1$ & $0.29$ & $0.30$ & $1$ & 10/10 \\
	Grover-6 & $6$ & $<0.1$ & $0.22$ & $0.23$ & $1$ & 10/10 \\
	Grover-7 & $7$ & $<0.1$ & $0.34$ & $0.39$ & $1$ & 10/10 \\
	Grover-8 & $8$ & $0.19$ & $0.30$ & $0.49$ & $1$ & 10/10 \\
	Grover-9 & $9$ & $0.81$ & $0.58$ & $1.39$ & $1$ & 10/10 \\
	Grover-10 & $10$ & $3.57$ & $0.44$ & $4.02$ & $1$ & 10/10 \\
	Grover-11 & $11$ & $19.11$ & $0.34$ & $19.45$ & $1$ & 10/10 \\
	Grover-12 & $12$ & $112.52$ & $0.48$ & $113.00$ & $1$ & 10/10 \\
	Grover-13 & $13$ & $760.84$ & $6.64$ & $767.48$ & $2$ & 10/10 \\\midrule
	Quantum-Walk-2 & $3$ & $<0.1$ & $7.25$ & $7.25$ & $1$ & 10/10 \\
	Quantum-Walk-3 & $4$ & $<0.1$ & $91.13$ & $91.15$ & $3$ & 5/10 \\
	Quantum-Walk-4 & $5$ & $<0.1$ & TO & TO & N/A & 0/10 \\
	\bottomrule\\
\end{tabular}
    }
    \caption{Experimental results. The columns (left to right) correspond to the number of qubits, the average time (in seconds) for preparing the NIP encoding, solving the NIP problems and their total, the largest degree of approximation $m$ used in the 10 trials and the number of successful trials, respectively.}\label{tab:exper_result}
    \vspace{-3em}
\end{table}

\paragraph{Results.} The experimental results are summarized in~\Cref{tab:exper_result}. We imposed a maximum degree of approximation, i.e., the parameter $m$, of $5$ and a time limit of 1 minute for solving the NIP problem of fixed degree. To speed up the solving process, we also implemented a simple heuristic by setting some of the auxiliary variables $K_{ij}$ to $0$ when the solvers fail to find the solution in the first place. We separate the time measurement into two parts, the preprocessing time covers the time used for computing coefficients and setting up the NIP problem instances, which covers lines 1 to 7 in~\Cref{alg:nip}, and the solving time covers the time used by the NIP solver, which covers the loop body from lines 8 to 10 in~\Cref{alg:nip}. The total time is the sum of the two parts.

Our algorithm managed to solve most of the instances within the time limit, with most of them solved within 10 seconds. For instances with many qubits, e.g. Grover-12 and Grover-13, most of the time is spent in preprocessing, which involves computing the eigen-decomposition of a $2^{k}\times 2^{k}$ matrix for a $k$ qubits instance. The time required by the NIP-solver is relatively small in most cases, which demonstrates the efficiency of the solvers for polynomial optimization problems. Surprisingly, most problems require only a small degree of approximation to achieve a sound witness.

Another observation is that our algorithm managed to solve the Grover problem up to a very large number of qubits while it failed to solve the Quantum-walk problem with $5$ qubits. We believe this is due to the fact that the underlying unitary matrix for Grover's algorithm has a large number of repeated eigenvalues, which leads to a simpler expression in~\Cref{prop:cos_linear_form} and thus easier to solve despite the large number of qubits.

\paragraph{Comparison to related works.} To better demonstrate the performance of our algorithm, we compare our results with those of~\cite{Lewis:24:arxiv,Hu:25}, which consider the dual setting of our problem---safety verification of quantum loops. In their work, they adopt the conventional approach of synthesizing barrier certificates to establish a safety guarantee. While their approach can handle circuits involving simple quantum gates (such as $Z$ and SWAP gates) with up to 6 qubits, it struggles to scale to Grover's algorithm even for 2 qubits without manually rewriting the circuit representation. In contrast, our experimental results show that our approach handles complex instances, such as Grover's algorithm up to 13 qubits, without requiring any modification to the problem representation.

\section{Conclusion}
We studied \qfac{s} and their emptiness problem under cut-point semantics.
We showed that \qfa{s} and finite automata are incomparable in expressiveness, and hence both are strictly less expressive than \qfac{s}. Furthermore, we established a pumping lemma for \qfac{s} and used it to distinguish between the class of languages recognized by \qfac{s} under strict thresholds and those recognized under non-strict thresholds.
The emptiness problem inherits the same decidability dichotomy as for \qfa{s}: decidable for strict thresholds, undecidable otherwise.
We identified decidable fragments and established complexity lower bounds.
For one-loop \qfac{s}, we gave a sound NIP-based witness procedure, that is also uniformly semi-complete for strict thresholds.



\newpage
\bibliographystyle{abbrvurl}
\bibliography{references}

\newpage

\renewcommand{\theHsection}{A\arabic{section}}
\appendix


\section{Complete \qfac{s}}\label{app:complete}

\begin{lemma}
    Let $\mathcal{A} = (\Sigma, Q, q_0, \ket{\psi_{0}}, (U_{a})_{a\in\Sigma}, (P_q)_{q\in Q}, \Delta)$ be a \qfac. For every pair $(\bowtie,\lambda)\in\cmp\times [0,1]$, there exists a complete \qfac $\mathcal{B}$ such that $L_{\bowtie \lambda}(\mathcal{B})=L_{\bowtie \lambda}(\mathcal{A})$.
\end{lemma}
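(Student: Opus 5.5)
The plan is to complete $\mathcal{A}$ by adding sink-like states whose projections are chosen, depending on $(\bowtie,\lambda)$, so that the sink runs never change the acceptance verdict. I adopt the convention implicit in the definition of $\val_{\mathcal{A}}(w)$ as a maximum over runs: a word $w$ with no run of $\mathcal{A}$ from $q_0$ is not in $L_{\bowtie\lambda}(\mathcal{A})$.

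The naive fix, a single rejecting sink $\bot$ that receives every missing transition and loops on every letter, has a flaw for nondeterministic $\mathcal{A}$. It creates extra runs of $\mathcal{B}$ even on words that still have a genuine run in $\mathcal{A}$, namely runs that followed a branch of $\mathcal{A}$ which died. So I will separate two situations: words that have no genuine run at all, and words where only some branches die.

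To do this, I take states $Q\times 2^Q$ (pairs $(q,S)$ with $q\in S$), together with shadow states $(\ast,S)$ for $S\neq\emptyset$ and a dead state $\bot$. The second component $S$ tracks the subset-construction state of $\mathcal{A}$ on the input read so far. Let $\delta(S,a)$ denote the set of $a$-successors of $S$. The transitions are as follows.
\begin{itemize}
\item $(q,S)\xrightarrow{a}(q',\delta(S,a))$ whenever $(q,a,q')\in\Delta$.
\item If $q$ has no $a$-successor but $\delta(S,a)\neq\emptyset$, then $(q,S)\xrightarrow{a}(\ast,\delta(S,a))$, and likewise $(\ast,S)\xrightarrow{a}(\ast,\delta(S,a))$.
\item If $\delta(S,a)=\emptyset$, the transition goes to $\bot$, and $\bot$ loops on every letter.
\end{itemize}
The unitaries and $\ket{\psi_0}$ are unchanged, and $P_{(q,S)}=P_q$. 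Since $\bot$ is entered exactly when $\delta(S,a)=\emptyset$, it is reached precisely on words having no run in $\mathcal{A}$. In that case every run of $\mathcal{B}$ ends in $\bot$. Otherwise no run ends in $\bot$, and the runs of $\mathcal{B}$ ending in $Q\times 2^Q$ project bijectively onto the runs of $\mathcal{A}$ with the same probabilities.

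I then fix the projections of the shadow states and $\bot$. Every shadow state gets the null matrix, so its value is $0$. Because values are nonnegative and a live word always has a genuine run, a shadow run never raises the maximum. The projection of $\bot$ depends on the operator.
\begin{itemize}
\item For ${\bowtie}\in\{>,\geq\}$ with the pair non-trivial, I set $P_\bot=0$. Then dead words get value $0$, and $0\bowtie\lambda$ fails: for $>$ because $\lambda\geq 0$, and for $\geq$ because $\lambda>0$.
\item For ${\bowtie}\in\{<,\leq\}$, I set $P_\bot=I$. Then dead words get value $\norm{U_w\ket{\psi_0}}=1$, and $1\bowtie\lambda$ fails: for $<$ because $\lambda\leq 1$, and for $\leq$ because $\lambda<1$.
\end{itemize}
Checking that $\mathcal{B}$ is complete and that $\val_{\mathcal{B}}(w)=\val_{\mathcal{A}}(w)$ on live words is then routine.

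The main obstacle is the trivial pairs $(\geq,0)$ and $(\leq,1)$. Here every complete QFAC accepts $\Sigma^*$, whereas $L_{\bowtie\lambda}(\mathcal{A})$ is the set of words having a run, which is a regular language that may be proper. For these two pairs I would therefore change the target pair: by \Cref{lem:QFAC_subsumbs_regular} with, say, $(\geq,1)$, that regular language is realised by a complete QFAC. If the statement insists on keeping the same pair, I would instead fall back on reading the maximum over an empty set as $0$ (for $\geq$) or $1$ (for $\leq$), which makes these cases consistent. This is where I expect the paper's modifications for specific combinations of $\bowtie$ and $\lambda$ to enter.
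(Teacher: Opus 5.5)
Your proposal is correct and is essentially the paper's construction. The paper also completes $\mathcal{A}$ with a plain null-projection sink for ${\bowtie}\in\{>,\geq\}$. For ${\bowtie}\in\{<,\leq\}$ it takes the product of the sink with the subset construction, and its states $(S,\perp)$ with $S\neq\emptyset$ (null projection) and $(\emptyset,\perp)$ (projection $I$) are exactly your shadow states and your $\bot$.

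For $(\geq,0)$ and $(\leq,1)$ the paper simply asserts $L_{\bowtie\lambda}(\mathcal{A})=\Sigma^*$, which amounts to your second reading of the empty maximum. Under that reading your construction with $P_\bot=0$ (resp.\ $P_\bot=I$) already works unchanged, so your caveat there is a fair observation about the paper's convention rather than a gap in your argument.
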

\begin{proof}
    Recall that if $L_{\bowtie \lambda}(\mathcal{A}) = \emptyset$ if $(\bowtie, \lambda)$ is
    $(>,1)$ or $(<,0)$;
    and $L_{\bowtie \lambda}(\mathcal{A}) = \Sigma^*$ 
    if $(\bowtie, \lambda)$ is
    $(\ge,0)$ or $(\le, 1)$;
    For these cases, the claim holds trivially by setting
    $\mathcal{B}$ as the complete DFA for the full, or empty, language.
    
    Consider the case $\bowtie~\in \{>, \geq\}$. As with NFAs, we can make $\mathcal{A}$ complete by adding a sink state $\perp$ and setting its projection matrix $P_{\perp}$ to the null matrix. 
    Under this construction, for every word $w$ and run $\pi$, if $\pi$ ends in $\perp$, then $\val(w, \pi, \ket{\psi_0}) = 0$. Consequently, these runs do not contribute to the acceptance probability and the language remains unchanged.

    Next, consider the case where $\bowtie~\in \{<, \leq\}$. Based on the definition of \qfac{s}, we must account for all possible runs on $w$. In particular, we distinguish between the following two situations for a given $w$:
    \begin{enumerate}[(a)]
        \item $\mathcal{A}$ has no valid runs on $w$.
        \item $\mathcal{A}$ has at least one valid run on $w$, though it may also have runs on prefixes of $w$ that cannot be extended to valid runs on the complete word.
    \end{enumerate}
    We aim to construct a \qfac $\mathcal{B}$ such that for all non-trivial $(\bowtie, \lambda)$ with $\bowtie~\in \{<, \leq\}$, the following hold: 
    For case (a), $\val_{\mathcal{B}}(w)=1$ and $w \notin L_{\bowtie \lambda}(\mathcal{B})$; for case (b), $\val_{\mathcal{B}}(w) = \val_{\mathcal{A}}(w)$. One can immediately see that a simple sink state construction is insufficient: naively setting $P_{\perp}$ as the null matrix results in the inclusion of words from case (a) in the language, whereas setting it as the identity matrix may incorrectly exclude words from case (b). 
    
    To remedy these problems, we shall construct a \qfac{} that keeps track of and distinguishes between cases (a) and (b). Specifically, whether the word $w$ still possesses a valid run in ${\cal A}$. A simple observation is that tracking this property is achievable by the standard subset construction, namely by checking whether it reaches the empty set or not. Therefore, by combing the sink state construction with the subset construction, we can distinguish between cases (a) and (b) and apply the corresponding projection in each case. Formally, the product of these two constructions is given as follows:
    
    Let $\Delta_{(q,a)}\coloneqq \{q'\in Q\mid (q,a,q')\in\Delta\}$ and $\Delta_{(S,a)} = \bigcup_{q\in S}\Delta_{(q,a)}$. We construct a \qfac ${\cal B} = (\Sigma,Q'\subseteq 2^Q\times (Q\cup\{\perp\}),(\{q_{0}\},q_{0}),\ket{\psi_{0}},(U_{a})_{a\in\Sigma},(P_{q})_{q\in Q'},\Delta')$ such that
    \begin{itemize}
        \item $Q'\coloneqq \{(S, q)\in 2^{Q}\times (Q\cup\{\perp\})\mid (S\subseteq Q\land q\in S)\lor q = \perp\}$
        \item For each non-empty set $S$, state $q\in S$, and alphabet $a\in\Sigma$:
        \begin{itemize}
            \item if $\Delta_{(q,a)}\neq\emptyset$, then for any $q'\in \Delta_{(q,a)}$, $((S,q),a, (\Delta_{(S,a)},q'))\in\Delta'$,
            \item otherwise, $((S,q),a,(\Delta_{(S,a)},\perp)\in\Delta'$.
        \end{itemize}
        \item For each non-empty set $S$ and alphabet $a\in\Sigma$, $((S,\perp),a,(\Delta_{(S,a)},\perp)\in\Delta'$.
        \item For each alphabet $a\in\Sigma$, $((\emptyset,\perp),a,(\emptyset,\perp))\in\Delta'$.
        \item For each state $q\in Q$, $P_{(S,q)} = P_{q}$.
        \item For each subset $S$, if $S\neq\emptyset$, then $P_{(S,\perp)} = O$, otherwise, $P_{(\emptyset,\perp)} = I$, where $O$ and $I$ are the null and the identity matrices, respectively.
    \end{itemize}
    By construction, for every $p' \in Q'$ and $a \in \Sigma$, there exists a $q' \in Q'$ such that $(p', a, q') \in \Delta'$, ensuring that $\mathcal{B}$ is complete.

    To show $L_{\bowtie\lambda}(\mathcal{B}) = L_{\bowtie\lambda}(\mathcal{A})$, we begin by proving $L_{\bowtie\lambda}(\mathcal{B}) \subseteq L_{\bowtie\lambda}(\mathcal{A})$. Let $w \in L_{\bowtie \lambda}(\mathcal{B})$. By construction, any valid run $\pi$ of $w$ in $\mathcal{A}$ corresponds to a unique run $\pi'$ in $\mathcal{B}$ such that $\val_{\mathcal{A}}(w, \pi, \ket{\psi_0}) = \val_{\mathcal{B}}(w, \pi', \ket{\psi_0})$. Specifically, if $w = a_1 \dots a_n$ and $\pi = q_0 \dots q_n$ is a run over $Q$, then $\pi' = q'_0 \dots q'_n$, where $q'_i = (S_i,q_i)$ and $S_i$ is the set of states reachable from $q_0$ via $a_1 \dots a_i$ in $\mathcal{A}$. Since the state $(\emptyset,\bot)$ in $\mathcal{B}$ is reached from the initial state on $w$ if and only if $\mathcal{A}$ has no valid runs on $w$, it follows that for any $w$ with at least one valid run, $\val_{\mathcal{A}}(w) = \val_{\mathcal{B}}(w)$. For words with no valid runs, $\val_{\mathcal{B}}(w) = 1$ (due to the identity projection matrix), and since $\lambda < 1$ for non-trivial cases, these words are excluded from $L_{\bowtie \lambda}(\mathcal{B})$ when $\bowtie \in \{<, \leq\}$. Consequently, $\val_{\mathcal{B}}(w) \bowtie \lambda$ implies $\val_{\mathcal{A}}(w) \bowtie \lambda$, which yields $L_{\bowtie\lambda}(\mathcal{B}) \subseteq L_{\bowtie\lambda}(\mathcal{A})$.

    For the other direction, let $w\in L_{\bowtie\lambda}({\cal A})$. By definition, there must exist at least one valid run $\pi$ in ${\cal A}$ such that $\val_{\cal A}(w,\pi)\bowtie\lambda$. WLOG, let $\pi$ be the optimal run, i.e. $\val_{\cal A}(w,\pi,\ket{\psi_0}) = \val_{\cal A}(w)$. Let $\pi'$ be the unique run in ${\cal B}$ induced by $\pi$, then we have $\val_{\cal B}(w,\pi',\ket{\psi_0}) = \val_{\cal A}(w)\bowtie\lambda$. Therefore, It remains for us to show that for any other run $\pi''\neq \pi'$ induced by $w$, we also have $\val_{\cal B}(w,\pi'',\ket{\psi_0}) \bowtie \lambda$. 
    
    Assume, for the sake of contradiction, there exists a run $\pi''\neq \pi'$ of $w$ in ${\cal B}$ such that $\val_{\cal B}(w,\pi'',\ket{\psi_0})\Bar{\bowtie} \lambda$. Let $(S,q)$ be the last state in $\pi''$. We consider three cases: (1) $S$ is non-empty and $q\in S$, (2) $S$ is non-empty and $q = \perp$, and (3) $S = \emptyset$ and $q = \perp$. Clearly, case (2) and (3) are impossible: in case (2), since $P_{(S,\perp)} = O$, it follows that $\val_{\cal B}(w,\pi'',\ket{\psi_0}) = 0$ contradicting the assumption that $\val_{\cal B}(w,\pi'',\ket{\psi_0})\Bar{\bowtie} \lambda$ (recall that we exclude the trivial case $(<,0)$ so $\geq 0$ cannot occur here). Case (3) contradicts the fact that $\pi''$ is a run induced by $w$, as the left component in the state of ${\cal B}$ must correspond to the state in the subset construction. Therefore, $S$ must be non-empty and $q\in S$. By construction, $\pi''$ corresponds to a run $\pi_{\ast}$ such that $\val_{\cal B}(w,\pi'',\ket{\psi_0}) = \val_{\cal A}(w,\pi_{\ast},\ket{\psi_0})\Bar{\bowtie}\lambda$. However, our assumption that $\pi$ is optimal implies $\val_{\mathcal{A}}(w) = \max_{\pi \in \text{Path}_{\mathcal{A}}(w)} \val_{\mathcal{A}}(w, \pi, \ket{\psi_0}) \bowtie \lambda$, which contradicts the fact that $\val_{\mathcal{A}}(w, \pi_{\ast}, \ket{\psi_0}) \bar{\bowtie} \lambda$. Therefore, for any run $\pi'' \neq \pi'$, we have $\val_{\mathcal{B}}(w, \pi'') \bowtie \lambda$, and thus $\val_{\mathcal{B}}(w) \bowtie \lambda$. This implies $L_{\bowtie\lambda}(\mathcal{A}) \subseteq L_{\bowtie\lambda}(\mathcal{B})$, which concludes the proof.
    \qed
\end{proof}

\section{Missing Proofs of \Cref{sec:expressiveness}}\label{app:expressiveness}

\begin{lemma}\label{lem:dqfa_sym}
 Let $\lambda \in [0,1]$ and $\bowtie \in \cmp$. For every deterministic \qfac $\mathcal{A}$ over $\Sigma$, one can effectively construct a \qfac $\mathcal{B}$ such that $L_{\bowtie\lambda}(\mathcal{A}) = L_{\Tilde{\bowtie}1-\lambda}(\mathcal{B})$.
\end{lemma}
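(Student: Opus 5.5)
The plan is to keep the underlying deterministic automaton, the initial quantum state and all unitaries of $\mathcal{A}$, and to complement every measurement. Concretely, I would set $\mathcal{B} = (\Sigma, Q, q_0, \ket{\psi_0}, (U_a)_{a\in\Sigma}, (I-P_q)_{q\in Q}, \Delta)$. This is a legitimate \qfac, since $I-P_q$ is again a projection onto the span of the complementary set of standard basis vectors. It is also clearly effectively computable from $\mathcal{A}$.

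\textbf{Step 1: reduce to a single run.} Since $\mathcal{A}$ is deterministic and, by the standing assumption, complete, every word $w=a_1\cdots a_m$ has exactly one run $\pi = p_0\cdots p_m$ from $q_0$. Hence the maximum defining $\val_{\mathcal{A}}(w)$ ranges over a single value:
\[
\val_{\mathcal{A}}(w)=\val_{\mathcal{A}}(w,\pi,\ket{\psi_0}).
\]
The same holds for $\mathcal{B}$, which has the same transition relation and hence the same run $\pi$. This is exactly where determinism is needed. For a nondeterministic \qfac, complementing the projections would turn a maximum over runs into a minimum, and the construction would fail.

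\textbf{Step 2: the Pythagorean identity.} Let $\ket{\phi}=U_{a_m}\cdots U_{a_1}\ket{\psi_0}$, which is a unit vector because unitaries preserve norms. The projections $P_{p_m}$ and $I-P_{p_m}$ project onto orthogonal complementary coordinate subspaces, so
\[
\norm{P_{p_m}\ket{\phi}}^2+\norm{(I-P_{p_m})\ket{\phi}}^2=\norm{\ket{\phi}}^2=1.
\]
Reading the accepting probability as the squared norm, consistent with \Cref{eg:A_odd} where the value is $\cos^2(n)$, this gives $\val_{\mathcal{B}}(w)=1-\val_{\mathcal{A}}(w)$ for every $w$.

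\textbf{Step 3: translate the comparison.} From Step 2, $\val_{\mathcal{A}}(w)\bowtie\lambda$ holds iff $1-\val_{\mathcal{B}}(w)\bowtie\lambda$. This in turn holds iff $\val_{\mathcal{B}}(w)\Tilde{\bowtie}1-\lambda$, because negating both sides reverses the direction of the inequality but keeps its strictness. For example, $\geq$ becomes $\leq$ and $>$ becomes $<$, which is exactly $\Tilde{\bowtie}$. Hence $L_{\bowtie\lambda}(\mathcal{A})=L_{\Tilde{\bowtie}1-\lambda}(\mathcal{B})$.

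The main obstacle is the step concerning the semantics of $\val$. The identity in Step 2 is affine only for the squared norm; if $\val$ were literally the unsquared norm, the relation would be $\val_{\mathcal{B}}(w)=\sqrt{1-\val_{\mathcal{A}}(w)^2}$. In that case I would still conclude the result, with threshold $\sqrt{1-\lambda^2}$ in place of $1-\lambda$, since $x\mapsto\sqrt{1-x^2}$ is strictly decreasing on $[0,1]$. So I would state explicitly that the squared-norm reading is used, matching the paper's examples.

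A secondary point to check is completeness. If $\mathcal{A}$ were incomplete, a word with no run would need a convention for $\val$. I would first apply the completion lemma of \Cref{app:complete}, noting that for $\bowtie\in\{>,\geq\}$ the sink construction preserves determinism. Alternatively, I would rely on the paper's global assumption that all \qfac{s} are complete.
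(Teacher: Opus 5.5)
Your proposal is correct and is essentially the paper's proof: keep the deterministic automaton, the initial state and the unitaries, and replace every $P_q$ by $I-P_q$. Uniqueness of the run then gives $\val_{\mathcal{B}}(w)=1-\val_{\mathcal{A}}(w)$, so $\val_{\mathcal{A}}(w)\bowtie\lambda$ becomes $\val_{\mathcal{B}}(w)\Tilde{\bowtie}1-\lambda$, and your note that this identity needs $\val$ read as the squared norm is a fair clarification of what the paper assumes implicitly (consistent with its later proofs, though not with the norm as written in the preliminaries).
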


\begin{proof}
Since ${\cal A}$ is deterministic, for any word $w\in L_{\bowtie}({\cal A})$, $w$ induces a unique run on ${\cal A}$ and thus there exists a unique state $q\in Q$ such that $\val_{\cal A}(w) = \norm{P_{q} U_{w}\ket{\psi_{0}}}\bowtie\lambda$. By property of projection operator, we have $$\norm{P_{q} U_{w}\ket{\psi_{0}}} = 1-\norm{(I-P_{q})U_{w}\ket{\psi_{0}}}.$$
Therefore, let ${\cal A}'$ be a deterministic \qfac such that ${\cal A}'$ is the same as ${\cal A}$ except that the projection operator at state $q$ is replaced by $I-P_{q}$. It is obvious that ${\cal A}'$ must be deterministic as well. Then it is straightforward to see that $\val_{\cal A}(w)\bowtie\lambda \iff 1-\val_{\cal A}(w)\Tilde{\bowtie}1-\lambda\iff \val_{{\cal A}'}(w)\Tilde{\bowtie}1-\lambda$, thus $w\in L_{\Tilde{\bowtie} 1-\lambda}({\cal A}')$. It is also clear that ${\cal A}'$ can be constructed in polynomial time of the size of ${\cal A}$ and dimension of the underlying Hilbert space.
\qed
\end{proof}

\lemQFAnonRegular*
\begin{proof}
    We shall show that for any non-trivial pair of $(\bowtie,\lambda)\in \cmp\times [0,1]$, there exists a \qfa $\mathcal{A}$ such that $L_{\bowtie \lambda}(\mathcal{A})$ is not regular. Since the class of regular language is closed under complements, by~\Cref{lem:language_symmetric}, proving the case of $\bowtie\in\{>,<\}$ will automatically implies the case of $\bowtie\in\{\geq,\leq\}$. Furthermore, by~\Cref{lem:dqfa_sym}, it suffices for us to prove that there is a deterministic \qfa ${\cal A}$ recognizes a non-regular language for the case of $(>,\lambda)$ for $\lambda\in [0,1)$. Note that the pair $(>,1)$ is trivial and thus being excluded.
    
    In \cite{Brodsky:02:SIAM}, it is shown that there exists a 2-dimensional \qfa $\mathcal{A}$ with strict threshold $\lambda = 0$ recognizing the non-regular language $L_{\neq} = \{ w \in \Sigma^* \mid |w|_a \neq |w|_b \}$, that is, $L_{>0}(\mathcal{A}) = L_{\neq}$. We can generalize this result to any threshold $\lambda\in (0,1)$ by constructing a 3-dimensional \qfa $\mathcal{A}'$ as below.
    
    Let $\mathcal{A} = (\Sigma, Q, q_0, \ket{\psi_{0}}, (U_{a})_{a\in\Sigma}, (P_q)_{q\in Q}, \Delta)$ be the 2-\qfa recognizing $L_{\neq}$, we construct a 3-\qfa $\mathcal{B} = (\Sigma, Q, q_0, \psi'_{0}, (U'_{a})_{a\in\Sigma}, (P'_q)_{q\in Q}, \Delta)$ where 
    \[\ket{\psi'_{0}} = \begin{pmatrix}
        \sqrt{1-\lambda}\,\ket{\psi_{0}}\\\sqrt{\lambda} 
    \end{pmatrix},\]
    \[U_{a}' = \begin{pmatrix}
        U_{a}&0\\
        0&1
    \end{pmatrix},\]
    and 
    \[P_{q}'=\begin{pmatrix}
        P_{q} & 0 \\
        0 & 1
    \end{pmatrix}.\] 
    By construction, it is not hard to show that for every word $w$ and run $\pi$ starting from the initial state $q_0$ on $w$,
    \[
    \val_{\mathcal{B}}(w,\pi,\ket{\psi_0})=(1-\lambda)\cdot\val_{\mathcal{A}}(w,\pi,\ket{\psi'_0}) + \lambda.
    \]
    Since $\lambda\in (0,1)$, it follows that $\val_{\cal B}(w)>\lambda\iff \val_{\cal A}(w)>0$. Hence, $L_{>\lambda}(\mathcal{B}) = L_{\neq}$ for any $\lambda\in (0,1)$.
    \qed
\end{proof}

\lemNonQFArecognizable*
\begin{proof}
    By~\Cref{lem:language_symmetric}, it is equivalent to show that for $\bowtie\in\{>,<\}$ and any $\lambda\in [0,1]$, there is no \qfa ${\cal A}$ that accepts $L_{\neg (bb)}$ or its complement. We first show it for the case of $>$.

    For any $\lambda\in [0,1)$, assume there exists a \qfa ${\cal A}$ such that $L_{>\lambda}({\cal A})=L_{\neg (bb)}$. By definition we must have $\val_{\cal A}(aa)>\lambda$. By~\Cref{lem:qfa_pumping} with $u=v=a$, $w=bb$ and $\varepsilon = \frac{\val_{\cal A}(aa)-\lambda}{2}$, there exists a positive integer $k\in\mathbb{N}_{>0}$ such that $\abs{\val_{\cal A}(a(bb)^{k}a)-\val_{\cal A}(aa)}\leq \varepsilon$. Then it follows that
    $$\val_{\cal A}(a(bb)^{k}a)\geq \val_{\cal A}(aa)-\frac{\val_{\cal A}(aa)-\lambda}{2}>\lambda,$$
    thus leading to a contradiction. By replacing the word $aa$ as $bb$, and picking $u=v=b$ and $w=a$, one can similarly prove that the complement of $L_{\neg (bb)}$ is not recognizable as well. 

    The case of $<$ can be proven similarly by defining $\varepsilon' = -\varepsilon$.
    \qed
\end{proof}

\lemPumpingQFAC*

\begin{proof}
    Let ${\cal B}$ be the deterministic finite-state machine obtained by performing subset construction on the classical part of $\mathcal{A}$. For any word $w = a_{1}\dots a_{n}\in L_{\bowtie\lambda}({\cal A})$, it must induces a unique run $S_{0}a_{1}S_{1}\dots a_{n}S_{n}$ on ${\cal B}$ such that $S_{0} = \{q_{0}\}$ and $\val_{\cal A}(w) = \max_{q\in S_{n}}\norm{P_{q}U_{w}\ket{\psi_{0}}}$. Furthermore, by the assumption that any \qfac{s} are complete, we also have $S_{i}\neq\emptyset$ for every $i$. Now suppose $n = 2^{\abs{Q}}$, by pigeonhole principle, there must exist two distinct indices $i,j\in [0,2^{\abs{Q}}-1]$ such that $S_{i} = S_{j}$. Let $w_{1}= a_{1}\dots a_{i}$, $w_{2} = a_{i+1}\dots a_{j}$, and $w_{3} = a_{j+1}\dots a_{n}$. It is clear that we have $\abs{w_{1}w_{2}}\leq j<n$, and $\abs{w_{2}} = j-i > 0$. Now, it suffices for us to verify that for every $\varepsilon > 0$, there exists $k\in\mathbb{N}_{>0}$ such that $\abs{\val_{\cal A}(w_{1}(w_{2})^{k}w_{3})-\val_{\cal A}(w)}\leq \varepsilon$.

    It can be shown using a similar argument as in~\cite{Blondel:05:SIAM}. Specifically, we know that the set $\{U_{w_{2}}^{k} \mid k\in\mathbb{N}\}$ is compact and thus it admits a Cauchy subsequence. Hence, there must exist some distinct $k_{1}<k_{2}\in\mathbb{N}_{>0}$ such that $\norm{U_{w_{2}}^{k_{2}}-U_{w_{1}}^{k_{1}}}\leq \frac{\varepsilon}{2}$, where $\norm{\cdot}$ is the operator norm with respect to the Euclidean norm. By property of unitary matrix, we have $\norm{U_{w_{2}}^{k_{2}-k_{1}+1}-U_{w_{2}}}\leq \frac{\varepsilon}{2}$. Let $k = k_{2}-k_{1}+1$, we have
    \begin{align*}
        &\abs{\val_{\cal A}(w_{1})(w_{2})^{k}w_{3}-\val_{\cal A}(w)}\\
        & = \abs{\max_{q\in S_{n}}\norm{P_{q}U_{w_{1}}U_{w_{2}}^{k}U_{w_{3}}\ket{\psi_{0}}}-\max_{q\in S_{n}}\norm{P_{q}U_{w}\ket{\psi_{0}}}}\\
        &=\abs{(\max_{q\in S_{n}}\abs{\abs{P_{q}U_{w_{1}}U_{w_{2}}^{k}U_{w_{3}}\ket{\psi_{0}}}})^{2}-(\max_{q\in S_{n}}\abs{\abs{P_{q}U_{w}\ket{\psi_{0}}}})^{2}}\\
        &=\abs{\max_{q\in S_{n}}\abs{\abs{P_{q}U_{w_{1}}U_{w_{2}}^{k}U_{w_{3}}\ket{\psi_{0}}}}+\max_{q\in S_{n}}\abs{\abs{P_{q}U_{w}\ket{\psi_{0}}}}}\abs{\max_{q\in S_{n}}\abs{\abs{P_{q}U_{w_{1}}U_{w_{2}}^{k}U_{w_{3}}\ket{\psi_{0}}}}-\max_{q\in S_{n}}\abs{\abs{P_{q}U_{w}\ket{\psi_{0}}}}}\\
        &\leq 2\abs{\max_{q\in S_{n}}\abs{\abs{P_{q}U_{w_{1}}U_{w_{2}}^{k}U_{w_{3}}\ket{\psi_{0}}}}-\max_{q\in S_{n}}\abs{\abs{P_{q}U_{w}\ket{\psi_{0}}}}}\\
        &\leq2\abs{\max_{q\in S_{n}}\bigg(\abs{\abs{P_{q}U_{w_{1}}U_{w_{2}}^{k}U_{w_{3}}\ket{\psi_{0}}-P_{q}U_{w}\ket{\psi_{0}}}}\bigg)}\\
        &=2\max_{q\in S_{n}}\abs{\abs{P_{q}U_{w_{1}}(U_{w_{2}}^{k}-U_{w_{2}})U_{w_{3}}\ket{\psi_{0}}}}\\
        &\leq 2\max_{q\in S_{n}}\abs{\abs{P_{q}}}\cdot\abs{\abs{U_{w_{1}}}}\cdot \abs{\abs{(U_{w_{2}}^{k}-U_{w_{2}})}}\cdot \abs{\abs{U_{w_{3}}\ket{\psi_{0}}}}\leq\varepsilon.
    \end{align*}\qed
\end{proof}

\lemBideterministicQFACtoQFA*
\begin{proof}
    Let $\mathcal{A}_{\text{under}}$ be the underlying automaton of an $n$-dimensional \qfac $\mathcal{A}$ with $m$ states. Since $\mathcal{A}_{\text{under}}$ is bideterministic, for every symbol $a \in \Sigma$, the transition relation of $\mathcal{A}_{\text{under}}$ is a bijection on the states. This mapping corresponds to an $m \times m$ permutation matrix $U_a$, and every permutation matrix is unitary. Let $\mathcal{C}$ be an $m$-dimensional \qfa over $\Sigma$ where each $a \in \Sigma$ is associated with the permutation matrix $U_a$. We can use $\mathcal{C}$ to simulate the transitions of $\mathcal{A}_{\text{under}}$, where each element in the standard basis of the $m$-dimensional space corresponds to a state of $\mathcal{A}$.

    Next, we define $\mathcal{B}$. For ease of exposition, we assume that for each state $q$ of $\mathcal{A}$, the projection matrix is $P_q = \sum_{j=1}^{i}\ket{j}\bra{j}$ for some $0 \leq i \leq n$. Let $\mathcal{B}$ be an $(m+2n)$-dimensional \qfa. Intuitively, for each $a \in \Sigma$, the unitary matrix of $\mathcal{B}$ acts on two main components: (1) the first $m$ dimensions, which act as the permutation matrix $U_a$ from $\mathcal{C}$; and (2) the remaining $2n$ dimensions, which manage the quantum state of $\mathcal{A}$. Although the quantum operations of $\mathcal{A}$ require only $n$ dimensions, we introduce $n$ additional dimensions to account for the fact that the projection matrices $P_q$ may vary across different states $q$ in the \qfac $\mathcal{A}$. 
    Since \qfa $\mathcal{B}$ possesses only a single state, these $n$ extra dimensions (initially set to zero) allow us to normalize the projection. Specifically, if the current classical state is $q$ and the rank of the projection $P_q$ is $k$, then the entries from the $(m+k+1)$-th to the $(m+n)$-th positions are swapped with the $(n-k)$ zero-valued entries in the final $n$ dimensions. This is achieved using a sequence of controlled-swap operations, which can be composed from basic quantum operations. A similar process must be performed before the quantum operation corresponding to each symbol $a$. Accordingly, by setting $P = \sum_{j=m+1}^{m+n}\ket{j}\bra{j}$ as the projection matrix of $\mathcal{B}$, we ensure that $\val_\mathcal{B}(w) = \val_\mathcal{A}(w)$ for all $w \in \Sigma^*$. 
    \qed
\end{proof}

\section{Missing Proofs of \Cref{sec:decidability_complexity}}

\subsection{Missing Proofs of \Cref{sec:deci_qfac}}\label{app:deci_qfac}

\lemNFAQFAintersect*

\begin{proof}
    We construct the \qfa ${\cal A}_{q}$ and an NFA $\mathcal{B}_{q}$ as follows: let ${\cal A}_{q} = (\Sigma,\ket{\psi_{0}}, (U_{a})_{a\in\Sigma}, P_{q})$ and ${\cal B}_{q} = (\Sigma, Q, q_{0},\Delta, F = \{q\})$. Each component of ${\cal A}_{q}$ and ${\cal B}_{q}$ is a component from ${\cal A}$.

    We first show the only if direction. For any $w\in L_{>\lambda}({\cal A})[q]$, by definition, the following must holds:
    \begin{itemize}
        \item there exists a run $q_{0}a_{1}q_{1}\dots a_{n}q_{n}=q$ in ${\cal A}$, and
        \item $\norm{P_{q} U_{w}\ket{\psi_{0}}}>\lambda$.
    \end{itemize}
    By construction of ${\cal B}_{q}$, the first condition implies that $w$ is accepted by ${\cal B}_{q}$. Further, the second condition implies that $w$ is $(\bowtie,\lambda)$-accepted by ${\cal A}_{q}$. Therefore, $w\in L_{> \lambda}({\cal A}_{q})\cap L({\cal B}_{q})$.

    Now for the if direction. Suppose $w\in L_{> \lambda}({\cal A}_{q})\cap L({\cal B}_{q})$. By construction of ${\cal A}_{q}$, we have $\norm{P_{q} U_{w}\ket{\psi_{0}}}>\lambda$. On the other hand, by construction of ${\cal B}_{q}$ $w$ induces a run $q_{0}a_{1}q_{1}\dots a_{n}q_{n}$ with $q_{n} = q$ such that $(q_{k-1}, a_{k}, q_{k})\in\Delta$ for any $k=1,\dots, n$. Therefore, combining both, we have $w\in L_{>\lambda}({\cal A})[q]$.\qed
\end{proof}

\thmEmptDeci*

\begin{proof}

Let ${\cal A} = (\Sigma, Q, q_{0}, \Delta, (P_{q})_{q\in Q}, \ket{\psi_{0}})$. We first show that the decomposition in~\cref{eq:decomposition_gt} is correct. By definition it is clear that $L_{>\lambda}({\cal A})\subseteq \bigcup_{q\in Q} L_{>\lambda}({\cal A})[q]$. For the other direction, for any $w\in \bigcup_{q\in Q} L_{>\lambda}({\cal A})[q]$, then there exists some run $\pi$ such that $\pi$ ends in the state $q$ and $\val_{\cal A}(w,\pi,\ket{\psi_{0}})>\lambda$. Then we must have $\val_{\cal A}(w)=\max_{\pi\in Path_{\cal A}(w)}\val_{\cal A}(w,\pi,\ket{\psi_{0}})\geq \val_{\cal A}(w,\pi,\ket{\psi_{0}}) >\lambda$, hence $w\in L_{>\lambda}({\cal A})$. Therefore, we have 
$$L_{>\lambda}({\cal A}) = \bigcup_{q\in Q} L_{>\lambda}({\cal A})[q].$$

By~\cref{lem:NFA_QFA_Intersection}, for any $q\in Q$, there exist a \qfa ${\cal A}_{q}$ and an NFA ${\cal B}_{q}$ such that $L_{>\lambda}({\cal A})[q] = L_{> \lambda}({\cal A}_{q})\cap L({\cal B}_{q})$. Therefore, the problem can be further reduced to checking the emptiness problem of the intersection of a \qfa and an NFA. 

Any NFA can be represented as a regular language, which is linear context-free~\cite{Chomsky:58:IC}. Here we sketch a proof for self-containedness. Let ${\cal B} = (\Sigma,Q,q_{0},\Delta,F)$ be an NFA. We construct a context-free grammar ${\cal G}$ as follows: let $Q$ be the set of non-terminals, $\Sigma$ be the set of terminals, $q_{0}\in Q$ be the starting symbol and the production rules are defined as follows:
\begin{itemize}
    \item for each $(q,a,q')\in\Delta$, we define a production rule $q\to aq'$,
    \item and for each $q\in F$, we define a production rule $q\to\epsilon$.
\end{itemize}
It is straightforward to verify that ${\cal G}$ is linear and generates the language $L({\cal B})$. 

Therefore, by applying the above construction, we can obtain a context-free grammar ${\cal G}_{q}$ such that $L({\cal B}_{q}) = L({\cal G}_{q})$. By result of~\cite{Bertoni:13:DLT}, the emptiness problem of $L_{> \lambda}({\cal A}_{q})\cap L({\cal G}_{q})$ is decidable for any $q\in Q$. Finally, since $Q$ is finite, we can always check whether $L_{>\lambda}({\cal A})[q]$ is empty for each $q\in Q$ and hence decide whether $L_{>\lambda}({\cal A})$ is empty or not. \qed
\end{proof}

\IntersecMultiQFA*

\begin{proof}
    Let $(U_{a})_{a\in\Sigma}$ be the set of transition operators for ${\cal A}_{i}$, $\ket{\psi_{0}}$ be the intial state and $P_{i}$ be the projection operator for the $i$-th \qfa. Further we denote $\Gamma({\cal G})$ the set of matrices $U_{w}$ induced by a word $w\in L({\cal G})$. Let $f_{i}$ denotes the function $M\mapsto \norm{P_{i}M\ket{\psi_{o}}}$ for $i=1,\dots, K$, then the set $\bigcap_{i=1}^{K}L_{<\lambda}({\cal A}_{i})\cap L({\cal G})$ is empty if and only if the following formula is satisfiable:
    $$\forall M\in \mathbb{R}^{n\times n},\,M\in\Gamma({\cal G})\implies \bigvee_{i=1}^{K}f_{i}(M)\geq\lambda.$$
    The above formula is also equivalent to deciding if $\Gamma({\cal G})\subseteq \bigcup_{i=1}^{K} f_{i}^{-1}([\lambda,1])$ where $f^{-1}_{i}$ denotes the pre-image of $f_{i}$. Since for any $i$, $f_{i}$ is a continuous function and $[\lambda,1]$ is a closed set for any $\lambda\in [0,1]$, thus $\bigcup_{i=1}^{K}f_{i}^{-1}([\lambda,1])$ is also closed. Therefore, we have $\Gamma({\cal G})\subseteq \bigcap_{i=1}^{K} f_{i}^{-1}([\lambda,1])$ if and only if $\Bar{\Gamma({\cal G})}\subseteq \bigcap_{i=1}^{K} f_{i}^{-1}([\lambda,1])$ where $\overline{\Gamma({\cal G})}$ denotes the topological closure of $\Gamma({\cal G})$. As a result, it suffices to check the satisfiability of the following formula:
    $$\forall M\in \mathbb{R}^{n\times n},\,M\in\Bar{\Gamma({\cal G})}\implies \bigvee_{i=1}^{K}f_{i}(M)\geq\lambda.$$
    By result in~\cite{Bertoni:13:DLT}, $\Bar{\Gamma({\cal G})}$ is an effectively eventually definable semialgebraic set. Therefore, the satisfiability of the above formula is decidable using the same construction as in~\cite{Bertoni:13:DLT,Blondel:05:SIAM}.\qed
\end{proof}

\thmEmptDeciFull*

\begin{proof}
    Since every \qfa is a \qfac, the undecidable result for $\bowtie\in\{\geq,\leq\}$ is trivial following from the result of~\cite{Blondel:05:SIAM}. The result for $\bowtie = >$ is also shown in~\cref{thm:decidability_emptiness}. It remains for us to show the decidable result for $\bowtie = <$.

    We shall first show that the decomposition in~\cref{eq:decomposition_lt} is correct. For any subset $S\subseteq Q$, let ${\cal B}[S]$ denotes the DFA obtained by determinizing the classical part of ${\cal A}$ with $S$ as the unique accepting state, and let ${\cal A}_{q}$ be the same \qfa defined in~\Cref{lem:NFA_QFA_Intersection}. We shall show that 
    $$L_{<\lambda}({\cal A}) = \bigcup_{S\subseteq Q, S\neq \emptyset} \left(\bigcap_{q\in S} L_{<\lambda}({\cal A}_{q})\cap L({\cal B}[S])\right).$$
    For the $\subseteq$ direction, suppose that $w\in L_{<\lambda}({\cal A})$, then for any run $\pi\in Path_{\cal A}(w)$, we have $\val_{\cal A}(w,\pi,\ket{\psi_{0}})<\lambda$. Let $S = \{q\in Q\mid \text{there exists a run $\pi$ of $w$ that ends at $q$}\}$, then by definition of ${\cal B}[S]$, we have $w\in L({\cal B}[S])$. Further, we know that for any $q\in S$ and run $\pi$ of $w$ that ends at $q$, we have $\val_{\cal A}(w,\pi,\ket{\psi_{0}}) < \lambda$ which implies $w\in L_{<\lambda}({\cal A}_{q})$. Therefore, we have $w\in \bigcap_{q\in S} L_{<\lambda}({\cal A}_{q})\cap L({\cal B}[S])$.

    For the $\supseteq$ direction, suppose that $w = a_{1}\dots a_{n}\in \bigcap_{q\in S} L_{<\lambda}({\cal A}_{q})\cap L({\cal B}[S])$ for some $S\subseteq Q$. For any run $\pi = q_{0}a_{1}q_{1}\dots a_{n}q_{n}$ of $w$ on ${\cal A}$, by definition of ${\cal B}[S]$, we have $q_{n}\in S$. Therefore, by the assumption that $w\in L_{<\lambda}({\cal A}_{q})$ for any $q\in S$, we have $\val_{\cal A}(w,\pi,\ket{\psi_{0}}) = \norm{P_{q_{n}}U_{w}\ket{\psi_{0}}} < \lambda$. Since it holds for arbitrary run induced by $w$, we also have $\val_{\cal A}(w) = \max_{\pi\in Path_{\cal A}(w)}\val_{\cal A}(w,\pi,\ket{\psi_{0}}) < \lambda$ and hence $w\in L_{<\lambda}({\cal A})$.
    
    Therefore,~\cref{eq:decomposition_lt} holds. Now for any $S\subseteq Q$, by~\cref{thm:intersection_multi_qfa}, the emptiness problem of the language $\bigcap_{q\in S} L_{<\lambda}({\cal A}_{q})\cap L({\cal B}[S])$ is decidable. Since $Q$ is finite, we have only finitely many non-empty subset $S$ of $Q$. Therefore, by enumerating all such subset $S$ and checking the emptiness of $\bigcap_{q\in S} L_{<\lambda}({\cal A}_{q})\cap L({\cal B}[S])$ for each $S$, we can decide whether $L_{<\lambda}({\cal A})$ is empty or not. \qed
\end{proof}

\subsection{Missing Proofs of \Cref{sec:deci_flat}}\label{app:flat}

\lemFlattoChain*
\begin{proof}
    Let $S_1, \dots, S_n$ be the strongly connected components (SCCs) of the underlying graph of $\mathcal{A}$. For any two SCCs $S_i$ and $S_j$, we say $S_i$ is an \emph{ancestor} of $S_j$ if there exists a path from $S_i$ to $S_j$. We say $S_j$ is a \emph{child} of $S_i$ if $S_i$ is an ancestor of $S_j$ and every ancestor of $S_j$ (other than $S_j$ itself) is also an ancestor of $S_i$.

    Suppose $S_j$ and $S_k$ are two distinct children of $S_i$. Let $\mathcal{A}^{\textit{delete}}_{S_j}$ and $\mathcal{A}^{\textit{delete}}_{S_k}$ be the \qfac{s} obtained from $\mathcal{A}$ by removing $S_j$ and $S_k$, respectively. By this construction, it follows that $L_{\bowtie \lambda}(\mathcal{A}) = L_{\bowtie \lambda}(\mathcal{A}^{\textit{delete}}_{S_j}) \cup L_{\bowtie \lambda}(\mathcal{A}^{\textit{delete}}_{S_k})$ for every $\bowtie \in \cmp$ and $\lambda \in [0,1]$. By repeatedly applying this process, we can derive that $L_{\bowtie \lambda}(\mathcal{A}) = \bigcup_{i=1}^{m}L_{\bowtie \lambda}(\mathcal{A}_i)$ for some chain \qfac{s} $\mathcal{A}_1, \dots, \mathcal{A}_m$.
    \qed
\end{proof}
\subsection{Missing Proofs of \Cref{sec:complexity}}\label{app:complexity}

\begin{restatable}{theorem}{thmUniPspace}\label{thm:uni_pspace}
    The universality problem for \qfac{s} is \PSPACE-hard, even when the dimension is 2. 
\end{restatable}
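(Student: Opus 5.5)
We reduce from the union universality problem for DFAs, which is \PSPACE-complete: given DFAs $\mathcal{A}_1,\dots,\mathcal{A}_k$ over $\Sigma$, decide whether $\bigcup_{i=1}^k L(\mathcal{A}_i)=\Sigma^*$. This follows from \PSPACE-completeness of DFA intersection non-emptiness \cite{Kozen:77:SFCS} by complementing each DFA, which only swaps accepting and rejecting states.

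First we build, in polynomial time, an NFA $\mathcal{N}$ for the union. Take the disjoint union of the $\mathcal{A}_i$ and add a fresh initial state $s$. For each $a\in\Sigma$ and each $i$, put a transition $(s,a,\delta_i(q^i_0,a))$. The state $s$ is accepting iff some $q^i_0$ is accepting, so that $\varepsilon$ is handled correctly. Each $\mathcal{A}_i$ is complete, hence $\mathcal{N}$ is complete, and $L(\mathcal{N})=\bigcup_i L(\mathcal{A}_i)$.

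Next we turn $\mathcal{N}$ into a $2$-dimensional \qfac $\mathcal{B}$ with the same underlying automaton. Every $U_a$ is the $2\times 2$ identity, so the quantum state stays at $\ket{\psi_0}$ along every run. Each state $q$ gets one of two projections, $P_{\mathrm{acc}}$ or $P_{\mathrm{rej}}$, according to whether $q$ is accepting in $\mathcal{N}$. We choose them together with $\ket{\psi_0}$ so that $\val(w,\pi,\ket{\psi_0})$ equals a value $\alpha$ on accepting runs and $\beta$ on rejecting runs. Since $\val_{\mathcal{B}}(w)$ is the maximum over runs, it is $\alpha$ if some run accepts, that is if $w\in L(\mathcal{N})$, and $\beta$ otherwise, provided $\alpha>\beta$. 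Completeness of $\mathcal{N}$ guarantees that every word has at least one run.

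It then remains to pick $\alpha>\beta$ with $\alpha\bowtie\lambda$ and $\neg(\beta\bowtie\lambda)$ for every non-trivial $(\bowtie,\lambda)$. This gives $L_{\bowtie\lambda}(\mathcal{B})=L(\mathcal{N})$, so universality of $\mathcal{B}$ coincides with union universality.

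\begin{itemize}
\item For $\bowtie\in\{>,\geq\}$, take $\ket{\psi_0}=(1,0)^\top$, $P_{\mathrm{acc}}=\ket{1}\bra{1}$ and $P_{\mathrm{rej}}=0$. This gives $\alpha=1$ and $\beta=0$, which works for every non-trivial pair: $\lambda<1$ for $>$ and $\lambda>0$ for $\geq$.
\item For $\bowtie\in\{<,\leq\}$, maximum semantics favours the accepting runs, so the encoding must be reversed. We apply the construction to the complements instead. Let $\mathcal{N}'$ recognize $\bigcup_i \overline{L(\mathcal{A}_i)}$, give accepting states of $\mathcal{N}'$ the projection with $\alpha=1$ and the others $0$, and use \Cref{lem:language_symmetric}. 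Then $L_{\bowtie\lambda}(\mathcal{B})=\Sigma^*\setminus L_{\Bar{\bowtie}\lambda}(\mathcal{B})=\Sigma^*\setminus\bigcup_i\overline{L(\mathcal{A}_i)}=\bigcap_i L(\mathcal{A}_i)$.
\end{itemize}

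For the second case the statement to check becomes universality of the intersection $\bigcap_i L(\mathcal{A}_i)$. That problem is trivial, so the reduction has to be rerouted. Instead we reduce intersection non-emptiness to emptiness of $L_{\Bar{\bowtie}\lambda}$, where $\Bar{\bowtie}\in\{>,\geq\}$ is already handled. By the duality $L_{\bowtie\lambda}=\emptyset\iff L_{\Bar{\bowtie}\lambda}=\Sigma^*$, this yields \PSPACE-hardness of universality for $\bowtie\in\{<,\leq\}$.

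The main obstacle is exactly this asymmetry of the max semantics for $<$ and $\leq$, and getting the complementation bookkeeping right in every non-trivial case. All the quantum data stays $2$-dimensional, and the reduction is clearly polynomial.
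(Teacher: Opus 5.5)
Your case $\bowtie\in\{>,\geq\}$ is correct and is essentially the paper's proof. Like the paper, you reduce DFA union universality to universality of a 2-dimensional QFAC with all $U_a=I$, a fresh initial state, and a projection that is non-null exactly on accepting states. The paper fixes $\ket{\psi_0}=(\sqrt{2/3},\sqrt{1/3})^\top$ and the single pair $(\geq,\tfrac12)$. Your choice $\alpha=1$, $\beta=0$ covers every non-trivial $\lambda$ for $>$ and $\geq$. Your wiring $(s,a,\delta_i(q^i_0,a))$, with $s$ accepting iff some $q^i_0$ is, is the right way to attach the fresh state; the paper's version sends the new state directly to the $q^{[i]}_0$, which shifts every word by one letter. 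Since $(\bowtie,\lambda)$ is part of the input of the universality problem, this case alone already proves the theorem.

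The step that fails is your last paragraph on $\bowtie\in\{<,\leq\}$.
\begin{itemize}
\item Universality of $L_{\bowtie\lambda}$ is equivalent to \emph{emptiness} of $L_{\overline{\bowtie}\lambda}$ with $\overline{\bowtie}\in\{>,\geq\}$. What your construction handles for $\{>,\geq\}$ is \emph{universality}: it makes $L_{\overline{\bowtie}\lambda}(\mathcal{B})$ an NFA language, whose emptiness is in NL.
\item Reducing intersection non-emptiness to that emptiness problem would need a QFAC whose $\{>,\geq\}$-language is $\bigcap_i L(\mathcal{A}_i)$. Max-over-runs semantics with identity unitaries cannot express this without the exponential product automaton. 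Indeed, whenever all $U_a=I$, the acceptance value depends only on the final classical state. So $L_{>\lambda}$ and $L_{\geq\lambda}$ are NFA languages, $L_{<\lambda}$ and $L_{\leq\lambda}$ are their complements, and universality for $<$ and $\leq$ reduces to NFA emptiness.
\item The duality you quote, $L_{\bowtie\lambda}=\emptyset\iff L_{\overline{\bowtie}\lambda}=\Sigma^*$ with $\bowtie\in\{<,\leq\}$, relates emptiness for $\{<,\leq\}$ to universality for $\{>,\geq\}$. Applied to what you actually proved, it only returns your first case, so the rerouting is circular.
\end{itemize}
You should drop the claim for $<$ and $\leq$; the paper's proof likewise treats only $(\geq,\tfrac12)$. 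Getting PSPACE-hardness for those operators would require exploiting non-trivial unitaries, and nothing in your proposal does that.
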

\begin{proof}
Let ${\cal A}_1,\dots, {\cal A}_k$ be $k$ DFAs over $\Sigma$, where ${\cal A}_i = (Q_i,q^{[i]}_0,F_i,\delta_i)$ for each $i=1,\dots,k$. We define a 2-dimensional \qfac $\mathcal{A} = (\Sigma, Q, q^{[0]}_0, \psi_0, (U_{a})_{a\in\Sigma}, (P_q)_{q\in Q}, \Delta)$, where:
\begin{enumerate}[$\bullet$]
    \item $Q = \{q_0^{[0]}\} \cup \bigcup_{i=1}^{k}Q_i$.
    \item $\psi_0=(\sqrt{\frac{2}{3}},\sqrt{\frac{1}{3}})$.
    \item $U_a = I$ for all $a \in \Sigma$.
    \item $P_q$ projects a vector $(x,y)^T$ to the $x$-axis if $q \in \bigcup_{i=1}^k F_i$; otherwise, $P_q$ is the zero matrix for each $q\in Q$.
    \item For each $a\in\Sigma$, the transition relation is defined as follows:
        \begin{enumerate}[$\bullet$]
            \item $(q_0,a,q)\in\Delta$ for every $q\in\{q^{[1]}_0,\dots,q^{[k]}_0\}$.
            \item $(p,a,q)\in\Delta$ if $(p,a,q)\in \delta_i$ for some $i=1,\dots,k$.
        \end{enumerate}
\end{enumerate}
By letting $\lambda=\frac{1}{2}$, we have that the union of ${\cal A}_1,\dots, {\cal A}_k$ is universal if and only if for every word $w \in \Sigma^*$, there exists an accepting run of $\mathcal{A}$ from $(q_0,\psi_0)$ with accepting probability $\geq \lambda$.
\qed
\end{proof}

\thmEmpNp
\begin{proof}
    To derive the statement, we reduce the Hamiltonian path problem to the emptiness problem of \qfac{s}. Let $G=(V,E)$ be a directed graph, where $V={v_0,\dots,v_m}$ for some $m\geq 1$. We construct a 2-dimensional \qfac $\mathcal{A}$ and choose $0<\lambda<1$ such that there exists a Hamiltonian path from $v_0$ to $v_m$ in $G$ if and only if there exists a word $w$ with $\val_{\mathcal{A}}(w) > \lambda$.

Let $\theta_0,\dots, \theta_m$ be a strictly increasing sequence such that $\ang{0} < \theta_i < \ang{90}$ for each $i$ and $\sum_{i=0}^m\theta_i=\ang{90}$. For instance, we can let $\theta_i=(\frac{180\cdot (i+1)}{m(m+2)})^\circ$ for $i=0,\dots,m$. 

Now we construct $\mathcal{A} = (\Sigma, Q, q^{[0]}_0, \ket{\psi_0}, (U_{a})_{a\in\Sigma}, (P_q)_{q\in Q}, \Delta)$, where:
\begin{enumerate}[$\bullet$]
\item $Q=\bigcup_{j=0}^m\{q^{[j]}_0,\dots, q^{[j]}_m\}$.
\item $\ket{\psi_0}=(1,0)^T$.
\item $\Sigma = \{\theta_0,\dots,\theta_m\}$, and for each $j$, $U_{\theta_j}$ is the unitary matrix corresponding to a rotation by angle $\theta_j$ in the 2-dimensional space.
\item $(q^{[j]}_s,\theta_t,q^{[j+1]}_t)$ is a transition in $\Delta$ if $(v_s,v_t)\in E$, for each $j=0,\dots,m-1$.
\item $P_q$ is the matrix
$\begin{pmatrix}
0 & 0\\
0 & 1
\end{pmatrix}$
if $q=q^{[m]}_m$; otherwise, $P_q$ is the null matrix for each $q\in Q$.
\end{enumerate}

By the construction of $\mathcal{A}$, if $\pi$ is a run from $(q^{[0]}_0,\ket{\psi_0})$ to $(q^{[m]}{m},\ket{\psi})$ for some $\psi$, then the length of $\pi$ must be $m$. Moreover, $\psi = (0,1)$ if and only if along $\pi$ each symbol $\theta_1,\dots,\theta_m$ appears exactly once; otherwise, $\psi=(\cos{\theta},\sin{\theta})$ for some $\theta < \ang{90} - \frac{\theta_0}{2}$ or $\theta > \ang{90} - \frac{\theta_0}{2}$.

The condition that each symbol $\theta_1,\dots,\theta_m$ appears exactly once along $\pi$ implies that $G$ has a path such that, excluding the starting node $v_0$, every node in ${v_1,\dots,v_m}$ is visited exactly once. By letting $\lambda = \frac{\sin{\theta_0}}{2}$, we have $\val_{\mathcal{A}}(w) > \lambda$ for some $w\in\Sigma^*$ if and only if $G$ contains a Hamiltonian path from $v_0$ to $v_m$. The cases for $\bowtie \in \{<,\geq,\leq, =\}$ can be derived similarly.
\qed
\end{proof}

\section{Missing Proofs for~\Cref{subsec:nip_encoding,subsec:relaxation}}

\CosLinearForm*

\begin{proof}
    By the Eigen-decomposition of $U$, we can write $U = V D V^{\dagger}$, where $V$ is a unitary matrix and $D$ is a diagonal matrix with diagonal entries being the eigenvalues of $U$. Further since $U$ is unitary, we can write $D = \mathrm{diag}(e^{i\theta_{1}},\ldots,e^{i\theta_{n}})$ for some $\theta_{1},\ldots,\theta_{n}\in\mathbb{R}$. Then we have
    \begin{align*}
        \norm{PU^{t}\ket{\psi}} &= \bra{\psi}(U^t)^{\dagger}P^{\dagger}PU^{t}\ket{\psi}\\
         &=\bra{\psi} V (D^{t})^{\dagger} V^{\dagger} P V D^{t} V^{\dagger}\ket{\psi}\\
         & \omit\hfill ($P^{\dagger}=P, PP = P$)\\
        \intertext{Let $M = V^{\dagger}PV$,$\ket{x}=V^{\dagger}\ket{\psi}$,}
         &=\bra{x}(D^{\dagger})^{t} M D^{t}\ket{x}\\
         &=\sum_{i,j\in [n]\times [n]} (\overline{x_{i}}e^{-{\bf i}\theta_{i}t}) M_{ij} (x_{j}e^{{\bf i}\theta_{j}t})\\
         &=\sum_{k\in [n]}\abs{x_{k}}^{2}M_{kk} + \sum_{1\leq i < j\leq n}\bigg(\overline{x_{i}}x_{j}M_{ij} e^{{\bf i}(\theta_{j}-\theta_{i})t} + \overline{x_{j}} x_{i} M_{ji} e^{{\bf i}(\theta_{i}-\theta_{j})t}\bigg)\\
         &=\sum_{k\in [n]}\abs{x_{k}}^{2}M_{kk} + \sum_{1\leq i < j\leq n}2\text{Re}\bigg(\overline{x_{i}}x_{j}M_{ij} e^{{\bf i}(\theta_{j}-\theta_{i})t}\bigg)\\
         &\omit\hfill ($M_{ji} = \overline{M_{ij}}$)\\
        \intertext{Let $C = \sum_{k\in [n]}\abs{x_{k}}^{2}M_{kk}$ and $\overline{x_{i}}M_{ij}x_{j} = \abs{\overline{x_{i}}M_{ij}x_{j}}e^{{\bf i}\phi_{ij}}$,}
         &= C + \sum_{1\leq i < j\leq n}2\abs{\overline{x_{i}}M_{ij}x_{j}}\cos((\theta_{j}-\theta_{i})t+\phi_{ij}).
    \end{align*}

    Hence we have $C=\sum_{k\in [n]}\abs{x_{k}}^{2}M_{kk}$, $a_{ij} = 2\abs{\overline{x_{i}}M_{ij}x_{j}}$, $(\theta_{i})_{i}$ are the phase of eigenvalue of $U$ and $(\phi_{ij})_{ij}$ are the phase of $\overline{x_{i}}M_{ij}x_{j}$ for each $1\leq i < j\leq n$. To effectively compute these parameters, it is obvious that the time complexity is dominiated by computing the eigen-decomposition of $U$, which can be done in time $O(n^{3})$ by the standard algorithm. \qed
\end{proof}

\LowerAndUpperBound*

\begin{proof}
    The proof can be found in most of the standard calculus textbooks, e.g. \cite{courant1965introduction}. Here we only give a sketch of the inequality part for self-containedness. Note that $\cos x, L_{m}(x)$ and $U_{m}(x)$ for any $m\in\mathbb{N}$ are even functions, therefore it suffices for us to consider only the case of $x\geq 0$. The inequality can be proven inductively on $m$. For the base case of $m=0$, starting with the well known inequality of $\sin x\leq x$ for $x\geq 0$, integrating both side from $0$ to $x$ gives us $1-\cos x\leq \frac{x^{2}}{2}$, further rearranging it gives $\cos x\geq 1-\frac{x^{2}}{2}$, which proven the lower bound part for $m=1$. Now starting from this lower bound and integrating both side twice, we have $1-\cos x \geq \frac{x^{2}}{2}-\frac{x^{4}}{4!}$, rearranging it gives $\cos x\leq 1-\frac{x^{2}}{2}+\frac{x^{4}}{4!}$. This completes the proof for the base case. The inductive step can be proven in the same way, starting from the upper bound of inductive hypothesis and integrating both side twice to get the lower bound of $m+1$ case, and integrating the lower bound twice to get the upper bound of $m+1$ case. \qed
\end{proof}

\soundness*

\begin{proof}
    It is a direct consequence of~\cref{prop:cos_linear_form,prop:cosine}. We show it for the case of $\bowtie$ being $>$, the case of $\bowtie$ being $<$ can be proven in the same way. Let 
    $$t_{\ast} = \argmax_{t\in\mathbb{N}}C + \sum_{1\leq i < j\leq n}a_{ij}L_{m}((\theta_{j}-\theta_{i})t+\phi_{ij}),$$
    then by the assumption we have
    $$C + \sum_{1\leq i < j\leq n}a_{ij}L_{m}((\theta_{j}-\theta_{i})t_{\ast}+\phi_{ij}) \bowtie \lambda^{2}.$$
    Further by~\cref{prop:cos_linear_form}, we know that $a_{ij}\geq 0$ for all $1\leq i < j\leq n$, hence replacing $L_{m}$ by $\cos$ preservse the ordering. Since we know that $\bowtie$ is either $>$ or $\geq$, therefore by~\cref{prop:cos_linear_form,prop:cosine} we have
    \begin{align*}
        \norm{P U^{t}\ket{\psi}} &= C + \sum_{1\leq i < j\leq n}a_{ij}\cos((\theta_{j}-\theta_{i})t_{\ast}+\phi_{ij})\\
        &\geq C + \sum_{1\leq i < j\leq n}a_{ij}L_{m}((\theta_{j}-\theta_{i})t_{\ast}+\phi_{ij}) \bowtie\lambda^{2}.
    \end{align*}\qed
\end{proof}

\completeness*

\begin{proof}
    We only show it for the case of $\bowtie\in\{>,\geq\}$, the case of $\bowtie\in\{<,\leq\}$ can be proven in the same way. By Lagrange remainder theorem, for any $x\in\mathbb{R}$ we have
    $$\cos x = L_{m}(x) + \frac{(\cos \alpha)x^{4m}}{(4m)!},$$
    for some $\alpha\in [0,x]$. Therefore we have 
    $$L_{m}(x)\geq \cos x - \frac{x^{4m}}{(4m)!}.$$
    Now assume for some $t\in\mathbb{N}$ and $\varepsilon>0$, we have
    $$\norm{P U^{t}\ket{\psi}} = C + \sum_{1\leq i < j\leq n}a_{ij}\cos((\theta_{j}-\theta_{i})t_{\ast}+\phi_{ij}) > \lambda^{2}+\varepsilon.$$
    Then for each $1\leq i < j\leq n$, let $\tau_{ij} = (\theta_{j}-\theta_{i})t+\phi_{ij}$. We define $M_{ij}$ to be the smallest natural number such that $\forall m\geq M_{ij}$, we have $a_{ij}\frac{\tau_{ij}^{4m}}{(4m)!}\leq \frac{\varepsilon}{n^{2}}.$ Such $M_{ij}$ must exist both $\lim_{m\to\infty}\frac{x^{4m}}{(4m)!} = 0$ for any $x\in\mathbb{R}$. Now let $M = \max_{1\leq i < j\leq n}M_{ij}$, then for any $m\geq M$, we have
    \begin{align*}
        C + \sum_{1\leq i < j\leq n}a_{ij}L_{m}((\theta_{j}-\theta_{i})t+\phi_{ij}) &\geq C + \sum_{1\leq i < j\leq n}\bigg(a_{ij}\cos(\tau_{ij}) - a_{ij}\frac{\tau^{4m}}{(4m)!}\bigg)\\
        &\geq  C + \sum_{1\leq i < j\leq n}a_{ij}\cos(\tau_{ij}) - \sum_{1\leq i < j\leq n}\frac{\varepsilon}{n^{2}}\\
        & = \norm{P U^{t}\ket{\psi}} - \frac{n(n-1)}{n^{2}}\frac{\varepsilon}{2} > \lambda^{2}+\frac{\varepsilon}{2} > \lambda^{2}.
    \end{align*}\qed
\end{proof}

\uniformcompleteness*

\begin{proof}
    Again we only show it for the case of $\bowtie\in\{>,\geq\}$, the case of $\bowtie\in\{<,\leq\}$ can be proven in the same way. Recall from the previous proof that by Lagrange remainder theorem, for any $x\in\mathbb{R}$ we have
    $$L_{m}(x)\geq \cos x - \frac{x^{4m}}{(4m)!},$$
    therefore for $x\in [-\pi,\pi]$, we further have
    $$L_{m}(x)\geq \cos x - \frac{\pi^{4m}}{(4m)!}.$$
    Now for any $\varepsilon>0$ and $1\leq i< j\leq n$, we let $M_{ij}$ be the smallest natural number such that $\forall m\geq M_{ij}$, we have $a_{ij}\frac{\pi^{4m}}{(4m)!}\leq \frac{\varepsilon}{n^{2}}.$ Then we let $M = \max_{1\leq i < j\leq n}M_{ij}$. 
    
    Now for any $t\in\mathbb{N}$, let $\tau_{ij} = (\theta_{j}-\theta_{i})t+\phi_{ij}$ and $K_{ij} = -\floor{\frac{\tau_{ij}}{2\pi}}$, where $\floor{x}$ is the largest integer that is less than or equal to $x$. It is then straightfoward to verify that $\tau_{ij}+2K_{ij}\pi\in [0,2\pi)$. In the case that $\tau_{ij}+2K_{ij}\pi\in (\pi,2\pi)$, we let $K_{ij}' = K_{ij}-1$, then we have $\tau_{ij}+2K_{ij}'\pi = \tau_{ij}+2K_{ij}\pi -2\pi\in (-\pi,0)$. Therefore, in any case we can always find some $K_{ij}'\in\mathbb{Z}$ such that $\tau_{ij}+2K_{ij}'\pi\in [-\pi,\pi]$. 9 Further the choice of $K_{ij}$ must be unique since the interval $[-\pi,\pi]$ has length $2\pi$. 
    
    Finally, for any $m\geq M$ and any $t\in\mathbb{N}$, we have
    \begin{align*}
        &C + \sum_{1\leq i < j\leq n}a_{ij}L_{m}((\theta_{j}-\theta_{i})t+\phi_{ij}+2K_{ij}\pi) \\
        &\geq C + \sum_{1\leq i < j\leq n}\bigg(a_{ij}\cos((\theta_{j}-\theta_{i})t+\phi_{ij}) - a_{ij}\frac{\pi^{4m}}{(4m)!}\bigg)\\
        &\geq  C + \sum_{1\leq i < j\leq n}a_{ij}\cos((\theta_{j}-\theta_{i})t+\phi_{ij}) - \sum_{1\leq i < j\leq n}\frac{\varepsilon}{n^{2}}\\
        & = \norm{P U^{t}\ket{\psi}} - \frac{n(n-1)}{n^{2}}\frac{\varepsilon}{2} > \lambda^{2}+\frac{\varepsilon}{2} > \lambda^{2}.
    \end{align*}\qed
\end{proof}

\section{Implementation and Experiments Details}\label{sec:app_experiments}

\begin{algorithm}[H]\footnotesize
\caption{NIP-based bounded witness seaching}\label{alg:nip}
\DontPrintSemicolon
\SetKwFunction{ComputeCoeff}{ComputeCoeff}
\SetKwProg{Fn}{Function}{:}{}
\KwIn{A one-loop \qfac ${\cal A} = (U,P,{\psi_{0}})$, a pair $(\bowtie,\lambda)\in\cmp\times [0,1]$, and a maximum degree of approximation $\textsc{maxdeg}$.}
\KwOut{An integer $t\in\mathbb{N}$}
$C, a_{ij},\theta_{i},\phi_{ij}\leftarrow$ \ComputeCoeff{$U$,$P$,$\ket{\psi_{0}}$}\\
\uIf{$\bowtie\in\{>,\geq\}$}{
    $\opt\leftarrow\max$\\$(F_{m})_{m}\leftarrow (L_{m})_{m}$
}\Else{$\opt\leftarrow\min$\\$(F_{m})_{m}\leftarrow (U_{m})_{m}$}
\For{$m$ from $1$ to \textsc{maxdeg}}{
    (t,Val)$\longleftarrow$ $\left(\begin{aligned}
        \opt_{t\in \mathbb{N}, (K_{ij})_{i,j}\in\mathbb{Z}}&\quad C + \sum_{1\leq i < j\leq n}a_{ij}F_{m}((\theta_{j}-\theta_{i})t+\phi_{ij}+2K_{ij}\pi)\\
        \text{subject to}&\quad -\pi\leq (\theta_{j}-\theta_{i})t+\phi_{ij}+2K_{ij}\pi\leq \pi\quad\forall i,j.
    \end{aligned}\right)$\\
    \lIf{$\text{Val}\bowtie \lambda$}{
        \Return $t$
    }
}
\Return $\text{None}$\\

\Fn{\ComputeCoeff{$U$,$P$,$\ket{\psi_{0}}$}}{
    $n\leftarrow \text{dim}(P)$\\
    $D,V\leftarrow$ EigenDecomposition($U$)\\
    $M\leftarrow V^{\dagger}PV$\\
    $x\leftarrow V^{\dagger}\ket{\psi_{0}}$\\
    $C\leftarrow\sum_{i=1}^{n}a_{ii}\theta_{i}$\\
    $a_{ij}\leftarrow 2\abs{\Bar{x_{i}}M_{ij}x_{j}}$\\
    $\theta_{i}\leftarrow \text{phase}(D_{ii})$\\
    $\phi_{ij}\leftarrow \text{phase}(\Bar{x_{i}}M_{ij}x_{j})$\\
    \Return $C, a_{ij},\theta_{i},\phi_{ij}$
}
\end{algorithm}

\paragraph{Models Details.}

The \texttt{Simple-Rotation} is a simple model that consists of a rotational matrix on a 2D plane. The angles of rotation are fixed to be $\frac{\pi}{4}$, $\frac{\pi}{6}$ and $\frac{\pi}{8}$ in the experiment. Our algorithm checks whether there is a $t$ such that the probability of reaching $\ket{1}$ from $\ket{0}$ is greater than $0.99$.

The \texttt{Grover} is a model for the well-known Grover's algorithm for $k$ qubits, the oracle, i.e. the solution to be searched, is chosen randomly for each trial of run. Our algorithm checks whether there is a $t$ such that the success probability of Grover's algorithm is greater than $0.99$. 

The \texttt{Quantum-Walk} is a modified version of quantum random walk model presented in~\cite{Dai:24:CAV} with $k$ qubits plus one ancilla qubit for coin tossing. The original formulation allows walking in both direction while we restrict it to be one direction. Our algorithm checks if there is a $t$ such that the probability of reaching a certain state $\ket{x}$ in the first $k$ qubits is greater than $0.25$, where $x$ is randomly generated for each trial of run.


\end{document}